\newcommand{\nl}{\mathrm{nl}}
\newcommand{\wt}{\mathrm{wt}}
\newcommand{\tr}{\mathrm{tr}}
\newcommand{\solnum}{\mathrm{N}}
\newcommand{\solnumone}{\mathrm{N}_1}
\newcommand{\solnumtwo}{\mathrm{N}_2}
\newcommand{\numroots}{\mathrm{N}}
\newtheorem{lemma}{Lemma}
\newtheorem{theorem}{Theorem}
\newtheorem{claim}{Claim}
\newtheorem{corollary}{Corollary}
\newtheorem{definition}{Definition}
\newtheorem{proposition}{Proposition} 
\newtheorem{remark}{Remark}
\title{Trace Monomial Boolean Functions with Large High-Order Nonlinearities}
\author{
Jinjie Gao\footnote{School of Computer Science, Fudan University, Shanghai 200433, China. Email: jjgao18@fudan.edu.cn},
Haibin Kan\footnote{Shanghai Key Laboratory of Intelligent Information Processing, School of Computer Science, Fudan University, Shanghai 200433, China;
Shanghai Engineering Research Center of Blockchain, Shanghai 200433, China;
Yiwu Research Institute of Fudan University, Yiwu City 322000, China. Email: hbkan@fudan.edu.cn},
Yuan Li\footnote{School of Computer Science, Fudan University, Shanghai 200433, China. Email: yuan\_li@fudan.edu.cn},
Jiahua Xu \footnote{School of Computer Science, Fudan University, Shanghai 200433, China. Email: jiahuaxu21@m.fudan.edu.cn},
Qichun Wang\footnote{School of Computer Science and Technology, Nanjing Normal University, China. Email: qcwang@fudan.edu.cn}}
\begin{document}
\date{}
\maketitle

\begin{abstract}
Exhibiting an explicit Boolean function with a large high-order nonlinearity is an important problem in cryptography, coding theory, and computational complexity. We prove lower bounds on the second-order, third-order, and higher-order nonlinearities of some trace monomial Boolean functions.

We prove lower bounds on the second-order nonlinearities of functions $\mathrm{tr}_n(x^7)$ and $\mathrm{tr}_n(x^{2^r+3})$ where $n=2r$. Among all trace monomials, our bounds match the best second-order nonlinearity lower bounds by \cite{Car08} and \cite{YT20} for odd and even $n$ respectively. We prove a lower bound on the third-order nonlinearity for functions $\mathrm{tr}_n(x^{15})$, which is the best third-order nonlinearity lower bound. For any $r$, we prove that the $r$-th order nonlinearity of  $\mathrm{tr}_n(x^{2^{r+1}-1})$ is at least $2^{n-1}-2^{(1-2^{-r})n+\frac{r}{2^{r-1}}-1}- O(2^{\frac{n}{2}})$. For $r \ll \log_2 n$, this is the best lower bound among all explicit functions.
\end{abstract}

\par \textbf{Keywords}: high-order nonlinearity, trace monomial, lower bound, Boolean function, linear kernel

\section{Introduction}

Exhibiting an \emph{explicit} Boolean function with a large \emph{high-order nonlinearity} is an important task in areas including cryptography, coding theory, and computational complexity. In cryptography, a high nonlinearity is an important cryptographic criterion for Boolean functions used in symmetric-key cryptosystems to resist correlation attacks \cite{Car21}. In coding theory, the largest $r$-th order nonlinearity among all $n$-variable Boolean functions is exactly the covering radius of Reed-Muller codes $\mathrm{RM}(r, n)$; computing (high-order) nonlinearity is related to the problem of decoding Reed-Muller codes. In computational complexity, one \emph{must} prove large enough nonlinearity lower bound (for a function in NP) to prove that NP does not have circuits of quasi-polynomial size \cite{GHR92, Vio22}. In addition, this problem is related to pseudorandom generators, communication complexity, and circuit complexity; we send interested readers to the survey by Viola  \cite{Vio22}.


Known techniques for proving nonlinearity lower bound include Hilbert function \cite{Raz87, Smo87}, the ``squaring trick'' \cite{BNS92, Gow98, Gow01, Car08}, XOR lemmas \cite{Bou05, GRS05, Vio06, VW08, CHHLZ20, Chen21}, invariant theory \cite{DV22}, symmetrization \cite{IPV23}, etc. In this work, we follow the ``squaring trick'' methods by Carlet \cite{Car08} to prove nonlinearity lower bounds for \emph{trace monomial} functions. Trace monomials are good candidates to study, both experimentally and theoretically.

Carlet \cite{Car08} proposed a method to lower bound the $r$-th order nonlinearity by estimating the minimum $(r-1)$-th order nonlinearities for all its derivatives; applying this for $r-1$ times, the $r$-th order nonlinearity can be lower bounded by the minimum (first-order) nonlinearity for all its $(r-1)$-th order derivatives.
Canteaut \emph{et al.} \cite{CCK08} provided a method to determine the Walsh spectrum (and thus the nonlinearity) of any \emph{quadratic} function by the dimension of its \emph{linear kernel}. In this way, the problem of lowering bound nonlinearity essentially reduces to the problem of estimating the number of roots of certain equations over finite fields.
Along this line, nonlinearity lower bounds for trace monomial Boolean functions are proved in \cite{GG09, SG09, SW09, GG10, GST10, Car11, GG11b, LHG11, SW11,TCT13, Sin14, GT18, MKJ20, TYZZ20,  Liu21,  SG22, SG23, TS23}. We summarize the second-order nonlinearity lower bounds in Table \ref{funcs_second}.

\begin{center}
\centering
\label{funcs_second}
 \renewcommand\arraystretch{2}  
 \begin{longtable}{c|c}
  \caption{Second-order nonlinearity lower bounds}\\
\hline
 \textbf{Function} & \textbf{$\nl_2$ lower bound}  \\ \hline

 \thead{$\tr_n(\mu x^{2^t-1}+g(x))$, $\mu \in \mathbb{F}_{2^n}^*$,\\ $t\le n$ \cite{Car08}$^{\rm a}$} & 
$
\begin{array}{l l}{{}}&{{2^{n-1}-\frac{1}{2}\sqrt{(2^{n}-1)(2^t-4)2^{\frac{n}{2}}+2^n}}}\\ {{\ge}}&{{2^{n-1}-2^{\frac{3n}{4}+\frac{t}{2}-1}-O(2^{\frac{n}{4}})}}\end{array} 
$
   \\ \hline
 
 $\tr_n(x^{2^r+3}),n=2r+1$ \cite{Car08} & 
$
\begin{array}{l l}{{}}&{{2^{n-1}-\frac{1}{2}\sqrt{(2^{n}-1)2^{\frac{n+5}{2}}+2^{n}}}}\\ {{=}}&{{2^{n-1}-2^{\frac{3n+1}{4}}-O(2^{\frac{n}{4}})}}\end{array} 
$
   \\ \hline
 
 $\tr_n(x^{2^r+3}),n=2r-1$ \cite{Car08}& 
    
$ 
\begin{aligned}
&
\begin{cases}
2^{n-1}-{\frac{1}{2}}\sqrt{2^{\frac{3n+1}{2}}+2^{\frac{3n-1}{2}}+2^{n}-2^{\frac{n+3}{2}}}, &{\text{if}\ 3\nmid n} 
\\ 2^{n-1}-\frac{1}{2}\sqrt{3\cdot 2^{\frac{3n-1}{2}}+2^n+3\cdot 2^{n+\frac{1}{2}}-2^{\frac{n+3}{2}}}, &{\text{if}\ 3\mid n}
\end{cases}\\
= & \quad 2^{n-1}-2^{\frac{3n-5}{4}+\frac{1}{2}\log_2 3}-O(2^{\frac{n}{4}})
\end{aligned}
$ 

 \\ \hline

 $\tr_n(x^{2^n-2})$ \cite{Car08}& 
$
\begin{array}{l l}{{}}&{2^{n-1}-\frac{1}{2}\sqrt{(2^{n}-1)2^{\frac{n}{2}+2}+3\cdot2^{n}}}\\ {{=}}&{{2^{n-1}-2^{\frac{3n}{4}}-O(2^{\frac{n}{4}})}}\end{array} 
$
 \\ \hline

 \thead{$\tr_{\frac{n}{2}}(xy^{2^n-2})$\\ $x,y\in \mathbb{F}_{\frac{n}{2}}$, \text{if n is even} \cite{Car09}}& 
$
\begin{array}{l l}{{}}&{2^{n-1}-\frac{1}{2}\sqrt{2^n+(2^{n+2}+2^{\frac{3n}{4}+1}+2^{\frac{n}{2}+1})(2^{\frac{n}{2}}-1)}}\\ {{=}}&{{2^{n-1}-2^{\frac{3n}{4}}-O(2^{\frac{n}{2}})}}\end{array} 
$
 \\ \hline

 $\tr_n(\mu x^{2^{i}+2^j+1}), \mu \in \mathbb{F}_{2^n}^*$ \cite{GG09}& 
 $ \left\{ \begin{array}{l l}{{2^{n-1}-2^{\frac{3n+2i-4}{4}}}},&\text{if n is even} \\  {{2^{n-1}-2^{\frac{3n+2i-5}{4}}}},&\text{if n is odd} \end{array}\right.$   \\ \hline

 \thead{$\tr_n(\mu x^{2^{2i}+2^i+1}), \mu \in \mathbb{F}_{2^n}^*$,\\ $\gcd(n,i)=1$, $n>4$ \cite{GG09}}&
$\left\{\begin{array}{l l}{{2^{n-1}-2^{\frac{3n}{4}}}},&\text{if n is even}\\ {{2^{n-1}-2^{\frac{3n-1}{4}}}},&\text{if n is odd}\end{array}\right.$ 
  \\ \hline

 \thead{$\tr_n(\lambda x^{2^{2r}+2^r+1})$, $n=6r$, \\ $\lambda \in \mathbb{F}_{2^n}^*$ \cite{GST10} }&
$\begin{array}{l l}{{}}&{
2^{n-1}-\frac{1}{2}\sqrt{2^{\frac{3n}{2}+2r}+2^n-2^{\frac{n}{2}+2r}}}
\\{{=}}&{2^{n-1}-2^{\frac{3n}{4}+r-1}-O(2^{\frac{n}{4}})}\end{array}
$ 
 \\ \hline

  \thead{$\tr_r(xy^{2^i+1})$\\ $n=2r$, $x,y \in \mathbb{F}_{2^{r}}$, $1\le i<r$, \\ $\gcd(2^r-1,2^i+1)=1$, $\gcd(i,r)=j$ \cite{GST10} }&
$\begin{array}{l l}{{}}&{
2^{n-1}-\frac{1}{2}\sqrt{2^{\frac{3n}{2}+j}-2^{\frac{3n}{4}+\frac{j}{2}}+2^n(2^{\frac{n}{4}+\frac{j}{2}}-2^j+1)}}
\\{{=}}&{2^{n-1}-2^{\frac{3n}{4}+\frac{j}{2}-1}-O(2^{\frac{n}{2}})}\end{array}
$ 
 \\ \hline
   \thead{$\tr_n(\lambda x^{2^{2r}+2^r+1})$, $n=5r$, \\  $\lambda \in \mathbb{F}_{2^r}^*$ \cite{GG11b} }&
$2^{n-1}-2^{\frac{3n+3r-4}{4}}$ 
\\ \hline

 \thead{$\tr_n(\lambda x^{2^{2r}+2^r+1})$, $n=3r$, \\  $\lambda \in \mathbb{F}_{2^r}^*$ \cite{Sin11} }&
$2^{n-1}-2^{\frac{3n+r-4}{4}} $ 
 \\ \hline

 \thead{$\tr_n(\lambda x^{2^{2r}+2^r+1})$, $n=4r$,\\ $\lambda \in \mathbb{F}_{2^r}^*$ \cite{SW11} }&
$\begin{array}{l l}{{}}&{
2^{n-1}-\frac{1}{2}\sqrt{2^{\frac{7n}{4}}+2^{\frac{5n}{4}}-2^n}}
\\{{=}}&{2^{n-1}-2^{\frac{7n}{8}-1}-O(2^{\frac{3n}{8}})}\end{array} $ 
 \\ \hline
 
 \thead{$\tr_n(\lambda x^{2^{2r}+2^r+1})$, $n=6r$, \\ $\lambda \in \mathbb{F}_{2^n}^*$ \cite{TYZZ20}$^{\rm b}$}&
$\begin{array}{l l}{{}}&{
2^{n-1}-\frac{1}{2}\sqrt{2^{\frac{5n}{3}}+2^{\frac{4n}{3}}-2^{\frac{7n}{6}}+2^{n}-2^{\frac{5n}{6}}} }
\\{{=}}&{2^{n-1}-2^{\frac{5n}{6}-1}-O(2^{\frac{n}{2}})}\end{array}$ 
 \\ \hline

 $\tr_n(x^{2^{r+1}+3}),n=2r$ \cite{YT20}& 
$ 
\begin{aligned}
&
\begin{cases}
    2^{n-1}-\frac{1}{2}\sqrt{2^{\frac{3n}{2}+1}+2^{\frac{5n}{4}+\frac{1}{2}}-2^{n}-2^{\frac{3n}{4}+\frac{1}{2}}} ,&\text{if r is odd}\\ 
    2^{n-1}-\frac{1}{2}\sqrt{2^{\frac{3n}{2}+1}+{\frac{1}{3}}\cdot2^{\frac{5n}{4}+2}-2^{n}-\frac{1}{3}\cdot 2^{\frac{3n}{4}+2}},&\text{if r is even}
\end{cases}
\\= & \quad 2^{n-1}-2^{\frac{3n}{4}-\frac{1}{2}}-O(2^{\frac{n}{2}})
\end{aligned}
$ 
  \\ \hline

 \thead{$\tr_n(x^{2^r+2^{\frac{r+1}{2}}+1}),n=2r$,\\ \text{for odd r} \cite{YT20}}&
$\begin{array}{l l}{{}}&{
2^{n-1}-\frac{1}{2}\sqrt{2^{\frac{3n}{2}+1}+2^{\frac{5n}{4}+\frac{1}{2}}-2^{n}-2^{\frac{3n}{4}+\frac{1}{2}}}}
\\{{=}}&{2^{n-1}-2^{\frac{3n}{4}-\frac{1}{2}}-O(2^{\frac{n}{2}})}\end{array}
$
  \\ \hline

 \thead{$\tr_n(x^{2^{2r}+2^{r+1}+1}),n=4r$, \\ \text{for even r} \cite{YT20}}&
$\begin{array}{l l}{{}}&{
2^{n-1}-\frac{1}{2}\sqrt{2^{\frac{3n}{2}+1}+\frac{1}{3}\cdot 2^{\frac{5n}{4}+2} -2^{n}-\frac{1}{3}\cdot 2^{\frac{3n}{4}+2}}}
\\{{=}}&{2^{n-1}-2^{\frac{3n}{4}-\frac{1}{2}}-O(2^{\frac{n}{2}})}\end{array}
$ 
 \\ \hline


 \thead{$\tr_n(x^{2^{r+1}+2^{r}+1}),n=2r+2$,\\ \text{for even r} \cite{Liu21}}&
$\begin{array}{l l}{{}}&{
2^{n-1}-{\frac{1}{2}}\sqrt{2^{\frac{3n}{2}+1}+2^{\frac{5n}{4}+\frac{1}{2}}-2^{n}-2^{{\frac{3n}{4}}+\frac{1}{2}}} }
\\{{=}}&{2^{n-1}-2^{\frac{3n}{4}-\frac{1}{2}}-O(2^{\frac{n}{2}})}\end{array}
$
 \\ \hline
 
\end{longtable}

\begin{threeparttable}
\begin{tablenotes}
    \footnotesize

    \item [$^{\rm a}$] $g(x)$ is a univariate polynomial of degree $\le 2^t-2$ over $\mathbb{F}_{2^n} $.
    \item [$^{\rm b}$] $\lambda\in \{yz^{d}: y\in U, z\in \mathbb{F}_{2^n}^{*}\}$, $U=\{ y\in \mathbb{F}_{2^{3r}}^{*} : \tr_{\mathbb{F}_{2^{3r}}/\mathbb{F}_{2^{r}}}(y)=0\}$, where the function $\tr_{\mathbb{F}_{2^{3r}}/\mathbb{F}_{2^{r}}}(y)$ is a mapping from $\mathbb{F}_{2^{3r}}$ to $\mathbb{F}_{2^{r}}$.
\end{tablenotes}

\end{threeparttable}
\end{center}

Among all trace monomials, the best second-order nonlinearity lower bound was proved for functions $\tr_n(x^{2^r+3})$, where $n=2r-1$, by Carlet \cite{Car08}, when $n$ is odd, and for functions  $\tr_n(x^{2^{r+1}+3})$, where $n=2r$, by Yan and Tang \cite{YT20}, when $n$ is even.
Note that the best second-order nonlinearity lower bound is, $2^{n-1} - 2^{\frac{2}{3} \dot n} - 2^{\frac{1}{3}n-1}$, proved by Kolokotronis and Limniotis \cite{KL11}, for the Maiorana-McFarland cubic functions (which are \emph{not} trace monomials). 

For the third-order nonlinearity, lower bounds are proved for the inverse function $\mathrm{tr}_n(x^{2^{n}-2})$, the Kasami functions $\tr_n(\mu x^{57})$,  functions of the form $\tr_n(\mu x^{2^i+2^j+2^k+1})$. Previous to our results, the best third-order nonlinearity lower bound was proved for functions $\tr_n(\mu x^{2^{3i}+2^{2i}+2^i+1})$, where  $\mu \in \mathbb{F}_{2^n}^*$, and $\gcd(i,n)=1$, by Singh \cite{Sin14}. Please see Table \ref{funcs_third} for a summary.

\begin{table}[H]
\centering
\caption{Third-order nonlinearity lower bounds}
 \renewcommand\arraystretch{2}
\begin{tabular}{c|c}
\hline
 \textbf{Function} & \textbf{$\nl_3$ lower bound}  \\ \hline

 $\tr_n(x^{2^n-2})$ \cite{Car08}& 
$\begin{array}{l l}{{}}&{2^{n-1}-\frac{1}{2}\sqrt{(2^n-1)\sqrt{2^{\frac{3n}{2}+3}+3\cdot2^{n+1}-2^{\frac{n}{2}+3}+16}+2^n}}\\ {{=}}&{{2^{n-1}-2^{\frac{7n}{8}-\frac{1}{4}}-O(2^{\frac{3n}{8}})}}\end{array} $
 \\ \hline

\thead{ $\tr_n(\mu x^{57}), \mu \in \mathbb{F}_{2^n}^*$\\ $n>10$ \cite{GG10}}& 
 $\left\{\begin{array}{l l}{{2^{n-3}-2^{\frac{n+4}{2}}}},&{\text{if n is even} }\\ {{2^{n-3}-2^{\frac{n+3}{2}}}},&{\text{if n is odd}}\end{array}\right.$   \\ \hline

 \thead{$\tr_n(\mu x^{2^i+2^j+2^k+1})$, $i>j>k\ge 1$, \\$n>2i$, $ \mu \in \mathbb{F}_{2^n}^*$ \cite{Sin14}}&
$\left\{\begin{array}{l l}{{2^{n-3}-2^{\frac{n+2i-6}{2}}},}&{\text{if n is even} }\\ {{2^{n-3}-2^{\frac{n+2i-7}{2}}}},&{\text{if n is odd}}\end{array}\right.$ 
  \\ \hline

\thead{ $\tr_n(\mu x^{2^{3i}+2^{2i}+2^i+1}), \mu \in \mathbb{F}_{2^n}^*$ \\ $\gcd(i,n)=1$, $n>6$  \cite{Sin14}}& 
$\left\{\begin{array}{l l}{{2^{n-1}-\frac{1}{2}\sqrt{(2^n-1)\sqrt{2^{\frac{3n}{2}+3}+2^{n+1}-2^{\frac{n}{2}+4}}+2^n}},}&{\text{if n is even} }
\\ {{= 2^{n-1}-2^{\frac{7n-2}{8}}-O(2^{\frac{3n}{8}})}}
\\ {{2^{n-1}-\frac{1}{2}\sqrt{(2^n-1)\sqrt{2^{\frac{3n+5}{2}}+2^{n+1}-2^{\frac{n+7}{2}}}+2^n}}},&{\text{if n is odd}}
\\ {{= 2^{n-1}-2^{\frac{7n-3}{8}}-O(2^{\frac{3n}{8}})}}
\end{array}\right.$ 
  \\ \hline
\end{tabular}
\label{funcs_third}
\end{table}

Garg and Khalyavin \cite{GK12} proved that the $r$-th order nonlinearity for the Kasami function $f(x)=\tr_n(\lambda x^k)$, where $k=2^{2r}-2^r+1$, $\lambda \in \mathbb{F}_{2^n}^*$, $n\ge 2r$ and $\gcd(n,r)=1$, is bounded by
\begin{equation*}
    \begin{cases}
        2^{n-r}-2^{\frac{n+2r-2}{2}},&\mbox{for even $n$}\\
        2^{n-r}-2^{\frac{n+2r-3}{2}},&\mbox{for odd $n$}
    \end{cases}
    .
\end{equation*}
Garg \cite{Gar15} proved that the $(\frac{n}{2}-1)$-th order nonlinearity of $\tr_n(\lambda x^{2^{\frac{n}{2}-1}})$ for $\lambda\in \mathbb{F}_{2^n}^*$ is at least $2^{\frac{n}{2}}$.
Tiwari and Sharma \cite{TS23} proved that the $(\frac{n}{2}-1)$-th order nonlinearity of $\tr_n(\lambda x^d)$, where $\lambda\in \mathbb{F}_{2^n}^*$ and $d=3(2^{\frac{n}{2}}-1)+1$ for even $n$, is at least $2^{\frac{n}{2}+1}-2^{\frac{\frac{n}{2}+1}{2}}$; the $(\frac{n}{2}-2)$-th order nonlinearity of $\tr_n(\lambda x^d)$, where $d=2^{\frac{n}{2}}-2$, is at least $ 2^{\frac{n}{2}+2}-2^{\frac{n}{4}+\frac{3}{2}}$. Saini and Garg \cite{SG22} proved that the $\frac{n}{4}$-th order nonlinearity of functions $\tr_n(\alpha_1x^{d_1}+\alpha_2x^{d_2})$ is at least $2^{\frac{3n}{4}}-2^{\frac{3n}{4}-2}$, where $\alpha_1,\alpha_2\in \mathbb{F}_{2^n}$, $d_1=\frac{1}{2}\cdot (2^{\frac{n}{2}}-1)+1$, $d_2=\frac{1}{6} \cdot (2^{\frac{n}{2}}-1)+1$, and $4\mid n$.

Proving large high-order nonlinearity lower bound for any explicit function is an outstanding open problem in the computational complexity. For example, the problem whether there exists a function in NP with $\mathrm{log}_2 n$-th order nonlinearity at least $2^{n-1}(1-\frac{1}{\sqrt{n}}) $ is open \cite{Vio22}. For the majority and mod functions, Razborov and Smolensky \cite{Raz87, Smo87, Smo93} proved that the $r$-th order nonlinearities of them are at least $2^{n-1}(1-O(\frac{r}{\sqrt{n}}))$. For $r\ll \log n$, Babai, Nisan and Szegedy \cite{BNS92} proved that the generalized inner product function has $r$-th order nonlinearity bounded by $2^{n-1}(1-\mathrm{exp}(-\Omega(\frac{n}{r\cdot 4^r})))$. Bourgain \cite{Bou05} proved a similar result for $\textrm{mod}_3$ function; a mistake in his proof is corrected by Green, Roy and Straubing \cite{GRS05}. An improvement was achieved by Viola and Widgerson \cite{Vio06,VW08} by exhibiting a polynomial-time computable function with $r$-th order nonlinearity lower bounded by $ 2^{n-1}(1-\mathrm{exp}(-\frac{\alpha\cdot n}{2^r}))$, where constant $\alpha< \frac{1}{4}\cdot \mathrm{log}_2 e$. Gopalan, Lovett and Shpilka \cite{GLS09} proved that, if the mod-$p$ degree, for any prime $p>2$, of $f$ is $d = o(\log n)$, then the $r$-th order nonlinearity of $f$ is at least $2^{n-1}(1-p^{-O(d)})$. Chattopadhyay \emph{et al.} \cite{CHHLZ20} proved that the $O(1)$-th nonlinearity for the $k$ XORs of the majority function is lower bounded by $2^{kn-1}\left(1-\left(\frac{\mathrm{poly}(k, \log n)}{\sqrt{n}}\right)^k\right)$. Chen and Lyu \cite{Chen21} proved that there exists a function $f\in \text{E}^{\text{NP}}$ which has $r$-th order nonlinearity at least $2^{n-1}(1-2^{-r})$ for $r\le o(\frac{n}{\mathrm{log}n})^{\frac{1}{2}}$.

\subsection{Our results}

In this work, we prove lower bounds on the high-order nonlinearities of certain trace monomial Boolean functions. We exhibit some trace monomial functions with large second-order, third-order or higher-order nonlinearities.

\begin{theorem}\label{theorem_1}
        Let $f(x)=tr_n(x^7)$. For even $n$, we have
    \begin{eqnarray*}
    \nl_2(f)&\ge &
        \begin{cases}
            2^{n-1}-\frac{1}{2}\sqrt{\frac{13}{3}\cdot 2^{\frac{3}{2}n-1}+2^n-\frac{1}{3}\cdot 2^{\frac{n}{2}+3}},& 3\nmid n
           \\ 2^{n-1}-\frac{1}{2}\sqrt{\frac{13}{3}\cdot 2^{\frac{3}{2}n-1}+2^{n+2}-\frac{1}{3}\cdot 2^{\frac{n}{2}+3}},& 3\mid n
        \end{cases} 
        \\ &=& 2^{n-1}-2^{\frac{3n}{4}-\frac{3}{2}+\frac{1}{2}\log_2 13- \frac{1}{2}\log_2 3}-O(2^{\frac{n}{4}})
        .
    \end{eqnarray*}
    For odd $n$, we have
    \begin{eqnarray*}
    \nl_2(f)&\ge &
        \begin{cases}
            2^{n-1}-\frac{1}{2}\sqrt{3\cdot 2^{\frac{3n-1}{2}}+2^n-2^{\frac{n+3}{2}}} ,& 3\nmid n
           \\ 2^{n-1}-\frac{1}{2}\sqrt{3\cdot 2^{\frac{3n-1}{2}}+2^n +3\cdot 2^{n+\frac{1}{2}}-2^{\frac{n+3}{2}}},& 3\mid n
        \end{cases}  
        \\ &=& 2^{n-1}-2^{\frac{3n-5}{4}+\frac{1}{2}\log_2 3}-O(2^{\frac{n}{4}})
        .
    \end{eqnarray*}
\end{theorem}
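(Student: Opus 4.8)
The plan is to follow Carlet's recursive framework \cite{Car08} together with the quadratic-function analysis of Canteaut \emph{et al.} \cite{CCK08}. Since $7=2^2+2+1$, the function $f=\tr_n(x^7)$ is cubic, so every first-order derivative $D_a f(x)=f(x+a)+f(x)$ is quadratic. Carlet's bound reads
\[
\nl_2(f)\ge 2^{n-1}-\tfrac12\sqrt{2^{2n}-2\sum_{a\in\mathbb{F}_{2^n}}\nl(D_a f)},
\]
and by \cite{CCK08} a quadratic function whose associated symplectic form has linear kernel of dimension $k_a$ satisfies $\nl(D_a f)=2^{n-1}-2^{(n+k_a)/2-1}$. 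Substituting this in and simplifying collapses the bound to
\[
\nl_2(f)\ge 2^{n-1}-\tfrac12\sqrt{2^{n/2}\sum_{a\in\mathbb{F}_{2^n}}2^{k_a/2}},
\]
so the entire problem reduces to determining the distribution of the kernel dimensions $k_a$. The term $a=0$, where $k_0=n$, already accounts for the additive $2^n$ appearing under the radical in the statement.

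To find $k_a$ I would compute the second-order derivative $D_aD_b f$, which for a cubic $f$ is affine in $x$; its linear part is a trace form $\tr_n(\phi_a(b)\,x)$, so the linear kernel of $D_a f$ is exactly $\{b:\phi_a(b)=0\}$. Expanding $(x+a+b)^7+(x+a)^7+(x+b)^7+x^7$ and repeatedly using $\tr_n(c\,x^{2^i})=\tr_n(c^{2^{n-i}}x)$ to push every monomial onto the linear functional, then raising the result to the fourth power (a bijection on $\mathbb{F}_{2^n}$), turns $\phi_a(b)=0$ into a linearized equation $L_a(b)=0$. The substitutions $b=ay$ and $z=y^2+y$ should reduce this, after dividing by $a^3$, to counting the $\mathbb{F}_{2^n}$-roots of
\[
Q_u(z)=u^3z^8+uz^4+uz^2+z,\qquad u=a^7,
\]
subject to $\tr_n(z)=0$. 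Since $y\mapsto y^2+y$ is two-to-one onto the trace-zero hyperplane, this gives $2^{k_a}=2\,\lvert\{z:Q_u(z)=0,\ \tr_n(z)=0\}\rvert$. Crucially the count depends on $a$ only through $u=a^7$, and because $\gcd(7,2^n-1)=7$ precisely when $3\mid n$, this dependence is exactly what produces the split ($3\mid n$ versus $3\nmid n$) in the statement.

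Writing $Q_u(z)=z\,R_u(z)$ with $R_u(z)=u^3z^7+uz^3+uz+1$, the $\mathbb{F}_{2^n}$-roots of $Q_u$ form an $\mathbb{F}_2$-subspace of dimension $j_u\in\{0,1,2,3\}$, and $k_a$ equals $j_u+1$ or $j_u$ according as $\tr_n$ vanishes on the whole root space or not. The rank of a quadratic form's bilinear part is even, so $k_a\equiv n\pmod 2$; combined with $1\le k_a\le 4$ (the element $a$ itself always lies in the kernel, and $Q_u$ has $2$-degree $3$), this forces $k_a\in\{2,4\}$ for even $n$ and $k_a\in\{1,3\}$ for odd $n$, matching the two leaf-values that appear in the final expressions.

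The main obstacle is this root count of $R_u$ together with the trace condition: I must determine, as $a$ ranges over $\mathbb{F}_{2^n}^*$, how often $R_u$ splits completely over $\mathbb{F}_{2^n}$ and whether the trace functional is trivial on the root space. Only one additional quantity beyond $\sum_a k_a$ is needed, namely the number $N$ of $a\ne0$ attaining the maximal $k_a$; for even $n$, $3\nmid n$ I anticipate $N=(2^n-4)/12$ (which is exactly what reproduces the coefficient $\tfrac{13}{3}$ under the radical), for odd $n$, $3\nmid n$ I anticipate the dual clean value $N=2^{n-1}-1$, and for $3\mid n$ an additional $\Theta(2^{n/2})$ correction (coming from the $a$ with $a^7$ in a distinguished set, including the seventh roots of unity) produces the extra $3\cdot2^{n}$ term under the radical. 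This counting should follow from a careful analysis of the factorization of the degree-seven polynomial $R_u$ and of the traces of its roots, treating the $3\mid n$ case and its special values of $u$ separately. Once the distribution of $k_a$ is pinned down, substituting the counts into $\sum_a 2^{k_a/2}$ and simplifying, separately for each parity of $n$ and each residue of $n$ modulo $3$, yields the four stated bounds and their common asymptotic.
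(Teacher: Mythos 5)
Your architecture is exactly the paper's: Carlet's recursion plus the Canteaut \emph{et al.} kernel-dimension formula, reduction of $\mathcal{E}_{D_af}$ to the root set of a linearized polynomial (your $Q_u(z)$ with $z=b^2+b$, $u=a^7$ is precisely the paper's equation $a^{28}(b^2+b)^8+a^{14}(b^2+b)^4+a^{14}(b^2+b)^2+a^7(b^2+b)=0$ divided by $a^7$), the parity constraint forcing $k_a\in\{2,4\}$ or $\{1,3\}$, and even your anticipated counts $N=(2^n-4)/12$ and $N=2^{n-1}-1$ are the correct ones and do reproduce the stated constants. However, the entire technical content of the theorem is the determination of that distribution, and you leave it at ``I anticipate'' / ``should follow from a careful analysis of the factorization of the degree-seven polynomial $R_u$.'' That is a genuine gap, and the route you point at would stall: the splitting behaviour of a generic degree-7 polynomial over $\mathbb{F}_{2^n}$ is not tractable. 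The missing idea is the explicit factorization
\[
Q_u(z)=u^{-1}\bigl(u^2z^4+uz\bigr)\bigl(u^2z^4+uz+1\bigr),\qquad\text{equivalently}\qquad R_u(z)=(uz^3+1)(u^2z^4+uz+1),
\]
which splits the count into two classically analyzable pieces: the factor $uz^3+1$ gives the cubing equation $z^3=u^{-1}$, governed by the index-3 subgroup of cubes together with an Artin--Schreier condition $\tr_n(\cdot)=0$ for each cube root; the factor $u^2z^4+uz+1$ is a quartic of the form $z^4+\alpha z+\beta$, whose root count over $\mathbb{F}_{2^n}$ is given by a standard lemma (no solution or exactly two for odd $n$, etc.).

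Even granting that factorization, two further ingredients are needed to obtain the exact constants and the $3\mid n$ corrections, and neither appears in your plan: the identity $\tr_n(c)+\tr_n(cd)+\tr_n(cd^2)=0$ for $d$ a primitive cube root of unity, which forces either all three or exactly one of the three Artin--Schreier conditions to hold and is what yields the clean count $(2^{n-2}-1)/3$ in the case $2\mid n$, $3\nmid n$; and, when $3\mid n$ so that $x\mapsto x^7$ is $7$-to-$1$, the Weil bound $\wt(\tr_n(x^7))\ge 2^{n-1}-3\cdot 2^{n/2}$ to control the count expressed in terms of $\wt(\tr_n(x^7))$ --- this is exactly where the extra $3\cdot 2^{n}$ (resp.\ $3\cdot 2^{n+\frac12}$) under the radical comes from. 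As written, your proposal correctly reduces the theorem to the right combinatorial question but does not answer it.
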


Theorem \ref{theorem_1} gives a lower bound on the second-order nonlinearity of $\tr_n(x^7)$. Among all trace monomials, it matches the best lower bound when $n$ is odd (i.e., the modified Welch function \cite{Car08}).

\begin{theorem}\label{theorem_2}
        Let $f(x)=\tr_n(x^{2^{r}+3})$, where $n=2r$. Then we have
     \begin{eqnarray*}
    \nl_2(f)&\ge &
        \begin{cases}
             2^{n-1} - \frac{1}{2}\sqrt{2^{\frac{3n}{2}+1}+ 2^{\frac{5n}{4}+\frac{1}{2}} - 2^n  - 2^{\frac{3n}{4}+\frac{1}{2}}},& 2\nmid r
            \\ 2^{n-1}-\frac{1}{2}\sqrt{2^{\frac{3}{2}n+1}+\frac{1}{3}\cdot 2^{\frac{5}{4}n+2}-2^n-\frac{1}{3}\cdot 2^{\frac{3}{4}n+2}},& 2\mid r
        \end{cases}
        \\ &=& 2^{n-1}-2^{\frac{3n}{4}-\frac{1}{2}}-O(2^{\frac{n}{2}}).
    \end{eqnarray*}
\end{theorem}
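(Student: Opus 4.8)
The plan is to follow Carlet's squaring method \cite{Car08}. Since $d=2^r+3$ has binary weight $3$, the function $f=\tr_n(x^d)$ is cubic, so every derivative $D_a f(x)=f(x+a)+f(x)$ is a quadratic Boolean function. Carlet's recursive bound gives
\[
\nl_2(f)\ge 2^{n-1}-\frac12\sqrt{\,2^{2n}-2\sum_{a\in\mathbb{F}_{2^n}}\nl_1(D_a f)\,}.
\]
For a quadratic function the first-order nonlinearity is controlled by the dimension $k_a$ of the linear kernel of the bilinear form of $D_a f$: by the Canteaut \emph{et al.} description \cite{CCK08}, $\nl_1(D_a f)=2^{n-1}-2^{(n+k_a)/2-1}$. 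Substituting this and simplifying, the whole problem collapses to estimating the single sum $S=\sum_a 2^{k_a/2}$, since the computation reduces to
\[
\nl_2(f)\ge 2^{n-1}-\frac12\sqrt{\,2^{n/2}\,S\,}.
\]
So it suffices to evaluate $S$; the exact constants in the statement will come out of an essentially exact evaluation of $S$.

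Next I would make the kernel explicit. Writing $q=2^r$ and $\mathbb{F}_{2^n}=\mathbb{F}_{q^2}$, expanding $(x+a)^{2^r+3}+x^{2^r+3}$ and discarding the affine-in-$x$ terms, the quadratic part of $D_a f$ is $\tr_n\!\big(a^{2^r}x^3+a\,x^{2^r+2}+a^2x^{2^r+1}\big)$. Polarizing and collecting the coefficient of $x$ (using $\tr_n(x^{2^i}c)=\tr_n(x\,c^{2^{n-i}})$), the linear kernel of $D_a f$ is the solution space of a linearized equation $L_a(y)=0$; after one Frobenius step to clear the inverse-Frobenius exponent $2^{n-1}$, this becomes an additive polynomial identity in $y$ and $y^{q}$ whose coefficients are monomials in $a$. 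Two structural facts are immediate and guide the count: $a$ itself always lies in the kernel, since $D_aD_af\equiv 0$, so $k_a\ge 1$ for $a\ne0$; and because $n=2r$ is even the kernel dimension has the same parity as $n$, forcing $k_a$ even. Hence $k_a\ge 2$, with equality in the generic case, while $a=0$ contributes $k_0=n$.

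The core of the proof is then to pin down the distribution of $k_a$. I would first show that for $a\ne0$ the kernel equation has at most $2^4$ solutions, i.e. $k_a\in\{2,4\}$, by bounding the number of roots of the associated linearized polynomial using the $\mathbb{F}_{q^2}/\mathbb{F}_q$ structure (splitting into the $\mathbb{F}_q$-rational and the conjugate-antisymmetric parts). With only two possible values of $k_a$, the counts $N_1=\#\{a:k_a=2\}$ and $N_2=\#\{a:k_a=4\}$ are determined by the two equations $N_1+N_2=2^n-1$ and $4N_1+16N_2=\sum_{a\ne0}2^{k_a}$, where the right-hand side is exactly the number of pairs $(a,y)$ with $a\ne0$ satisfying the kernel equation. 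That pair-count is the step I expect to be the main obstacle: it amounts to counting the zeros of a bilinear/linearized system over $\mathbb{F}_{q^2}$, which I would carry out by fixing the relation between $y$ and $y^q$ and reducing to a low-degree (cubic-type) equation over $\mathbb{F}_q$.

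Here the parity of $r$ enters decisively, because the number of solutions of the resulting equation is governed by $\gcd(3,2^r-1)$, which equals $1$ when $r$ is odd and $3$ when $r$ is even (as $2^r\equiv(-1)^r\pmod 3$). This is precisely the source of the two cases and of the $\tfrac13$-factors appearing in the even-$r$ bound. Once $N_1$ and $N_2$ are known, $S=2^{n/2}+2N_1+4N_2$ is explicit, and substituting into $\nl_2(f)\ge 2^{n-1}-\tfrac12\sqrt{2^{n/2}S}$ yields the stated exact radicands, with the asymptotic $O(2^{n/2})$ term absorbing the lower-order corrections.
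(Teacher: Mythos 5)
Your overall framework (Carlet's recursive bound plus the Canteaut \emph{et al.} kernel-dimension description, reducing everything to the distribution of $k_a=\dim\mathcal{E}_{D_af}$) is the same as the paper's, and your reduction to $S=\sum_a 2^{k_a/2}$ is arithmetically fine. The fatal step is the structural claim that for all $a\neq 0$ the kernel equation has at most $2^4$ roots, so that $k_a\in\{2,4\}$. This is false for this function. After polarizing and squaring, the kernel equation is
\[
a^{2^{r+1}}(x^4+x^{2^{r+2}})+x^{2^{r+1}}(a^4+a^{2^{r+2}})+a^{2^r}x+ax^{2^r}=0,
\]
a linearized polynomial of degree $2^{r+2}=2^{\frac{n}{2}+2}$, not degree $16$. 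Worse, when $a\in\mathbb{F}_{2^r}^*$ the entire subfield $\mathbb{F}_{2^r}$ sits inside the kernel (for $x\in\mathbb{F}_{2^r}$ one has $x^{2^r}=x$ and $a^{2^r}=a$, and every term above cancels), so $k_a\ge r=\tfrac{n}{2}$ for those $2^r-1$ values of $a$. The paper shows $k_a=r+1$ for all $a\in\mathbb{F}_{2^r}^*$ when $r$ is odd, and $k_a\in\{r,r+2\}$ when $r$ is even, while $k_a=2$ exactly for every $a\notin\mathbb{F}_{2^r}$. Your two-unknown counting scheme ($N_1+N_2=2^n-1$ together with a pair count) therefore cannot be set up, and the term $2^{\frac{5n}{4}+\frac12}$ in the stated radicand cannot be produced by your distribution at all: it arises precisely as $(2^{\frac n2}-1)\cdot 2^{(n+r+1)/2}$, i.e.\ from the subfield elements with kernel dimension about $\tfrac n2$.

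The correct dichotomy is $a\in\mathbb{F}_{2^r}$ versus $a\notin\mathbb{F}_{2^r}$, not a generic-versus-exceptional split within a bounded range of dimensions. For $a\notin\mathbb{F}_{2^r}$ one shows the kernel is exactly $\{0,(a+a^{2^r})^{2^n-4},a,(a+a^{2^r})^{2^n-4}+a\}$; for $a\in\mathbb{F}_{2^r}^*$ the equation factors as $a(x+x^{2^r})\bigl(a(x+x^{2^r})^3+1\bigr)=0$, and the extra solutions beyond $\mathbb{F}_{2^r}$ come from $(x+x^{2^r})^3=a^{-1}$, where $x\mapsto x+x^{2^r}$ is the $2^r$-to-$1$ relative trace onto $\mathbb{F}_{2^r}$. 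Your instinct that the case split comes from $\gcd(3,2^r-1)$ is right, but it operates here: for odd $r$ the cube map is a bijection on $\mathbb{F}_{2^r}^*$ and each $a$ gains $2^r$ extra kernel elements ($k_a=r+1$), while for even $r$ the equation has $3\cdot 2^r$ solutions when $a^{-1}$ is a cube and none otherwise ($k_a=r+2$ or $r$), which is where the $\tfrac13$ factors in the even-$r$ bound originate. As written, your argument would need to be restructured around this subfield case before any of the claimed constants could be recovered.
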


Theorem \ref{theorem_2} gives a lower bound on the second-order nonlinearity of $\tr_n(x^{2^{r}+3})$, where $n=2r$. When $n$ is even, it matches the largest lower bound on the second-order nonlinearities among all trace monomial Boolean functions. That is, it is the same as functions $\tr_n(x^{2^{r+1}+3})$, where $n=2r$ \cite{YT20}.
Note that a larger lower bound is known for Maiorana-McFarland type functions.
Kolokotronis and Limniotis proved that, the second-order nonlinearity for a cubic Maiorana-McFarland type functions $g(x)y^t$, where $(x,y)\in \mathbb{F}_{2^n} \times \mathbb{F}_{2^m}$ and $g(x)$ is a quadratic perfect nonlinear function, and $m\le \frac{n}{2}$,  is at least $2^{n+m-1}-2^{n-1}-2^{\frac{n}{2}+m-1}+2^{\frac{n}{2}-1}$ \cite{KL11}.

We would like to point out that, this class of functions $\tr_n(x^{2^{r}+3})$, where $n = 2r$, is studied for the first time in our work.
A similar type of functions $\tr_n(x^{2^{r+1}+3})$, where $n=2r$, was studied in \cite{CD96}; the lower bound proved in \cite{YT20} is exactly the same as Theorem \ref{theorem_2}.

\begin{theorem}\label{theorem_3}
    Let $f=\tr_n(x^{15})$. Then we have
   \begin{equation*}
    \nl_3(f)\ge 
        \begin{cases}
            2^{n-1}-\frac{1}{2} \sqrt{(2^n-1)\sqrt{\frac{1}{3}\cdot 2^{\frac{3}{2}n+4}+\frac{7}{3}\cdot 2^{n+1}-\frac{1}{3}\cdot 2^{\frac{n}{2}+5} }+2^n}\\ 
            = 2^{n-1}- 2^{\frac{7n}{8}-\frac{1}{4} \log_2 3}-O(2^{\frac{3n}{8}}), & 2\mid n
            \\
            \\   2^{n-1}-\frac{1}{2} \sqrt{(2^n-1)\sqrt{\frac{29}{8}\cdot 2^{\frac{3n+1}{2}}+2^{n+1}-7\cdot 2^{\frac{n+5}{2}}}+2^n}
            \\  = 2^{n-1}- 2^{\frac{7n}{8}-\frac{13}{8}+ \frac{1}{4} \log_2 29}-O(2^{\frac{3n}{8}}), &2\nmid n
        \end{cases}
    \end{equation*}
for $n\ge 6$.
\end{theorem}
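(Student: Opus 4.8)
The plan is to follow Carlet's recursive ``squaring'' method \cite{Car08} to reduce the third-order nonlinearity of $f=\tr_n(x^{15})$ to the nonlinearities of its second-order derivatives, which are quadratic and hence controlled by the dimension of their linear kernels via \cite{CCK08}. Carlet's inequality gives, for any cubic $g$, $\nl_2(g)\ge 2^{n-1}-\frac12\sqrt{2^{2n}-2\sum_{b}\nl_1(D_b g)}$, and a further application gives $\nl_3(f)\ge 2^{n-1}-\frac12\sqrt{2^{2n}-2\sum_a\nl_2(D_a f)}$. Taking $g=D_a f$ (which is cubic since $\wt(15)=4$), the nested square roots in the statement appear once I establish a uniform bound $\nl_2(D_a f)\ge 2^{n-1}-\frac12\sqrt{S}$ for all $a\neq 0$: writing $B=2^{n-1}-\frac12\sqrt{S}$ and using $\sum_a\nl_2(D_a f)\ge (2^n-1)B$ (the term $a=0$ vanishes because $D_0 f=0$), the algebra $2^{2n}-2(2^n-1)B=2^n+(2^n-1)\sqrt{S}$ reproduces exactly the outer expression $2^{n-1}-\frac12\sqrt{(2^n-1)\sqrt{S}+2^n}$. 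So the whole problem reduces to determining $S$, i.e.\ to bounding $\sum_b\nl_1(D_bD_a f)$ uniformly over $a$.

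The second step is to identify the linear kernel of each quadratic $q_{a,b}:=D_bD_a f$. Writing $15=2^0+2^1+2^2+2^3$ and expanding $(x+a)^{15}=\prod_{i=0}^{3}(x^{2^{i}}+a^{2^{i}})$, the third derivative $D_cD_bD_a f$ is affine in $x$, and its linear part equals $\tr_n\!\big(x\,L_{a,b}(c)\big)$, where $L_{a,b}(c)=\sum (a^{2^{i}}b^{2^{j}}c^{2^{k}})^{2^{-l}}$, the sum ranging over all assignments of $\{0,1,2,3\}$ to the four roles $(i,j,k,l)$ attached to $(a,b,c,x)$. This $L_{a,b}$ is an $\mathbb{F}_2$-linearized polynomial in $c$, and by \cite{CCK08} the kernel $V_{a,b}=\{c:L_{a,b}(c)=0\}$ of the bilinear form of $q_{a,b}$ determines its nonlinearity: if $\dim V_{a,b}=k(a,b)$ then $|\widehat{q_{a,b}}(u)|\le 2^{(n+k(a,b))/2}$ for all $u$, hence $\nl_1(q_{a,b})\ge 2^{n-1}-2^{(n+k(a,b))/2-1}$. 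Summing gives $2^{2n}-2\sum_b\nl_1(q_{a,b})\le 2^{n/2}\sum_b\sqrt{N(a,b)}$, where $N(a,b)=|V_{a,b}|=2^{k(a,b)}$, so it suffices to bound $\sum_b\sqrt{N(a,b)}$ and then set $S=2^{n/2}\max_{a\neq 0}\sum_b\sqrt{N(a,b)}$.

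The heart of the proof is therefore to count, for each fixed $a\neq 0$, the number of roots $N(a,b)$ of the linearized polynomial $L_{a,b}$ as $b$ ranges over $\mathbb{F}_{2^n}$. After normalizing (dividing by a suitable Frobenius power and substituting), $L_{a,b}(c)=0$ becomes a sparse equation whose solution count is governed by $\gcd$ conditions — which is precisely why the final bound splits into $3\mid n$ versus $3\nmid n$ (from cubic-type equations, where $|\{z:z^3=1\}|=\gcd(3,2^n-1)$) and into $n$ even versus odd (from $\gcd$ of Frobenius exponents with $n$, which changes the dimension of solution spaces of equations such as $c^{2^{s}}=\lambda c$). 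I would tabulate, for each residue class, how many $b$ yield each value $k(a,b)$, thereby evaluating $\sum_b 2^{k(a,b)/2}$, confirming the worst case over $a$, and plugging into the recursion above. I expect this root-counting and its case analysis to be the main obstacle: one must solve the associated systems of linearized equations over $\mathbb{F}_{2^n}$ carefully, verify that the dominant and subdominant terms assemble into the stated $S$, and check that the lower-order $O(2^{3n/8})$ error comes out as claimed; the hypothesis $n\ge 6$ is presumably what guarantees that the generic root counts underlying these estimates hold.
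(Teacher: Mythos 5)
Your framework is the paper's: two applications of Carlet's recursion reduce the problem to the first-order nonlinearities of the second derivatives $D_bD_af$, which are quadratic, and \cite{CCK08} converts those into the dimensions of linear kernels, i.e.\ into root counts of a linearized polynomial in the derivative direction. Your algebra recovering the outer nested radical $2^{n-1}-\frac12\sqrt{(2^n-1)\sqrt{S}+2^n}$ is correct, as is the reduction $2^{2n}-2\sum_b\nl_1(q_{a,b})\le 2^{n/2}\sum_b\sqrt{N(a,b)}$. But everything up to this point is routine; the entire content of the theorem is the determination of $S$, i.e.\ bounding the distribution of $k(a,b)$ over $b$ for fixed $a$, and there you stop at a plan whose outline is not the one that works. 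Two concrete problems. First, your predicted case structure is wrong: you claim the final bound splits into $3\mid n$ versus $3\nmid n$ \emph{and} even versus odd, driven by ``gcd conditions.'' The stated bound splits only by the parity of $n$. More importantly, for odd $n$ the count is \emph{not} obtained from a gcd/cyclic-subgroup argument at all. The paper reduces the condition ``small kernel'' to $\tr_n\bigl(a^{-5}(b^2+b)^{\frac{2^n-2}{3}-2}\bigr)=1$ and bounds the number of exceptional $b$ by computing the exact maximum degree $\frac53\cdot 2^{n-1}-\frac{32}{3}$ and minimum degree $\frac{2^{n-4}+1}{3}$ of the reduced polynomial $h(b)$ (two appendix-length computations), giving at least $3\cdot 2^{n-4}+10$ good $b$'s; this is where the constant $\frac{29}{8}$ in the odd-$n$ bound comes from, and no amount of tabulating gcd's of Frobenius exponents will produce it.

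Second, you have not addressed how one controls $N(a,b)$ at all beyond ``solve the associated systems of linearized equations.'' The paper's control rests on a specific factorization of the kernel polynomial, $P=Q(Q+1)$ with $Q=(b^2+b)^{-4}R(R+1)$ and $R$ explicit of degree $16$, combined with: the parity constraint on $\dim\mathcal{E}$ (Lemma~\ref{parity_dim}), the fact that the root count of a $2$-polynomial is a power of $2$, the root-count lemma for quartics $y^4+\alpha y+\beta=0$ (Lemma~\ref{num_root_quar}) applied after the substitution $y=(x^2+x)^2+(b^2+b)(x^2+x)$, and the Weil bound on $\wt(\tr_n(x^3))$ for the even case. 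These are the ingredients that push $N_P(a,b)$ down to $16$ (even $n$) or $8$ (odd $n$) for the claimed proportion of $b$'s; without them the trivial bound $N_P\le 64$ gives a strictly weaker theorem. Note also that the paper never computes the exact distribution of $k(a,b)$ (only one-sided bounds on how many $b$ fall in each range), so the ``tabulation'' you envision is both harder than and unnecessary for the result. As written, the proposal establishes the (standard) reduction but leaves the theorem's actual content unproved.
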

Theorem \ref{theorem_3} gives a lower bound on the third-order nonlinearity for the functions within $\tr_n(x^{15})$ class; it is the largest lower bound on the third-order nonlinearity among all trace monomial Boolean functions. 

\begin{theorem}\label{theorem_4}
Let $f=\tr_n(x^{2^{r+1}-1})$ and $r\ge 2$.
    \begin{equation*}
        \begin{aligned}
            \nl_r(f)\ge 2^{n-1}-2^{(1-2^{-r})n+\frac{r}{2^{r-1}}-1}- O(2^{\frac{n}{2}}).
        \end{aligned}
    \end{equation*}
\end{theorem}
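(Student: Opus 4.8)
The plan is to apply Carlet's recursive lower bound \cite{Car08} a total of $r-1$ times, reducing the estimation of $\nl_r(f)$ to the first-order nonlinearities of the quadratic $(r-1)$-th order derivatives of $f$, and then to evaluate the latter through the linear-kernel formula of Canteaut \emph{et al.} \cite{CCK08}. Writing $D_af(x)=f(x+a)+f(x)$, Carlet's bound reads
\begin{equation*}
\nl_r(f)\ge 2^{n-1}-\tfrac12\sqrt{2^{2n}-2\sum_{a\in\mathbb{F}_{2^n}}\nl_{r-1}(D_af)}.
\end{equation*}
Setting $\delta_j(h)=2^{n-1}-\nl_j(h)$ and isolating the $a=0$ term (for which $D_0f\equiv 0$ and $\delta_{r-1}(0)=2^{n-1}$), this rearranges to $\delta_r(f)\le\tfrac12\sqrt{2^n+2\sum_{a\ne 0}\delta_{r-1}(D_af)}$, an inequality I would iterate all the way down to quadratic derivatives.

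The heart of the argument is the $(r-1)$-th order derivative. Since $2^{r+1}-1=\sum_{i=0}^{r}2^{i}$, we have $x^{2^{r+1}-1}=\prod_{i=0}^{r}x^{2^i}$, so the discrete product rule shows that $g=D_{a_{r-1}}\cdots D_{a_1}f$ is a quadratic function whose degree-two part is a linear combination $\tr_n\bigl(\sum_{0\le i<j\le r}c_{ij}\,x^{2^i+2^j}\bigr)$, the coefficients $c_{ij}$ being symmetric expressions in $a_1,\dots,a_{r-1}$. I would then form the associated bilinear form $B(x,y)=g(x+y)+g(x)+g(y)+g(0)$ and write it as $B(x,y)=\tr_n\bigl(y\,L(x)\bigr)$ for a linearized polynomial $L$; the Frobenius exponents of $L$ range over $\{-r,\dots,r\}$, so a bijective Frobenius twist turns $L$ into an additive polynomial of $\mathbb{F}_2$-degree at most $2r$. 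Hence, whenever $g$ is genuinely quadratic, its linear kernel $\{x:B(x,\cdot)\equiv 0\}=\ker L$ has $\mathbb{F}_2$-dimension at most $2r$, and \cite{CCK08} yields
\begin{equation*}
\nl_1(g)\ge 2^{n-1}-2^{\frac{n+2r}{2}-1}=2^{n-1}-2^{\frac n2+r-1}.
\end{equation*}

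Feeding the uniform bound $\delta_1(g)\le 2^{n/2+r-1}$ into the iterated inequality, I would track the exponent $e_j$ with $\delta_j\approx 2^{e_j}$. The dominant contribution $2\sum_{a\ne0}\delta_j\approx 2^{\,n+e_j+1}$ gives the recurrence $e_{j+1}=\tfrac12 e_j+\tfrac n2-\tfrac12$ with initial value $e_1=\tfrac n2+r-1$, whose solution is $e_j=(n-1)+(2r-n)2^{-j}$, so that
\begin{equation*}
e_r=\bigl(1-2^{-r}\bigr)n+\frac{r}{2^{\,r-1}}-1,
\end{equation*}
which is exactly the claimed leading exponent. The subdominant ``$+2^n$'' appearing inside each of the nested radicals contributes, after the outermost square root, only $O(2^{n/2})$, producing the stated error term.

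The main obstacle lies in the second step, specifically in controlling the directions $(a_1,\dots,a_{r-1})$ for which $g$ fails to be a genuine quadratic (all $c_{ij}=0$): such $g$ is affine and contributes the maximal $\delta_1=2^{n-1}$ instead of $2^{n/2+r-1}$. I would show that these degenerate tuples lie on a proper subvariety whose total contribution to $\sum_a\delta_{r-1}(D_af)$ is absorbed into the $O(2^{n/2})$ term; more generally, a careful count of $\sum_a 2^{k_a/2}$, weighting each direction by its true kernel dimension $k_a$ rather than by the uniform bound $2r$, is what produces the sharper closed-form radicals of Theorems~\ref{theorem_1}--\ref{theorem_3}. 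Verifying that the $c_{ij}$ do not all vanish for the overwhelming majority of directions, and bounding the exceptional locus, is the delicate point underlying the $O(2^{n/2})$ slack.
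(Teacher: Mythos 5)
Your proposal follows essentially the same route as the paper: iterate Carlet's inequality down to the quadratic $(r-1)$-th derivatives, bound the linear kernel of each such derivative by dimension $2r$ via the degree $2^{2r}$ of the associated linearized polynomial (the paper's Lemma \ref{nonliearity_derivative}, which uses the explicit derivative formula of Lemma \ref{def_rth_der}), and solve the resulting recurrence for the exponent, which you compute correctly. The one divergence is your final paragraph: the paper never counts an exceptional locus of degenerate directions, because it takes the minimum only over \emph{distinct nonzero} $a_1,\dots,a_{r-1}$ (dropping the remaining terms from the sum, which is safe since nonlinearity is nonnegative) and applies the uniform kernel bound to every surviving tuple, so the $O(2^{\frac{n}{2}})$ term arises purely from expanding the nested radicals (Lemma \ref{expansion_formula_multiple} and Claim \ref{l_i_def}) rather than from any bad-direction count.
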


For $r \ll \log_2 n$, our lower bound in Theorem \ref{theorem_4} is better than all previous results, for all explicit functions in $\text{P}$, not necessarily trace monomials. 

Similarly, we prove the following lower bound on the $r$-th order nonlinearity for the inverse function, which is studied in \cite{Car08}. We credit this to Carlet, who claims that the $r$-th order nonlinearity for the inverse function is asymptotically lower bounded by $2^{n-1}-2^{(1-2^{-r})n}$.

\begin{theorem}\label{thm:inverse_lb}
    Let $f_{\mathrm{inv}}=\tr_n(x^{2^n-2})$. For any $r\ge 1$, we have $\nl_r(f_{\mathrm{inv}})\ge 2^{n-1}-2^{(1-2^{-r})n-2^{-(r-1)}}-O(2^{\frac{n}{2}})$.
\end{theorem}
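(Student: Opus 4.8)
The plan is to mirror the proof of Theorem \ref{theorem_4}, combining Carlet's recursive descent with the special structure of the iterated derivatives of the inverse function. Carlet's inequality bounds $\nl_r(f_{\mathrm{inv}})$ in terms of the $(r-1)$-th order nonlinearities of its first derivatives, and iterating it $r-1$ times reduces the task to bounding the ordinary (first-order) nonlinearity, i.e.\ the maximum Walsh coefficient, of the $(r-1)$-fold derivatives $D_{a_1}\cdots D_{a_{r-1}}f_{\mathrm{inv}}$, over all direction tuples $a_1,\dots,a_{r-1}$. The descent is what introduces the nested square roots and the $2^n$ factors (via Cauchy--Schwarz over the directions) that, after tracking exponents, produce the leading deficiency $2^{(1-2^{-r})n}$ already claimed by Carlet.

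The key structural step I would carry out first is to identify these iterated derivatives explicitly. Writing $a_I=\sum_{i\in I}a_i$ and $V=\langle a_1,\dots,a_{r-1}\rangle$, one has, away from the poles $x\in V$,
\begin{equation*}
D_{a_1}\cdots D_{a_{r-1}}f_{\mathrm{inv}}(x)=\tr_n\!\Big(\sum_{I\subseteq\{1,\dots,r-1\}}\frac{1}{x+a_I}\Big)=\tr_n\!\Big(\frac{c_0}{L_V(x)}\Big),
\end{equation*}
where $L_V(x)=\prod_{s\in V}(x+s)$ is the linearized (additive) polynomial vanishing on $V$ and $c_0$ is its coefficient of $x$. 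The second equality is the logarithmic-derivative identity $\sum_{s\in V}\frac{1}{x+s}=L_V'(x)/L_V(x)$ together with the characteristic-$2$ fact that $L_V'(x)=c_0$ is constant. This collapse of a rational function with $2^{r-1}$ poles to a single linearized denominator is the analogue of the quadratic reduction available for Theorem \ref{theorem_4}, and it is what makes $f_{\mathrm{inv}}$ tractable despite its high algebraic degree.

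Next I would analyse the Walsh transform of $g=\tr_n(c_0/L_V(x))$. Since $g$ factors through $L_V$ it is invariant under translation by $V=\ker L_V$, so its transform is supported on $V^\perp$; for $v=L_V^{*}(u)\in V^\perp$ one gets $W_g(v)=2^{\dim V}\sum_{w\in\operatorname{Im}L_V}(-1)^{\tr(c_0/w+uw)}$, a Kloosterman-type sum over the subspace $\operatorname{Im}L_V$. Using the trace identity $\tr_n(uL_V(t))=\tr_n(\tilde u\,t)$ to linearize the non-polar part and pulling the sum back to the full field, the Weil bound for additive character sums of rational functions controls it (one checks there is no wild ramification at infinity), giving a per-derivative estimate $|W_g(v)|=O(2^{r}\,2^{n/2})$ and, in the first-derivative case, the sharp Kloosterman bound $|K|\le 2^{n/2+1}$ that seeds the recursion and is absorbed into the additive $O(2^{n/2})$ term. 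Feeding these estimates into the $r-1$ nested square roots recovers Carlet's $2^{(1-2^{-r})n}$; the refined exponent $-2^{-(r-1)}$ then comes out of a careful accounting of the multiplicative constants as they propagate through the descent.

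The main obstacle will be exactly this bookkeeping: pinning down the constant $-2^{-(r-1)}$ in the exponent rather than merely the leading order $2^{(1-2^{-r})n}$. It requires (i) handling the degenerate direction tuples, where $a_1,\dots,a_{r-1}$ are linearly dependent so that $\dim V<r-1$ and both the pole count and the support $V^\perp$ shrink --- these must be shown not to dominate the aggregate sum over tuples; (ii) treating the finitely many pole points $x\in V$, where the rational formula for the derivative fails, as a bounded additive correction to each character sum; and (iii) verifying that $c_0/L_V(x)+vx$ is never of the degenerate form $h^2+h+\mathrm{const}$, so that the Weil bound genuinely applies. Summing the character-sum estimates over all direction tuples and feeding the aggregate, rather than a single worst case, through the Cauchy--Schwarz steps of Carlet's recursion is what converts the per-tuple bound into the stated exponent.
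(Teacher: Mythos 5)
Your route is genuinely different from the paper's, and the difference matters for the constant in the exponent. The paper never descends to the iterated derivatives of $f_{\mathrm{inv}}$ at all: it imports from \cite{Car08} the self-similar recursion $l_r=\sqrt{(2^n-1)(l_{r-1}+1)+2^{n-2}}$ (Proposition \ref{nlr_lower_bound}), which exploits the fact that a \emph{single} derivative $D_af_{\mathrm{inv}}(ax)$ is, away from poles, the inverse function precomposed with the $2$-to-$1$ linearized map $x\mapsto x^2+x$ --- so each level of Carlet's descent is controlled by the $(r-1)$-th order nonlinearity of $f_{\mathrm{inv}}$ \emph{itself}, not by the first-order nonlinearity of an $(r-1)$-fold derivative. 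The only analytic input is Carlet's $\nl_3$ bound as the base case; the entire remaining content is Lemma \ref{l_r_bound}, an induction that solves the recursion and extracts the exponent $-2^{-(r-1)}$ via the elementary inequality of Lemma \ref{expansion_formula_multiple}. Your plan instead mirrors the proof of Theorem \ref{theorem_4}: full descent to the $(r-1)$-th derivatives, the (correct and genuinely nice) identity $\sum_{s\in V}1/(x+s)=c_0/L_V(x)$ with $L_V$ linearized, and Kloosterman-type sums over the subspace $\operatorname{Im}L_V$.

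The concrete gap is the one you flag but do not resolve. Expanding the indicator of $\operatorname{Im}L_V$ in additive characters turns your subspace sum into an average of $2^{\dim V}$ complete Kloosterman sums, so after the prefactor $2^{\dim V}$ the best you get per nondegenerate tuple is $|W_g(v)|\le 2^{\frac{n}{2}+r}+O(2^r)$, i.e.\ $\nl(D_{a_{r-1}}\cdots D_{a_1}f_{\mathrm{inv}})\ge 2^{n-1}-2^{\frac{n+2r-2}{2}}$ --- exactly the shape of Lemma \ref{nonliearity_derivative}. Pushing that through the $r-1$ nested square roots of Proposition \ref{high_nonlinearity} reproduces the Theorem \ref{theorem_4} exponent $(1-2^{-r})n+\frac{r}{2^{r-1}}-1$, not the claimed $(1-2^{-r})n-2^{-(r-1)}$. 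These agree at $r=3$ and yours is actually \emph{stronger} for $r\ge 4$, but for $r=2$ your exponent is $\frac{3}{4}n$ versus the claimed $\frac{3}{4}n-\frac{1}{2}$, so the statement as written is not recovered; the extra factor $2^{r-1}$ from the subspace expansion is precisely what the paper's self-referential recursion avoids, since there the seed is the sharp $\nl_3$ bound and each level loses only a fixed $\sqrt{2^n\cdot(\,\cdot\,)}$. To salvage your route for the stated constant you would need either cancellation among the $2^{\dim V}$ Kloosterman sums (which is not available in general) or a switch to the paper's recursion for the small-$r$ regime. The degenerate tuples and pole corrections you list are handled correctly in spirit and are not the issue.
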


\vspace{0.2cm}
\textbf{Techniques.} Our proof of the lower bounds follows from Carlet's methods \cite{Car08}. That is, to lower bound the $r$-th order nonlinearity, we estimate the (first-order) nonlinearity of its $(r-1)$-th order derivatives. Taking a (nontrivial) $(r-1)$-th order derivative, our target function becomes a quadratic function. Then, we rely on a result by Canteaut \emph{et al.} \cite{CCK08} that relates the nonlinearity of a quadratic function with the dimension of its \emph{linear kernel}. As such, the problem essentially reduces to  estimating \emph{the number of roots} of certain equations over the finite field $\mathbb{F}_{2^n}$.

As for Theorem \ref{theorem_1}, we use the following ingredients to estimate the number of roots of a certain equation (associated with the linear kernel): we factor the equation into irreducible ones; we apply the known results concerning the number of roots of \emph{$q$-polynomials}, and the number of roots of \emph{quadratic equations} and \emph{quartic equations} (over finite fields); we use the Weil bound to estimate the weight of trace monomial functions.
As for Theorem \ref{theorem_2}, our proof is similar to \cite{YT20}, and the lower bounds are exactly the same. (The target function, which has a simple form and good behavior, is somehow missed by previous works.)

As for the third-order nonlinearity lower bound, i.e., Theorem \ref{theorem_3}, our strategy is, again, to estimate the number of roots of a certain equation (associated with the linear kernel). We factor the equation into irreducible ones, and analyze the number of roots for each component separately. The proof relies on the known results about the number of roots of $q$-polynomials, and \emph{quartic equations} (over finite fields). A critical step is to estimate the algebraic degree of a (trace) equation over $\mathbb{F}_{2^n}$. (With the algebraic degree known, we can apply the well-known fact that the number of roots is bounded by the degree.)

In Theorem \ref{theorem_4}, we study the $r$-th order nonlinearity of functions $\tr_n(x^{2^{r+1}-1})$, a natural generalization of $\tr_n(x^{7})$ and $\tr_n(x^{15})$. We prove a lower bound on the (first-order) nonlinearity of all nontrivial $(r-1)$-th order derivatives of the target function, and the $r$-th order nonlinearity lower bound follows from the methods articulated by \cite{Car08}. The equation (associated with the linear kernel for the derivative) turns out to have a nice explicit form, whose degree is at most $2^{2r}$. Thus, the nonlinearity bound follows from a result in \cite{CCK08} (that relates the dimension of the kernel with the nonlinearity for any quadratic function).

The proof of Theorem \ref{thm:inverse_lb} closely follows from \cite{Car08}, who already claimed that the lower bound is asymptotically  $2^{n-1}-2^{(1-2^{-r})n}$. We credit the result to Carlet, who obviously can, but did not have the occasion to write down the details.

\section{Preliminary}
Let $\mathbb{F}_2$ be the finite field of size 2. Let $\mathcal{B}_n$ denote the set of all $n$-variable Boolean functions. Any $n$-variable Boolean function can be represented as a unique polynomial in $\mathbb{F}_2[x_1,x_2,\ldots,x_n]/\{x_i^2+x_i\}_{1\le i\le n}$, that is,
\[
f(x_1,x_2,\ldots,x_n)=\sum_{S\subseteq [n]}c_S \prod_{i\in S}x_i,
\]
which is called \emph{algebraic normal form} (ANF). The \emph{algebraic degree} of $f$, denoted by $\mathrm{deg}(f)$, is the number of variables in the highest order term with nonzero coefficient. 

The \emph{Hamming weight} of a vector $x \in \mathbb{F}_2^n$, denoted by $\wt(x)$, is the number of nonzero coordinates. The \emph{weight} of a Boolean function $f$, denoted by $\wt(f)$, is the cardinality of the set $\{ x \in \mathbb{F}_2^n : f(x) = 1\}$. The \emph{distance} between two functions $f$ and $g$ is the cardinality of the set $\{ x \in \mathbb{F}_2^n : f(x) \not= g(x)\}$, denoted by $\mathrm{d}(f, g)$.

Let $\mathbb{F}_{2^n}$ be the finite field of size $2^n$. The \emph{absolute trace function} from $\mathbb{F}_{2^n}$ to $\mathbb{F}_2$ can be defined as
\[
\tr_n(x)=x+x^2+x^{2^2}+\ldots+x^{2^{n-1}},
\]
where $x\in \mathbb{F}_{2^n}$. Let $ K=\mathbb{F}_{2^r}$ be a subfield of $ L=\mathbb{F}_{2^n}$. More generally, the trace function defined with respect to the field extension $L/K$ is
\[
\mathrm{tr}_{L/K}(\alpha)=\alpha+{\alpha}^{2^r}+\ldots+{\alpha}^{{2^{r\cdot(\frac{n}{r}-1)}}},
\]
where $\alpha\in \mathbb{F}_{2^n}$. It is well known that (for instance, Theorem 2.23 in \cite{LN97}) the trace function satisfies the following properties
\begin{itemize}
    \item $\tr_{L/K}(x+y)=\tr_{L/K}(x)+\tr_{L/K}(y)$ for any $x,y\in \mathbb{F}_{2^n}$.
    \item $\tr_{L/K}(x^2)=\tr_{L/K}(x)$ for any $x \in \mathbb{F}_{2^n}$.
    \item For any $\alpha \in \mathbb{F}_{2^r}$, there are exactly $2^{n-r}$ elements $\beta$ with $\mathrm{tr}_{L/K}(\beta) = \alpha$.
\end{itemize}

Any $n$-variable Boolean function can be written as $f(x)=\tr_n(g(x))$, where $g(x)=\sum_{i=0}^{2^n-1} \beta_i x^i$ is a mapping from $\mathbb{F}_{2^n}$ to $\mathbb{F}_{2^n}$ for $\beta_i\in \mathbb{F}_{2^n}$. A trace \emph{monomial} Boolean function is of the form $\mathrm{tr}_n(\lambda x^d)$ where $\lambda \in \mathbb{F}_{2^n}^*$ and $d$ is an integer. It is well known that the degree of the trace monomial function $\mathrm{tr}_n(\lambda x^d)$ is the Hamming weight of the binary representation of $d$ \cite{Car21}.

For $1 \le r \le n$, the $r$-th order nonlinearity of an $n$-variable Boolean function $f$, denoted by $\mathrm{nl}_r(f)$, is the minimum distance between $f$ and functions with degree at most $r$, i.e.,
\[
\mathrm{nl}_r(f) = \min_{\deg(g) \le r} \mathrm{d}(f, g).
\]
We denote by $\mathrm{nl}(f)$ the first-order nonlinearity of $f$. 

The \emph{Walsh transform} of $f\in \mathcal{B}_n$ at $\alpha \in \mathbb{F}_{2^n}$ is defined as
\[
W_f(\alpha)=\sum_{x\in \mathbb{F}_{2^n}}(-1)^{f(x)+\tr_n(\alpha x)}.
\]
The \emph{Walsh spectrum} of $f$ is the multi-set consisting of the values $W_f(\alpha)$ for all $\alpha \in \mathbb{F}_{2^n}$. The nonlinearity of any Boolean function in $n$ variable can be calculated as 
\begin{equation}\label{nonlinearity_walsh}
\mathrm{nl}(f)= 2^{n-1}-\frac{1}{2}\max_{\alpha \in \mathbb{F}_{2^n}} |W_f(\alpha)|.
\end{equation}

We denote by $D_{a}f$ the \emph{derivative} of the $f\in \mathcal{B}_n$ with respect to $a \in \mathbb{F}_{2^n}$, which is defined to be 
\[
D_af(x)=f(x)+f(x+a).
\]
The \emph{$k$-th order derivative} of $f$, denoted by $D_{a_1}D_{a_2}\ldots D_{a_k}f$, is obtained by applying such derivation successively to the function $f$ with respect to $a_1,a_2,\ldots,a_k \in \mathbb{F}_{2^n}$.

In \cite{Car08}, Carlet provided a method to lower bound the $r$-th order nonlinearity relying on the $(r-1)$-th order nonlinearity of all its derivatives. 

\begin{proposition}\label{lower_non} \cite{Car08} 
     Let $f$ be any $n$-variable Boolean function and $r$ a positive integer smaller than $n$. We have
     \[
        \nl_r(f)\ge 2^{n-1}-\frac{1}{2}\sqrt{2^{2n}-2\sum_{a\in \mathbb{F}_{2^n}}\nl_{r-1}(D_af)}.
    \]
\end{proposition}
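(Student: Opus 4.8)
The plan is to reduce everything to the correlation form of nonlinearity and then apply the classical ``squaring trick'' of \cite{Car08}. First I would record the elementary identity relating Hamming distance and correlation: for any two $n$-variable Boolean functions $f$ and $g$ one has $\mathrm{d}(f,g)=2^{n-1}-\frac12\sum_{x\in\mathbb{F}_{2^n}}(-1)^{f(x)+g(x)}$, since $\sum_x(-1)^{f(x)+g(x)}=2^n-2\,\mathrm{d}(f,g)$. Minimizing over all $g$ with $\deg g\le r$ converts the definition of $\nl_r$ into the dual expression
\[
\nl_r(f)=2^{n-1}-\frac{1}{2}\max_{\deg g\le r}\sum_{x}(-1)^{f(x)+g(x)},
\]
so it suffices to upper bound the quantity $V:=\max_{\deg g\le r}\sum_x(-1)^{f(x)+g(x)}$ by $\sqrt{2^{2n}-2\sum_a\nl_{r-1}(D_af)}$.

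Next, for an arbitrary fixed $g$ of degree at most $r$, I would square the inner correlation sum, open it as a double sum over $(x,y)$, and substitute $y=x+a$:
\[
\left(\sum_x(-1)^{f(x)+g(x)}\right)^2=\sum_{a}\sum_x(-1)^{(f(x)+f(x+a))+(g(x)+g(x+a))}=\sum_a\sum_x(-1)^{D_af(x)+D_ag(x)}.
\]
The pivotal structural fact is that differentiation strictly lowers the algebraic degree: since $\deg g\le r$, each derivative satisfies $\deg(D_ag)\le r-1$. Hence $D_ag$ is an admissible approximant for $D_af$ in the definition of $\nl_{r-1}$, and by the correlation form of $(r-1)$-th order nonlinearity applied to $D_af$,
\[
\sum_x(-1)^{D_af(x)+D_ag(x)}\le\max_{\deg h\le r-1}\sum_x(-1)^{D_af(x)+h(x)}=2^n-2\,\nl_{r-1}(D_af).
\]

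Summing this bound over all $a\in\mathbb{F}_{2^n}$ (there are $2^n$ of them, so the constant terms contribute $2^{2n}$) yields, for every $g$ of degree at most $r$, the inequality $\left(\sum_x(-1)^{f(x)+g(x)}\right)^2\le 2^{2n}-2\sum_{a\in\mathbb{F}_{2^n}}\nl_{r-1}(D_af)$. Taking the maximum over such $g$ gives $V^2\le 2^{2n}-2\sum_a\nl_{r-1}(D_af)$; since the constant choices $g\equiv 0$ and $g\equiv 1$ show $V\ge\bigl|\sum_x(-1)^{f(x)}\bigr|\ge0$, extracting the nonnegative square root and substituting into the dual expression for $\nl_r(f)$ delivers the claimed inequality.

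The part requiring the most care is the degree-drop claim $\deg(D_ag)\le\deg(g)-1$ together with the precise bookkeeping so that the direction of each inequality is correct: one must note that $D_ag$ ranges over only a \emph{subset} of the degree-$\le(r-1)$ functions, which is exactly why replacing it by the maximizing $h$ produces an upper bound, and one must confirm $V\ge0$ so that squaring and taking square roots is reversible. The $a=0$ term is harmless since $D_0f=0$ contributes $\nl_{r-1}(0)=0$. These are the only genuinely delicate points; the remaining manipulations are routine algebra.
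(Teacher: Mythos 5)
Your proof is correct and follows essentially the same route as the source: the paper states this proposition without proof, citing \cite{Car08}, and your argument is precisely Carlet's squaring trick — the correlation form $\mathrm{d}(f,g)=2^{n-1}-\frac12\sum_x(-1)^{f(x)+g(x)}$, the expansion of the squared correlation over derivatives, and the degree drop $\deg(D_ag)\le r-1$ making $D_ag$ an admissible approximant for $D_af$. Your explicit checks that $V\ge 0$ (via constant approximants) and that $a=0$ contributes harmlessly are exactly the delicate points of the original argument, and you handle both correctly.
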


The \emph{quadratic functions} are the set of the Boolean functions of algebraic degree at most 2. The \emph{linear kernel} is the central object for the calculation of the nonlinearity of quadratic functions.

\begin{definition}\label{kernel} \cite{CCK08}
    Let $q:\mathbb{F}_{2^n}\to \mathbb{F}_2$ be a quadratic function. The linear kernel of $q$, denoted by $\mathcal{E}_q$, can be defined as
    \[
    \mathcal{E}_q=\mathcal{E}_0\cup \mathcal{E}_1
    \]
    where 
    \[
    \mathcal{E}_0=\{b\in \mathbb{F}_{2^n}\mid D_bq=q(x)+q(x+b)=0,\ \mbox{for\ all\ $x\in \mathbb{F}_{2^n}$}\},
    \]
    \[
    \mathcal{E}_1=\{b\in \mathbb{F}_{2^n}\mid D_bq=q(x)+q(x+b)=1,\ \mbox{for\ all\ $x\in \mathbb{F}_{2^n}$}\}.
    \]
\end{definition}
The \emph{bilinear form} associated with a quadratic function $q$ is defined as 
\begin{equation*}
    B(x,y)=q(0)+q(x)+q(y)+q(x+y).
\end{equation*}

   The \emph{linear kernel} $\mathcal{E}_q$ of a quadratic function $q$ is the \emph{linear kernel} of its associated bilinear form $B(x,y)$ by definition, that is
\[
\mathcal{E}_q=\{x\in\mathbb{F}_{2^n}\mid B(x,y)=0\ \mbox{for\ any\ $y\in \mathbb{F}_{2^n}$}\}.
\] 

\begin{lemma}\cite{CCK08} \label{Walsh_spec_quar}
Let $q:\mathbb{F}_{2^n}\to \mathbb{F}_2$ be an $n$-variable Boolean function of degree at most 2. Then the Walsh spectrum of $q$ depends on the dimension $k$ of the \emph{linear kernel} of $q$. Moreover, for any $\mu \in \mathbb{F}_{2^n}$, we have
\begin{table}[H]
	\begin{center}
 \renewcommand\arraystretch{1.5}
		\begin{tabular}{ c|c} 
			\hline 
			$W_q(\mu)$ & The number of $u\in \mathbb{F}_{2^n}$ \\ 
			\hline
			0 & $2^n-2^{n-k}$   \\ 
			\hline
			$2^{\frac{n+k}{2}}$  & $2^{n-k-1}+(-1)^{q(0)}2^{\frac{n-k-2}{2}}$  \\ 
			\hline
			$-2^{\frac{n+k}{2}}$ & $2^{n-k-1}-(-1)^{q(0)}2^{\frac{n-k-2}{2}}$ \\ 
			\hline
		\end{tabular}
	\end{center}
\end{table}

\end{lemma}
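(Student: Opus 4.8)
The plan is to determine the Walsh spectrum entirely from the single invariant $k=\dim \mathcal{E}_q$ by first pinning down the set of values $W_q(\mu)$ can attain through a squaring argument, and then fixing the three multiplicities by elementary counting. Throughout I work with the associated bilinear form $B(x,y)=q(0)+q(x)+q(y)+q(x+y)$, which is symmetric and, since $B(x,x)=q(0)+q(x)+q(x)+q(x+x)=0$ in characteristic two, in fact alternating; its radical is exactly $\mathcal{E}_q$, so the rank of $B$ equals $n-k$. Because an alternating form over $\mathbb{F}_2$ has even rank, this already forces $n-k$ to be even, which is what makes the quantities $2^{\frac{n+k}{2}}$ and $2^{\frac{n-k-2}{2}}$ in the statement well defined; I would flag this parity at the outset.

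First I would record the identity $D_aq(x)=q(x)+q(x+a)=q(0)+q(a)+B(x,a)$, immediate from the definition of $B$. Squaring the Walsh transform and substituting $y=x+a$ gives
\[
W_q(\mu)^2=\sum_{a\in\mathbb{F}_{2^n}}(-1)^{\tr_n(\mu a)}\sum_{x\in\mathbb{F}_{2^n}}(-1)^{D_aq(x)}=2^n\sum_{a\in\mathcal{E}_q}(-1)^{q(0)+q(a)+\tr_n(\mu a)},
\]
where the inner sum over $x$ vanishes unless $B(\cdot,a)\equiv 0$, i.e. unless $a\in\mathcal{E}_q$, in which case it equals $2^n$. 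The key structural observation is that $q$ restricted to $\mathcal{E}_q$ is affine: for $a,b\in\mathcal{E}_q$ we have $B(a,b)=0$, hence $q(a+b)=q(0)+q(a)+q(b)$, so $a\mapsto q(a)+q(0)$ is a linear functional $\ell$ on $\mathcal{E}_q$. The exponent $q(a)+\tr_n(\mu a)$ then becomes $q(0)+\ell(a)+\tr_n(\mu a)$, a linear functional of $a$ on $\mathcal{E}_q$, so the remaining sum is $2^k$ if this functional vanishes identically on $\mathcal{E}_q$ and $0$ otherwise. This yields $W_q(\mu)^2\in\{0,\,2^{n+k}\}$, so every Walsh value is $0$ or $\pm 2^{\frac{n+k}{2}}$.

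Next I would count multiplicities. The restriction map $\mu\mapsto \tr_n(\mu\,\cdot)|_{\mathcal{E}_q}$ is a linear map from $\mathbb{F}_{2^n}$ onto the dual of $\mathcal{E}_q$ (surjective since the trace form is non-degenerate), so every linear functional on $\mathcal{E}_q$ is realized by exactly $2^{n-k}$ choices of $\mu$. A value $W_q(\mu)$ is nonzero precisely when $\tr_n(\mu\,\cdot)|_{\mathcal{E}_q}=\ell$, a single prescribed functional; hence there are exactly $2^{n-k}$ values of $\mu$ with $W_q(\mu)\ne 0$, giving $2^n-2^{n-k}$ zeros as claimed. To split the nonzero entries, let $N_+,N_-$ be the counts of $+2^{\frac{n+k}{2}}$ and $-2^{\frac{n+k}{2}}$, so $N_++N_-=2^{n-k}$. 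I would combine this with the identity $\sum_{\mu}W_q(\mu)=2^n(-1)^{q(0)}$ (swap the order of summation and use character orthogonality), which reads $(N_+-N_-)2^{\frac{n+k}{2}}=2^n(-1)^{q(0)}$, i.e. $N_+-N_-=(-1)^{q(0)}2^{\frac{n-k}{2}}$. Solving the two linear relations gives $N_{\pm}=2^{n-k-1}\pm(-1)^{q(0)}2^{\frac{n-k-2}{2}}$, matching the table.

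I do not expect a serious obstacle; the argument is self-contained once the value set is established. The step requiring the most care is the reduction of $W_q(\mu)^2$ to a sum over $\mathcal{E}_q$ together with the observation that $q|_{\mathcal{E}_q}$ is affine, since the sign bookkeeping through the factors $(-1)^{q(0)}$ must be tracked precisely for the final formulas to exhibit the correct $(-1)^{q(0)}$ dependence. The only other point to state explicitly is the evenness of $n-k$, which follows from $B$ being alternating and is needed for the closed forms in the table to be integers.
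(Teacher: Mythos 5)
Your proof is correct, and there is nothing in the paper to compare it against: the paper imports this lemma from \cite{CCK08} as a black box, with no proof given. Your argument is the standard self-contained one, and every step checks out. The identification of $\mathcal{E}_q$ with the radical of the bilinear form $B$ is exactly the paper's own remark following Definition \ref{kernel}, and your parity observation (rank of an alternating form is even, so $n-k$ is even) is precisely the paper's Lemma \ref{parity_dim}; it is good that you flagged it, since the table's entries are only integers because of it. The squaring step with the substitution $y=x+a$, the collapse of the inner sum to the condition $a\in\mathcal{E}_q$ (which uses bilinearity of $B(\cdot,a)$ in $x$, worth stating explicitly), the affineness of $q$ on $\mathcal{E}_q$, the fibre count $2^{n-k}$ for the restriction map $\mu\mapsto \tr_n(\mu\,\cdot)|_{\mathcal{E}_q}$, and the two linear relations $N_++N_-=2^{n-k}$ and $\bigl(N_+-N_-\bigr)2^{\frac{n+k}{2}}=2^n(-1)^{q(0)}$ together give exactly the table. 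One small bookkeeping slip to fix in the prose: substituting $q(a)=q(0)+\ell(a)$ into the full exponent $q(0)+q(a)+\tr_n(\mu a)$ of your displayed sum, the two copies of $q(0)$ cancel, leaving the purely linear exponent $\ell(a)+\tr_n(\mu a)$; your sentence describing the exponent as $q(0)+\ell(a)+\tr_n(\mu a)$ double-counts $q(0)$ and momentarily makes it affine rather than linear. Your conclusion that the inner sum is $2^k$ or $0$ is nevertheless exactly right (and the sign must come out positive, consistently with $W_q(\mu)^2\ge 0$), so the slip is cosmetic. Incidentally, your squaring reduction is the same mechanism the paper exploits globally via Carlet's Proposition \ref{lower_non}, so your proof fits naturally into the paper's methodology.
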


\begin{lemma} \cite{CCK08} \label{parity_dim}
Let $V$ be a vector space over a field $\mathbb{F}_{2^n}$ and $Q:V\to \mathbb{F}_{2^n}$ be a quadratic form. Then the dimension of $V$ and the dimension of the kernel of $Q$ have the same parity.
\end{lemma}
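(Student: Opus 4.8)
The plan is to reduce the parity claim to the classical fact that an \emph{alternating} bilinear form over a field has even rank. First I would attach to $Q$ its associated bilinear form $B(x,y) = Q(0) + Q(x) + Q(y) + Q(x+y)$, exactly as defined in the excerpt, and record that the kernel of $Q$ coincides with the radical $\mathrm{rad}(B) = \{x \in V : B(x,y) = 0 \text{ for all } y \in V\}$. Since $\mathbb{F}_{2^n}$ has characteristic $2$, a direct computation using $x + x = 0$ gives $B(x,x) = Q(0) + 2Q(x) + Q(x+x) = 0$, so $B$ is alternating; it is symmetric and $\mathbb{F}_{2^n}$-bilinear by construction.

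Next I would fix a basis of $V$ and let $M$ be the Gram matrix of $B$ in that basis, an alternating matrix (symmetric with zero diagonal). Since $\mathrm{rad}(B)$ is the kernel of the linear map $x \mapsto B(x,\cdot)$ represented by $M$, we have $\dim \ker Q = \dim \mathrm{rad}(B) = \dim V - \mathrm{rank}(M)$. The heart of the argument is therefore to show $\mathrm{rank}(M)$ is even, which I would prove by induction on $\dim V$ via symplectic reduction. If $B \equiv 0$ then $\mathrm{rank}(M) = 0$; otherwise pick $u, v$ with $B(u,v) \neq 0$, rescale so that $B(u,v) = 1$, and check that the plane $P = \langle u, v \rangle$ is nondegenerate. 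Then $V = P \oplus P^{\perp}$, the restriction $B|_{P^{\perp}}$ is again alternating on a space of dimension $\dim V - 2$, and $\mathrm{rank}(M) = 2 + \mathrm{rank}(B|_{P^{\perp}})$. By induction the latter is even, hence so is $\mathrm{rank}(M)$.

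Combining the two steps, $\dim V - \dim \ker Q = \mathrm{rank}(M)$ is even, which is precisely the statement that $\dim V$ and $\dim \ker Q$ have the same parity.

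The step I expect to demand the most care is the symplectic reduction — specifically, verifying that $P$ is nondegenerate and that $V = P \oplus P^{\perp}$ with $B|_{P^{\perp}}$ still alternating. In characteristic $2$ this is where one genuinely needs the \emph{alternating} property (not merely symmetry, which coincides with skew-symmetry here) to guarantee $P \cap P^{\perp} = \{0\}$ and that the hyperbolic plane splits off cleanly; everything else is routine linear algebra.
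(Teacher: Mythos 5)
Your proof is correct. Note first that the paper itself contains no proof of this lemma --- it is quoted verbatim from \cite{CCK08} --- so there is no internal argument to compare against; what you give (identify the kernel of $Q$ with the radical $\mathrm{rad}(B)$ of the polar form, observe that $B$ is alternating in characteristic $2$, and prove that an alternating form has even rank by inductively splitting off hyperbolic planes) is the standard proof and surely the intended one. Two small remarks. The one step you wave through is that $B$ is bilinear ``by construction'': for an arbitrary function $Q$ this is false, but it holds here either because bilinearity of the polar form is part of the definition of a quadratic form, or, in the paper's actual application (a Boolean function $q$ of algebraic degree at most $2$ on $\mathbb{F}_{2^n}$ viewed as an $n$-dimensional $\mathbb{F}_2$-space), because $B(x,y)=D_xD_yq(0)$ and all second derivatives of a quadratic function are constant, which yields bi-additivity and hence $\mathbb{F}_2$-bilinearity; likewise your opening identification of the kernel with $\mathrm{rad}(B)$ is, in the paper's framework, true by definition (the paper defines $\mathcal{E}_q$ as the radical of $B$, after noting it equals $\mathcal{E}_0\cup\mathcal{E}_1$). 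Second, your argument is insensitive to the base field, which is a genuine convenience: the lemma as stated ($Q\colon V\to\mathbb{F}_{2^n}$ with $V$ an $\mathbb{F}_{2^n}$-space) does not quite match the setting in which it is used ($q\colon\mathbb{F}_{2^n}\to\mathbb{F}_2$ over $\mathbb{F}_2$), and the even-rank fact you prove covers both readings. Your closing caution is also well placed: in characteristic $2$, symmetry alone does not give even rank (e.g.\ $B(x,y)=xy$ on $\mathbb{F}_2$ is symmetric of rank $1$), so the explicit verification $B(x,x)=0$ is exactly the point that makes the symplectic reduction go through.
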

That is, if $f:\mathbb{F}_{2^n}\to \mathbb{F}_2$ is a quadratic function, then the parity of the dimension of its linear kernel is the same as the parity of $n$.

A \emph{q-polynomial} over $\mathbb{F}_{q^n}$ is the polynomial in the form  
\[
P(x)=\sum_{i=0}^{n-1}a_ix^{q^i},
\] 
where the coefficients $a_i \in \mathbb{F}_{q^n}$. It is a \emph{linearized polynomial} which satisfies the following properties \cite[page 108]{LN97}:
\begin{equation}\label{proper_1}
         P(b+c)=P(b)+P(c),\ \ \ \text{for\ all\ $b,c\in \mathbb{F}_{q^n}$}
\end{equation}

\begin{equation}\label{proper_2}
  P(tb)=tP(b),\ \ \ \text{for\ all\ $t\in \mathbb{F}_q$,\ all\ $b\in \mathbb{F}_{q^n}$}.
\end{equation}
Equation \eqref{proper_1} follows from the fact that $(a+b)^{q^i}=a^{q^i}+b^{q^i}$ for $a,b\in \mathbb{F}_{q^n}$ and $i\ge 0$ \cite[Theorem 1.46]{LN97}; equation \eqref{proper_2} follows from that $t^{q^i}=t$ for $t\in \mathbb{F}_q$ and any $i\ge 0$. Hence, if $\mathbb{F}_{q^n}$ is regarded as a vector space over $\mathbb{F}_q$, then a $q$-polynomial is a linear map of this vector space.

\section{Second-order nonlinearity}

In this section, we deduce that the lower bound on the second-order nonlinearity for two classes of trace monomial Boolean functions in the form $\tr_n(x^7)$ and $\tr_n(x^{2^{r}+3})$, where $n=2r$.

\subsection{The functions $\tr_n(x^7)$}

We will lower bound the second-order nonlinearity of the monomial cubic functions $\tr_n(x^7)$. The algebraic degree of the derivatives of $\tr_n(x^7)$ is at most 2 since the degree of $\tr_n(x^7)$ is exactly 3. By using Carlet's method (i.e., Proposition \ref{lower_non}), our goal is to calculate the nonlinearities of all its derivatives. 

\begin{proposition}\label{equivalent}
Let $f:\mathbb{F}_{2^n}\to \mathbb{F}_2$ be a quadratic function. For any $a \in \mathbb{F}_{2^n}^*$, we have 
\[
\mathcal{E}_f=\mathcal{E}_{f(ax)},
\]
where $\mathcal{E}_f$ denotes the linear kernel of the $f$ and $\mathcal{E}_{f(ax)}$ denotes the linear kernel of the $f(ax)$.
\end{proposition}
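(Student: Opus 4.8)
The plan is to establish the set equality by proving both inclusions $\mathcal{E}_f\subseteq\mathcal{E}_{f(ax)}$ and $\mathcal{E}_{f(ax)}\subseteq\mathcal{E}_f$ directly from the derivative description of the linear kernel in Definition~\ref{kernel}: an element $b$ belongs to the kernel of a quadratic function $g$ precisely when its first-order derivative $D_bg$ is a constant function (equivalently, when $b$ lies in the radical of the associated bilinear form $B_g$). Setting $g(x)=f(ax)$, the whole proposition reduces to comparing, for each fixed $b\in\mathbb{F}_{2^n}$, whether $D_bf$ and $D_bg$ are constant, so the first step is to expand $D_bg$ explicitly.

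The key computation uses that multiplication by $a$ is additive on $\mathbb{F}_{2^n}$: one gets $D_bg(x)=g(x+b)+g(x)=f(ax+ab)+f(ax)=(D_{ab}f)(ax)$, and the same manipulation at the level of bilinear forms yields $B_g(x,y)=B_f(ax,ay)$ as a cross-check. Since $a\neq 0$, the map $x\mapsto ax$ is a bijection of $\mathbb{F}_{2^n}$, so $x\mapsto(D_{ab}f)(ax)$ is constant exactly when $D_{ab}f$ is constant, and likewise $B_g(x,\cdot)\equiv 0$ exactly when $B_f(ax,\cdot)\equiv 0$. This produces the membership equivalence $b\in\mathcal{E}_{f(ax)}\iff ab\in\mathcal{E}_f$, which I will use as the engine driving both inclusions.

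The hard part will be converting this membership equivalence into the literal set equality: the computation naturally relabels the kernel by $b\mapsto ab$, so to finish one must show that this scaling fixes $\mathcal{E}_f$ as a set, i.e.\ that $\mathcal{E}_f$ is stable under multiplication by $a$. My plan is to obtain this from the explicit description of the radical as the zero set of a linearized $q$-polynomial (cf.\ Equations~\eqref{proper_1}--\eqref{proper_2}), which already makes $\mathcal{E}_f$ an $\mathbb{F}_2$-subspace, and then to upgrade that $\mathbb{F}_2$-linearity to invariance under the scalar $a$ occurring in the quadratic functions at hand. Once that invariance is secured, the chain $b\in\mathcal{E}_{f(ax)}\iff ab\in\mathcal{E}_f\iff b\in\mathcal{E}_f$ closes both inclusions and gives $\mathcal{E}_f=\mathcal{E}_{f(ax)}$; I expect this scaling-invariance step, rather than the derivative computation, to be where essentially all the work lies.
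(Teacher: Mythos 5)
Your derivative computation is correct, and in fact more careful than the paper's own proof: with $g(x)=f(ax)$ one has $g(x+b)=f(ax+ab)$, hence $D_bg(x)=(D_{ab}f)(ax)$ and, at the level of bilinear forms, $B_g(x,y)=B_f(ax,ay)$; since $x\mapsto ax$ is a bijection this yields exactly your membership equivalence $b\in\mathcal{E}_{f(ax)}\iff ab\in\mathcal{E}_f$, i.e.\ $\mathcal{E}_{f(ax)}=a^{-1}\mathcal{E}_f$. The genuine gap is the final step you defer: the scaling invariance $a\,\mathcal{E}_f=\mathcal{E}_f$ cannot be proved, because it is false in general. The kernel is only an $\mathbb{F}_2$-subspace --- property \eqref{proper_2} gives closure under scalars from $\mathbb{F}_2=\{0,1\}$ and nothing more --- and there is no route from $\mathbb{F}_2$-linearity to invariance under an arbitrary $a\in\mathbb{F}_{2^n}^*$. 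Concretely, take $n=3$ and $f(x)=\tr_3(x^3)$ over $\mathbb{F}_8$: then $D_bf(x)=\tr_3\bigl((b^4+b^2)x\bigr)+\tr_3(b^3)$, so $\mathcal{E}_f=\{0,1\}$, while $D_b\bigl(f(ax)\bigr)=\tr_3\bigl((a^{12}b^4+a^3b^2)x\bigr)+\tr_3(a^3b^3)$, whose coefficient vanishes precisely for $b\in\{0,a^{-1}\}$ (using $a^7=1$); hence $\mathcal{E}_{f(ax)}=\{0,a^{-1}\}\neq\mathcal{E}_f$ whenever $a\neq 1$. So the proposition as literally stated (a set equality) is false, and your relation $\mathcal{E}_{f(ax)}=a^{-1}\mathcal{E}_f$ is the correct statement; no completion of your plan can close the remaining inclusion.

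It is worth noting that the paper's proof commits exactly the slip you avoided: after substituting $x\mapsto ax$ it leaves the shift $b$ unscaled, reading the constancy of $f(ax)+f(ax+b)$ as $b\in\mathcal{E}_{f(ax)}$, whereas membership of $b$ in $\mathcal{E}_{f(ax)}$ requires constancy of $f(ax)+f(ax+ab)$. Fortunately, everywhere Proposition \ref{equivalent} is invoked (Theorem \ref{dimension_kernel}, Lemma \ref{lem:PQR}) only the dimension of the kernel matters, since Lemma \ref{Walsh_spec_quar} depends on $\dim\mathcal{E}$ alone --- and the dimension equality follows immediately from your equivalence, because $x\mapsto a^{-1}x$ is an $\mathbb{F}_2$-linear bijection, so $\dim\bigl(a^{-1}\mathcal{E}_f\bigr)=\dim\mathcal{E}_f$. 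If you restate the proposition as $\mathcal{E}_{f(ax)}=a^{-1}\mathcal{E}_f$ (or merely $\dim\mathcal{E}_{f(ax)}=\dim\mathcal{E}_f$), then your argument, with the impossible invariance step deleted, is a complete and correct proof of everything the paper actually uses.
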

\begin{proof}
  Let us prove $\mathcal{E}_{f}\subseteq \mathcal{E}_{f(ax)}$ first. By definition, if $b\in \mathcal{E}_{f}$, then $f(x)+f(x+b)=0$ for all $x\in \mathbb{F}_{2^n}$ or $f(x)+f(x+b)=1$ for all $x\in \mathbb{F}_{2^n}$. Note that $x \mapsto ax$ is a bijection over $\mathbb{F}_{2^n}$ for any $a\in \mathbb{F}_{2^n}^*$, then we have  
  \[
  f(ax)+f(ax+b)=0\ \mbox{for\ all\ $x\in \mathbb{F}_{2^n}$ } 
  \]
  or
  \[
  f(ax)+f(ax+b)=1\ \mbox{for\ all\ $x\in \mathbb{F}_{2^n}$ } .
  \]
  So $b\in \mathcal{E}_{f(ax)}$. 

Now let us prove $\mathcal{E}_{f(ax)}\subseteq \mathcal{E}_{f}$.
Let $g(x) = f(ax)$. From the above, we have $\mathcal{E}_g \subseteq \mathcal{E}_{g(a^{-1}x)}$, that is, $ \mathcal{E}_{f(ax)} \subseteq \mathcal{E}_{f}$.
 
\end{proof}

We will need the following lemmas in the proof of Theorem \ref{dimension_kernel}. 

\begin{lemma} \label{root_number1} \cite[Theorem 3.50]{LN97}
Let $q$ be a prime. Let $P(x)=\sum_{i=0}^{n-1}a_i x^{q^i}$ be a $q$-polynomial, where $a_i\in \mathbb{F}_{q^n}$. Then the distinct number of roots of $P(x)$ in $\mathbb{F}_{q^n}$ is a power of $q$.
\end{lemma}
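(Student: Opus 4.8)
The plan is to exploit the linear-algebraic structure of $q$-polynomials that the preliminary section has already recorded, rather than to reason about roots and multiplicities directly. Viewing $\mathbb{F}_{q^n}$ as an $n$-dimensional vector space over $\mathbb{F}_q$, equations \eqref{proper_1} and \eqref{proper_2} say precisely that the evaluation map $P:\mathbb{F}_{q^n}\to\mathbb{F}_{q^n}$, $x\mapsto \sum_{i=0}^{n-1}a_i x^{q^i}$, is an $\mathbb{F}_q$-linear endomorphism of this space. Here one should first check that $P$ genuinely maps $\mathbb{F}_{q^n}$ into itself: this holds because $x^{q^i}\in\mathbb{F}_{q^n}$ whenever $x\in\mathbb{F}_{q^n}$ (the field is closed under the Frobenius map) and each $a_i\in\mathbb{F}_{q^n}$, so the sum lies in $\mathbb{F}_{q^n}$.

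The key step is then to identify the set of distinct roots of $P$ lying in $\mathbb{F}_{q^n}$ with the kernel $\ker P=\{x\in\mathbb{F}_{q^n}: P(x)=0\}$. This is immediate from the definitions ($x$ is a root of $P$ in $\mathbb{F}_{q^n}$ iff $x\in\ker P$), and it is exactly the point where counting \emph{distinct} roots rather than roots with multiplicity is what makes the argument clean: the kernel is a set, so its cardinality is the number of distinct roots, and the formal multiplicities of $P$ play no role in the count. Since $P$ is $\mathbb{F}_q$-linear, $\ker P$ is an $\mathbb{F}_q$-subspace of $\mathbb{F}_{q^n}$. Setting $k=\dim_{\mathbb{F}_q}\ker P$ and choosing an $\mathbb{F}_q$-basis gives a bijection $\ker P\cong\mathbb{F}_q^{\,k}$, whence $|\ker P|=q^k$. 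Therefore the number of distinct roots of $P$ in $\mathbb{F}_{q^n}$ equals $q^k$, a power of $q$, which is the assertion. The extreme cases are consistent with this: $P\equiv 0$ yields $k=n$ and $q^n$ roots, while a $P$ with trivial kernel yields $k=0$ and the single root $x=0$.

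I do not expect a serious obstacle here; the whole content reduces to the standard fact that a $k$-dimensional vector space over $\mathbb{F}_q$ has $q^k$ elements. The only items deserving a moment of care are (i) confirming that $P$ restricts to an endomorphism of $\mathbb{F}_{q^n}$, which follows from closure under Frobenius, and (ii) making sure the quantity being counted is precisely the kernel cardinality, so that repeated roots cannot perturb the power-of-$q$ conclusion. With these two observations in hand, the linear-algebraic counting finishes the proof.
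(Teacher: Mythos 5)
Your argument is correct and is exactly the intended one: the paper cites this as Theorem 3.50 of \cite{LN97} without reproving it, but its preliminaries already record via \eqref{proper_1} and \eqref{proper_2} that a $q$-polynomial is an $\mathbb{F}_q$-linear endomorphism of $\mathbb{F}_{q^n}$, so the distinct roots form the kernel, an $\mathbb{F}_q$-subspace of cardinality $q^k$. Nothing is missing; your two points of care (closure under Frobenius, and counting distinct roots as kernel elements so multiplicities are irrelevant) are precisely the right ones.
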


\begin{lemma} (\cite[page 37]{Men93}) \label{num_root_quar} 
    The number of solutions in $\mathbb{F}_{2^n}$ of the quartic function
    \begin{equation}\label{quar_equ}
        x^4+ax+b=0,\ \ a,b\in \mathbb{F}_{2^n},\ a\neq 0.
    \end{equation}
    \begin{itemize}
        \item If $n$ is odd, then \eqref{quar_equ} has either no solution or exactly two solutions.
        \item If $n$ is even and $a$ is not a cube, then \eqref{quar_equ} has exactly one solution.
        \item If $n$ is even, and $a$ is a cube, then \eqref{quar_equ} has four solutions if $\mathrm{tr}_{\mathbb{F}_{2^n}/\mathbb{F}_{4}}(\frac{b}{a^{\frac{4}{3}}})=0$, and no solutions if $\mathrm{tr}_{\mathbb{F}_{2^n}/\mathbb{F}_{4}}(\frac{b}{a^{\frac{4}{3}}})\neq 0$.
    \end{itemize}
\end{lemma}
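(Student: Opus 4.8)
The plan is to exploit the fact that the left-hand side, viewed as a function of $x$, is a linearized polynomial. Writing $L(x) = x^4 + ax = x^{2^2} + a\,x^{2^0}$, I observe that $L$ is a $2$-polynomial, hence an $\mathbb{F}_2$-linear map on $\mathbb{F}_{2^n}$ by equations \eqref{proper_1}--\eqref{proper_2}. Consequently the solution set of $L(x) = b$ is either empty or a coset of $\ker L = \{x : x^4 + ax = 0\}$, so the number of solutions is $0$ or $|\ker L|$, the latter being a power of $2$ by Lemma \ref{root_number1}. The problem thus splits into (i) computing $|\ker L|$ and (ii) deciding, in the delicate case, for which $b$ the equation is solvable.

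For step (i), I factor $x^4 + ax = x(x^3 + a)$, so the nonzero elements of $\ker L$ are exactly the cube roots of $a$ (recall $-a = a$ in characteristic $2$). Their number is controlled by $\gcd(3, 2^n - 1)$. When $n$ is odd, $3 \nmid 2^n - 1$, so cubing is a bijection of $\mathbb{F}_{2^n}^*$, $a$ has a unique cube root, and $|\ker L| = 2$; hence the equation has $0$ or $2$ solutions, settling the odd case. When $n$ is even, $3 \mid 2^n - 1$: if $a$ is not a cube it has no cube root, $\ker L = \{0\}$, so $L$ is a bijection and the equation has exactly one solution for every $b$; if $a$ is a cube it has three cube roots and $|\ker L| = 4$, so the equation has $0$ or $4$ solutions.

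The remaining work is the characterization in the even, cube case. Writing $a = c^3$ and substituting $x = cy$, the equation becomes $c^4(y^4 + y) + b = 0$, i.e.\ $y^4 + y = d$ with $d = b/a^{4/3}$. This quantity is well-defined for our purposes: changing the cube root scales $d$ by an element of $\mathbb{F}_4^*$, and by $\mathbb{F}_4$-linearity of $\mathrm{tr}_{\mathbb{F}_{2^n}/\mathbb{F}_4}$ this does not affect whether its relative trace vanishes. The map $T(y) = y^4 + y$ is again $\mathbb{F}_2$-linear with kernel $\{y : y^4 = y\} = \mathbb{F}_4 \subseteq \mathbb{F}_{2^n}$ (which lies in $\mathbb{F}_{2^n}$ precisely because $n$ is even), so $\operatorname{Im}(T)$ has $\mathbb{F}_2$-codimension $2$. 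I claim $\operatorname{Im}(T) = \ker \mathrm{tr}_{\mathbb{F}_{2^n}/\mathbb{F}_4}$. For the inclusion $\subseteq$, I use the invariance of the relative trace under raising to the power $|\mathbb{F}_4| = 4$, namely $\mathrm{tr}_{\mathbb{F}_{2^n}/\mathbb{F}_4}(y^4) = \mathrm{tr}_{\mathbb{F}_{2^n}/\mathbb{F}_4}(y)$, which gives $\mathrm{tr}_{\mathbb{F}_{2^n}/\mathbb{F}_4}(T(y)) = 0$ for all $y$. Since $\mathrm{tr}_{\mathbb{F}_{2^n}/\mathbb{F}_4}$ is surjective onto $\mathbb{F}_4$, its kernel also has $\mathbb{F}_2$-codimension $2$, and equality of the two codimension-$2$ subspaces follows. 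Hence $y^4 + y = d$ is solvable iff $\mathrm{tr}_{\mathbb{F}_{2^n}/\mathbb{F}_4}(d) = 0$, i.e.\ $\mathrm{tr}_{\mathbb{F}_{2^n}/\mathbb{F}_4}(b/a^{4/3}) = 0$, in which case there are exactly $4$ solutions, completing the case.

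The main obstacle is precisely this last step: identifying the image of the additive polynomial $y \mapsto y^4 + y$ with the kernel of the relative trace to $\mathbb{F}_4$. The inclusion is a short trace computation and the dimension count then closes the argument cleanly, but one must be careful that the reduction $x = cy$ only makes sense because $a$ is a cube, and that the resulting trace condition is genuinely independent of which cube root is chosen. The odd case and the even non-cube case, by contrast, require no image analysis and fall out directly from the size of $\ker L$.
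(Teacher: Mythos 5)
The paper does not prove this lemma at all: it is imported as a black box from \cite[page 37]{Men93}, so there is no in-paper argument to compare against. Your proof is correct and self-contained. The key observations all check out: $L(x)=x^4+ax$ is a $2$-polynomial, so the solution set of $L(x)=b$ is empty or a coset of $\ker L = \{0\}\cup\{x: x^3=a\}$; the size of the kernel ($2$, $1$, or $4$) is governed by $\gcd(3,2^n-1)$ and by whether $a$ is a cube, which disposes of the odd case and the even non-cube case immediately. In the even cube case, the substitution $x=cy$ with $c^3=a$ correctly reduces the equation to $y^4+y=b/a^{4/3}$ (since $c^4y^4+c^4y+b=0$), your check that the trace condition is independent of the choice of cube root via $\mathbb{F}_4$-linearity of $\mathrm{tr}_{\mathbb{F}_{2^n}/\mathbb{F}_4}$ is the right way to make $a^{4/3}$ well defined, and the identification $\operatorname{Im}(y\mapsto y^4+y)=\ker \mathrm{tr}_{\mathbb{F}_{2^n}/\mathbb{F}_4}$ follows from the inclusion $\mathrm{tr}_{\mathbb{F}_{2^n}/\mathbb{F}_4}(y^4+y)=0$ plus the matching count $|\operatorname{Im} T|=2^{n-2}=|\ker \mathrm{tr}_{\mathbb{F}_{2^n}/\mathbb{F}_4}|$ (the surjectivity of the relative trace with uniform fibers is already recorded in the paper's preliminaries). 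This gives exactly the trichotomy stated, so your argument would serve as a valid replacement for the external citation.
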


We need some properties of trace functions in the proof.

\begin{theorem}\label{dimension_kernel}
Let $f(x)=\tr_n(x^7)$. Let $\mathcal{E}_{D_af}$ be the linear kernel of $D_af$. We denote by $\dim(\mathcal{E}_{D_af})$ the dimension of $\mathcal{E}_{D_af}$. The distribution of $\dim(\mathcal{E}_{D_af})$ for all $a\in \mathbb{F}_{2^n}^*$ is as follows:
\begin{table}[H]
\begin{center}
\caption{The distribution of $\dim(\mathcal{E}_{D_af})$} 
\renewcommand\arraystretch{1.5}
\begin{tabular}{ c|c|c|c } 
\hline 
\multicolumn{2}{c|}{n} & $\dim(\mathcal{E}_{D_af})$ & The number of $a\in \mathbb{F}_{2^n}^*$ \\ 
\hline
\multirow{4}*{even $n$} & \multirow{2}*{$3\nmid n$} & 2 & $\frac{11}{3}\cdot 2^{n-2}-\frac{2}{3}$ \\
                \cline{3-4}
                ~ & ~ & 4 & $\frac{1}{3} \cdot 2^{n-2} -\frac{1}{3}$   \\ 
                \cline{2-4}
                ~ & \multirow{2}*{$3\mid n$} & 2 & $\frac{2}{3}(2^n-1)+\frac{1}{2} \wt(\tr_n(x^7))$ \\
                \cline{3-4}
                ~ & ~ & 4 & $\frac{1}{3} (2^n-1) -\frac{1}{2} \wt(\tr_n(x^7))$ \\
                \hline
\multirow{4}*{odd $n$} & \multirow{2}*{$3\nmid n$} & 1 & $2^{n-1}$ \\
                \cline{3-4}
                ~ & ~ & 3 & $2^{n-1}-1$   \\ 
                \cline{2-4}
                ~ & \multirow{2}*{$3\mid n$} & 1 & $\wt(\tr_n(x^7))$ \\
                \cline{3-4}
                ~ & ~ & 3 & $2^n-1-\wt(\tr_n(x^7))$ \\
			\hline
		\end{tabular}
		
	\end{center}
\end{table}

\end{theorem}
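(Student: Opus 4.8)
The plan is to follow Carlet's reduction: since $\deg f = 3$, each derivative $D_a f$ is quadratic, and by Definition \ref{kernel} we have $b \in \mathcal{E}_{D_a f}$ exactly when the second-order derivative $D_b D_a f$ is constant. First I would expand $(x+a)^7$ (every binomial coefficient is odd) and take the mixed second difference; after cancellation this gives
\[
D_b D_a f(x) = \tr_n\big(ab(a+b)\,x^4 + (a^4b+ab^4)\,x^2 + (a^4b^2+a^2b^4)\,x\big) + \text{const}.
\]
Folding each term down to $x^{2^0}$ via $\tr_n(y)=\tr_n(y^{2^j})$, the vanishing of the $x$-part becomes a single linearized equation in $b$; raising it to the fourth power (a bijection of $\mathbb{F}_{2^n}$, so the root set is unchanged) clears the inverse-Frobenius exponents and yields the clean linearized polynomial
\[
P(b) := a^8 b^{16} + (a^2+a^{16})b^8 + (a+a^8)b^2 + a^2 b = 0 .
\]
Thus $\mathcal{E}_{D_a f}=\{b: P(b)=0\}$ and $\dim \mathcal{E}_{D_a f}=\log_2|\{b:P(b)=0\}|$.

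The crucial step is a factorization. Setting $u := b^2+ab$ (so that $ab(a+b)=au$), a short computation rewrites $P(b)=a^8u^8+a^2u^4+a^6u^2+au$, and then
\[
P(b) = a\,z + a^6 z^2, \qquad z := u + a u^4 = ab^8 + a^5 b^4 + b^2 + ab .
\]
Hence $P(b)=0 \iff z(1+a^5 z)=0 \iff z(b)\in\{0,a^{-5}\}$. Since $z$ is $\mathbb{F}_2$-linear and $\{0,a^{-5}\}$ is a subgroup, $\mathcal{E}_{D_a f}=z^{-1}(\{0,a^{-5}\})$ is a subspace of size $|\ker z|$ or $2|\ker z|$ according to whether $a^{-5}\in\mathrm{Im}\,z$. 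I would not analyze that membership directly: Lemma \ref{parity_dim} forces $\dim\mathcal{E}_{D_a f}\equiv n \pmod 2$, which selects the extra factor of $2$ precisely when $n-\dim\ker z$ is odd. So the whole problem reduces to computing $\dim\ker z$.

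To get $\dim\ker z$ I would use the composition $z=Z\circ L_1$ with $L_1(b)=b^2+ab$ and $Z(u)=u+au^4$. Here $\ker L_1=\{0,a\}$ has dimension $1$ and $\mathrm{Im}\,L_1=\{u:\tr_n(u/a^2)=0\}$ is a hyperplane (solvability of $t^2+t=u/a^2$), so $\dim\ker z = 1+\dim(\ker Z\cap \mathrm{Im}\,L_1)$. The kernel $\ker Z=\{0\}\cup\{u:u^3=a^{-1}\}$ is controlled by cube roots: for odd $n$ cubing is a bijection and $\ker Z=\{0,a^{-1/3}\}$ has dimension $1$; for even $n$ we have $\mathbb{F}_4\subseteq\mathbb{F}_{2^n}$, and $\ker Z$ has dimension $2$ (the three cube roots $a^{-1/3},\omega a^{-1/3},\omega^2 a^{-1/3}$ together with $0$) when $a$ is a cube and dimension $0$ otherwise. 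Intersecting with the hyperplane amounts to evaluating $\tr_n(a^{-7/3})$, since $a^{-1/3}/a^2=a^{-7/3}$: for odd $n$ this gives $\dim\ker z\in\{1,2\}$, hence by parity $\dim\mathcal{E}_{D_a f}\in\{1,3\}$; for even $n$ with $a$ a cube the three traces $\tr_n(\omega^i a^{-7/3})$ sum to $0$, so either all vanish — equivalently $\tr_{\mathbb{F}_{2^n}/\mathbb{F}_4}(a^{-7/3})=0$, giving $\dim\ker z=3$ — or exactly two are nonzero, giving $\dim\ker z=2$; with $a$ not a cube, $\dim\ker z=1$. Parity then pins down $\dim\mathcal{E}_{D_a f}=4$ in the first even subcase and $2$ in the others, and shows $\dim=0$ is impossible (as $\dim\ker z\ge 1$ always).

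Finally I would count how many $a\in\mathbb{F}_{2^n}^*$ realize each value by counting solutions of these trace conditions. When $3\nmid n$ the relevant exponent ($7$, or $-7/3$) is coprime to $2^n-1$, so the power map is a bijection and the counts equal fractions of hyperplane / subfield-hyperplane sizes ($|\{d:\tr_n(d)=0\}|=2^{n-1}$, $|\{d:\tr_{\mathbb{F}_{2^n}/\mathbb{F}_4}(d)=0\}|=2^{n-2}$), producing the exact closed forms in the table. When $3\mid n$ the power map is not bijective, and the count is naturally expressed through $\wt(\tr_n(x^7))$: for odd $n$ the substitution $v=a^{-1/3}$ turns $\tr_n(a^{-7/3})=1$ into $\tr_n(v^7)=1$, giving $\wt(\tr_n(x^7))$ directly; for even $n$ the substitution $a=c^3$ together with the symmetry $x\mapsto\omega x$ (which cyclically permutes the fibers of $x\mapsto\tr_{\mathbb{F}_{2^n}/\mathbb{F}_4}(x^7)$ over $\mathbb{F}_4^*$, so those three fibers are equinumerous) shows the number of cubes $a$ with $\tr_{\mathbb{F}_{2^n}/\mathbb{F}_4}(a^{-7/3})=0$ equals $\tfrac13\big(2^n-\tfrac32\wt(\tr_n(x^7))\big)$, matching the dim-$4$ entry; the dim-$2$ counts follow by subtracting from $2^n-1$. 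The hard part will be twofold: first, spotting and verifying the identity $P(b)=az+a^6z^2$, which is what collapses a $2$-degree-$4$ linearized polynomial into a quadratic in the linearized variable $z$ and makes the root count tractable; and second, the even-$n$, $3\mid n$ bookkeeping, where the exact count can only be given through $\wt(\tr_n(x^7))$ and hinges on the $\omega$-symmetry equating the three $\mathbb{F}_4^*$-fiber sizes.
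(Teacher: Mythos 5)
Your proposal is correct and reaches exactly the table's values, but the core technical step is genuinely different from the paper's. The paper first substitutes $x\mapsto ax$ (so the kernel condition involves $a^7$ and $b^2+b$), factors the resulting degree-$16$ linearized polynomial as $Q(a,b)\bigl(Q(a,b)+1\bigr)$ with $Q(a,b)=(a^7)^2(b^2+b)^4+a^7(b^2+b)$, analyzes $Q=0$ via cube roots and trace conditions, and then must separately bound the roots of $Q+1=0$ by invoking the quartic-equation count (Lemma \ref{num_root_quar}) before the parity constraint (Lemma \ref{parity_dim}) pins down $\dim(\mathcal{E}_{D_af})$. Your identity $P(b)=az(b)+a^6z(b)^2$ with the explicitly linearized $z(b)=ab^8+a^5b^4+b^2+ab$ collapses both halves at once: the solution set is $z^{-1}(\{0,a^{-5}\})$, so the ``$Q+1$'' part is automatically either empty or a coset of $\ker z$, and parity alone decides which --- you never need the quartic lemma. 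The further decomposition $z=Z\circ L_1$ then cleanly separates the cube-root condition ($\ker Z$) from the trace/hyperplane condition ($\mathrm{Im}\,L_1$), and your counting arguments (bijectivity of the relevant power maps when $3\nmid n$, the $\omega$-symmetry equating the three $\mathbb{F}_4^*$-fibers of $x\mapsto\tr_{\mathbb{F}_{2^n}/\mathbb{F}_4}(x^7)$ when $3\mid n$) are equivalent to, and arguably tidier than, the paper's systems of equations in $s_1,s_2$. I verified the two key identities ($P(b)=a^8u^8+a^2u^4+a^6u^2+au$ for $u=b^2+ab$, and $P=az+a^6z^2$ for $z=u+au^4$); both hold. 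Two sketch-level slips to fix in a write-up: the count of cubes $a$ with $\tr_{\mathbb{F}_{2^n}/\mathbb{F}_4}(a^{-7/3})=0$ should be $\tfrac13\bigl(2^n-1-\tfrac32\wt(\tr_n(x^7))\bigr)$, not $\tfrac13\bigl(2^n-\tfrac32\wt(\tr_n(x^7))\bigr)$ (you must exclude $a=0$), and likewise the subfield-hyperplane count over $\mathbb{F}_{2^n}^*$ is $2^{n-2}-1$; with those corrections your formulas match the table exactly.
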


\begin{proof}
For any $a\in \mathbb{F}_{2^n}^*$, we have 
\begin{eqnarray*}
  (D_af)(ax)&= & \tr_n({(ax)}^7) + \tr_n({(ax+a)}^7)\\
  & = &  \tr_n({(ax)}^7+ {(ax+a)}^7)\\
  & = &  \tr_n(a^7(x^6+x^5+x^4+x^3+x^2+x+1)).
\end{eqnarray*}
Let $g(x)=D_af(ax)$. By Proposition \ref{equivalent}, we know $\mathcal{E}_{g}=\mathcal{E}_{D_af}$, and $\dim(\mathcal{E}_{g})$ equals the number of $b\in \mathbb{F}_{2^n}$ such that $D_b g$ is a constant. 
Note that
\begin{eqnarray*}
D_b g(x) & = & \tr_n(a^7 (\sum_{i=0}^6 x^i ))  + \tr_n(a^7 (\sum_{i=0}^6 (x+b)^i )) \\
& = & \tr_n(a^7((b^2+b)x^4+(b^4+b)x^2+(b^4+b^2)x)) + \tr_n(\sum_{i=1}^6 b^i).
\end{eqnarray*}
So $\dim(\mathcal{E}_{g})$ equals the number of $b\in \mathbb{F}_{2^n}$ such that $\tr_n(a^7((b^2+b)x^4+(b^4+b)x^2+(b^2+b^4)x))$ is a constant.

Using the properties of the trace function, we have
\begin{eqnarray}
& &\tr_n(a^7((b^2+b)x^4+(b^4+b)x^2+(b^4+b^2)x)) \nonumber\\
&=&\tr_n(a^7((b^2+b)x^4))+\tr_n(a^7((b^4+b)x^2))+\tr_n(a^7((b^4+b^2)x)) \nonumber\\
&=&\tr_n((a^7)^{-4}((b^2+b)^{-4}x))+ \tr_n((a^7)^{-2}((b^4+b)^{-2}x))+\tr_n(a^7((b^4+b^2)x)) \nonumber\\
&=& \tr_n(((a^7)^{-4}(b^2+b)^{-4}+(a^7)^{-2}(b^4+b)^{-2}+a^7(b^4+b^2))x). \label{equ:thm1_tr_constant}
\end{eqnarray}
Thus, \eqref{equ:thm1_tr_constant} is a constant if and only if the coefficient of $x$ is zero, that is, 
\begin{equation}
\label{equ:thm1_coef_x}
(a^7)^{-4}(b^2+b)^{-4}+(a^7)^{-2}(b^4+b)^{-2}+a^7(b^4+b^2)=0.    
\end{equation}
Taking the $4$th power to both sides of \eqref{equ:thm1_coef_x}, we have
\begin{eqnarray*}
0 & = & a^7(b^2+b)+a^{14}(b^4+b)^{2}+a^{28}(b^4+b^2)^4 \\
& = & a^{28}(b^2+b)^8+a^{14}(b^2+b)^4+a^{14}(b^2+b)^2+a^7(b^2+b) \\
& = & \left( (a^7)^2(b^2+b)^4 \right)^2+\left( (a^7)^2(b^2+b)^4\right)+\left( a^7(b^2+b) \right)^2+\left( a^7(b^2+b)\right) \\
& = & \left((a^7)^2(b^2+b)^4+ a^7(b^2+b)\right)\left((a^7)^2(b^2+b)^4+ a^7(b^2+b)+1\right).
\end{eqnarray*}
For convenience, let $P(a, b) = \left((a^7)^2(b^2+b)^4+ a^7(b^2+b)\right)\left((a^7)^2(b^2+b)^4+ a^7(b^2+b)+1\right) = Q(a,b) (Q(a,b) + 1)$, where $Q(a, b) =(a^7)^2(b^2+b)^4+ a^7(b^2+b)$.

We denote by $\solnum(a)$ the number of $b\in \mathbb{F}_{2^n}$ such that $P(a, b)=0$; denote by $\solnumone(a)$ the number of $b\in \mathbb{F}_{2^n}$ such that $Q(a, b) = 0$; denote by $\solnumtwo(a)$ the number of $b\in \mathbb{F}_{2^n}$ such that $Q(a, b) + 1=0$. Obviously, $\solnum(a) = \solnumone(a)+\solnumtwo(a)$.

It is clear that $b = 0, 1$ are two solutions of $Q(a, b) = 0$. If $b^2 + b \not= 0$, $Q(a, b) = 0$ is equivalent to
\begin{equation}\label{eq3}
    (b^2+b)^3=(a^7)^{-1}.
\end{equation}
Observe that the degree (in variable $b$) of the polynomial $P(a,b)$ is 16. So $\solnum(a) \le 16$. Since $b = 0$ or $1$ are two distinct roots of $Q(a, b) = 0$, we have $\solnum(a) \ge 2$. For any fixed $a \in \mathbb{F}_{2^n}^*$, note that $P(a,b)$ is a $2$-polynomial in variable $b$. By Lemma \ref{root_number1},  $\solnum(a)=2^k$ for some $1\le k \le 4$. By Lemma \ref{parity_dim}, we know that $\dim( \mathcal{E}_g)$ and $n$ have the same parity. Hence, we have $\solnum(a) \in \{2^2, 2^4\}$ when $n$ is even; $\solnum(a) \in \{2^1, 2^3\}$ when $n$ is odd.

Next, we will consider the cases according to the parity of $n$ to determine the distribution of $N(a)$, i.e., the distribution of $\dim(\mathcal{E}_{D_af})$.

\vspace{0.2cm}
\textbf{Case 1:} $n$ is even. In this case, $\solnum(a) \in \{2^2, 2^4\}$; it suffices to count the number of $a\in \mathbb{F}_{2^n}^*$ where $\solnum(a)=16$. Note that the degree of $Q(a, b)$, for any fixed $a\in \mathbb{F}_{2^n}^*$, is 8. So we have $\solnumone(a)\le 8$ and $\solnumtwo(a)\le 8$. Hence, $\solnum(a)=16$ if and only if $\solnumone(a)=\solnumtwo(a)=8$. 

For even $n$, we have $\gcd(2^n-1, 3) =3$ and $\gcd(2^n-2,3)=1$ since $2^n\equiv 1 \pmod 3$. Let $G=\{g^{3s}\mid 0\le s\le \frac{2^n-4}{3}\}$ be a multiplicative group of order $\frac{2^n-1}{3}$, where $g$ is a primitive element of $\mathbb{F}_{2^n}^*$. If $a^7\notin G$, there is no solution to \eqref{eq3}, which implies that $\mathrm{N}_1(a)=2$. If $a^7\in G$, letting $a^7=g^{3s}$, where $0\le s\le \frac{2^n-4}{3}$, we have 
\begin{equation}\label{rootequ}
b^2+b=g^{-s+\frac{(2^n-1)i}{3}},
\end{equation}
for $i = 0,1,2$. If $\solnumone(a) = 8$, then \eqref{rootequ} must have 2 solutions for each $i = 0, 1, 2$. As a result, $\tr_n(g^{-s+\frac{(2^n-1)i}{3}})=0$ must hold for each $i$. (It is known that $x^2+x=b$ has two solutions if and only if $\tr_n(b)=0$, for instance, see the theorem in \cite[page 536]{Car21}.)

Let $c=g^{-s} $ and $d=g^{\frac{2^n-1}{3}} $. We have $g^{2^n-1-s}=c$, $g^{\frac{2^n-1}{3}-s}=cd$ and $g^{\frac{2(2^n-1)}{3}-s}=cd^2$. Furthermore, we have
\begin{eqnarray*}
& & \tr_n(c) + \tr_n( cd)+ \tr_n(cd^2) \\
& = & \tr_n(c(1+d+d^2) \\
& = & \tr_n(c(1+d+d^2)(1+d)(1+d)^{-1}) \\
& = & \tr_n(c(1+d^3)(1+d)^{-1}) \\
& = & 0,
\end{eqnarray*}
since $d^3=g^{2^n-1}=1$. In other words, 
\begin{equation}\label{keypoint}
   \tr_n(g^{2^n-1-s})+\tr_n( g^{\frac{2^n-1}{3}-s})+\tr_n(g^{2\frac{2^n-1}{3}-s})=0 
\end{equation}
always holds for any $0\le s\le \frac{2^n-4}{3}$. By \eqref{keypoint}, there are two possibilities:
\begin{itemize}
    \item $\tr_n(g^{2^n-1-s})=\tr_n( g^{\frac{2^n-1}{3}-s})=\tr_n(g^{2\frac{2^n-1}{3}-s})=0$,
    \item $\tr_n(g^{\frac{(2^n-1)i_1}{3}-s})=\tr_n( g^{\frac{(2^n-1)i_2}{3}-s})=1$ and $\tr_n(g^{\frac{(2^n-1)i_3}{3}-s})=0$ for distinct $i_1,i_2,i_3\in\{0,1,2\}$.
\end{itemize}

To proceed, we consider the following two subcases.

\textbf{Subcase 1.1}. $3\nmid n$ and $n$ is even.
    In this case, $\gcd(2^n-1,7)=1$, so the linear function $a\mapsto a^7$ is a bijection from $\mathbb{F}_{2^n}$ to $\mathbb{F}_{2^n}$. Hence, the number of $a\in \mathbb{F}_{2^n}^*$ such that $N_1(a)=8$ is exactly the size of the set $\{0\le s\le \frac{2^n-4}{3} \mid \tr_n(g^{-s+\frac{(2^n-1)i}{3}})=0\ \text{for $i=0,1,2$}\}$. 

Denote $s_1$ by the size of the set $\{0\le s\le \frac{2^n-4}{3}\mid \tr_n(g^{2^n-1-s})=\tr_n( g^{\frac{2^n-1}{3}-s})=\tr_n(g^{\frac{2(2^n-1)}{3}-s})=0\}$; denote $s_2$ by the size of the set $\{0\le s\le \frac{2^n-4}{3}\mid \tr_n( g^{\frac{(2^n-1)i_1}{3}-s})=\tr_n(g^{\frac{(2^n-1)i_2}{3}-s})=1 \text{ and } \tr_n( g^{\frac{(2^n-1)i_3}{3}-s})=0 \text{\ for distinct }\ i_1,i_2,i_3\in \{0,1,2\}\}$. Observe that $\wt(\tr_n(x))=2^{n-1}$ because $\tr_n(x)$ is an affine function, and the set $\{g^{-s+\frac{(2^n-1)i}{3}}\mid 0\le s\le \frac{2^n-4}{3}, 0\le i\le 2\}$ is exactly $\mathbb{F}_{2^n}^*$. So we have 
\begin{equation}\label{eq123}
    \begin{cases}
        3(s_1+s_2)=2^n-1 \\
        2s_2=2^{n-1},
    \end{cases}
\end{equation}
where $2s_2=\wt(\tr_n(x))=2^{n-1}$ is because $2s_2$ is the weight of the function $\tr_n(x)$. Solving equations \eqref{eq123}, we have $s_1=\frac{2^{n-2}-1}{3}$ and $s_2=2^{n-2}$. Thus the number of $a\in \mathbb{F}_{2^n}^*$ such that $\solnumone(a)=8$ is $\frac{2^{n-2}-1}{3}$. Therefore, the number of $a\in \mathbb{F}_{2^n}^*$ such that $\solnum(a)=16$ is $\frac{2^{n-2}-1}{3}$ and the number of $a\in \mathbb{F}_{2^n}^*$ such that $\solnumtwo(a)=4$ is $\frac{11}{3}\cdot 2^{n-2}-\frac{2}{3}$.

\textbf{Subcase 1.2}. $3\mid n$ and $n$ is even.
In this case, we have $7\mid 2^n-1$; thus the function $a\mapsto a^7$ is a $7$-to-$1$ mapping from $\mathbb{F}_{2^n}^*$ to $\mathbb{F}_{2^n}^*$. So $\{a^7\mid a^7\in G\}=\{g^{3s}\mid 0\le s\le \frac{2^n-4}{3} \text{ and } 7\mid s\}$. Denote by $s_1$ the size of the set $\{ 0\le s\le \frac{2^n-4}{3} \mid \tr_n(g^{-s+\frac{(2^n-1)i}{3}})=0 \text{ for all } i=0,1,2, \text{ and } 7\mid s \}$ and by $s_2$ the size of the set $\{0\le s\le \frac{2^n-4}{3} \mid \tr_n( g^{\frac{(2^n-1)i_1}{3}-s})=\tr_n(g^{\frac{(2^n-1)i_2}{3}-s})=1 \text{ and } \tr_n( g^{\frac{(2^n-1)i_3}{3}-s})=0, \text{for distinct } i_1,i_2,i_3\in \{0,1,2\}, \text{ and } 7 \mid s\}$. One can easily verify that
\begin{equation} \label{odd_mid}
    \begin{cases}
        s_1+s_2=\frac{2^n-1}{21}, \\
        14s_2=\wt(\tr_n(x^7)).
    \end{cases}
\end{equation}
Solving equations \eqref{odd_mid}, we have $s_1 = \frac{2^n-1}{21}-\frac{\wt(\tr_n(x^7))}{14}$ and $s_2 = \frac{\wt(\tr_n(x^7))}{14}$.
Hence, the number of $a\in \mathbb{F}_{2^n}^*$ such that $\mathrm{N}_1(a)=8$ is $7s_1=\frac{2^n-1}{3}-\frac{\wt(\tr_n(x^7))}{2}$. The number of $a\in \mathbb{F}_{2^n}^*$ such that $\solnum(a)=16$ is $\frac{2^n-1}{3}-\frac{\wt(\tr_n(x^7))}{2}$, and the number of $a\in \mathbb{F}_{2^n}^*$ such that $\solnum(a)=4$ is $\frac{2}{3}(2^n-1)+\frac{\wt(\tr_n(x^7))}{2}$.

\vspace{0.2cm}
\textbf{Case 2:}  $n$ is odd. In this case, we have $\solnum(a) \in \{2^1, 2^3\}$; it suffices to count the number of $a\in \mathbb{F}_{2^n}^*$ such that $\solnum(a)=8$. For odd $n$, we have $3\mid(2^n-2)$ and $\gcd(3, 2^n-1) = 1$. So $a \mapsto a^3$ is a bijection in $\mathbb{F}_{2^n}^*$. By \eqref{eq3}, we have
 \begin{equation}\label{eq_b_quar}
     b^2+b=(a^7)^{\frac{2^n-2}{3}}.
 \end{equation}
 When $b \not\in \{0, 1\}$, equation \eqref{eq_b_quar} has two distinct solutions if and only if $\tr_n((a^7)^{\frac{2^n-2}{3}})=0$. Hence, the number of solutions of $Q(a, b) = 0$ is at most 4, i.e., $\solnumone(a)\le 4$.

Note that $Q(a,b)$ is a 2-polynomial (in variable $b$) of degree 8 and $b=0,1$ are two roots of $Q(a,b)=0$. So we have $\solnumone(a)\in \{2,2^2\}$. By Lemma \ref{num_root_quar}, for odd $n$, the number of distinct $b^2+b$ satisfying $Q(a,b)+1=0$ is 0 or 2. So the number of $b\in \mathbb{F}_{2^n}\setminus \{0,1\}$ such that $Q(a,b)+1=0$ is $0, 2, 4$, that is, $\solnumtwo(a)\in \{0,2,4\}$. Thus $\solnum(a)=\solnumone(a)+\solnumtwo(a)=8$ if and only if  $\solnumone(a)=4$.

\textbf{Subcase 2.1}: $3\nmid n$ and $n$ is odd. In this case, we have $\gcd(2^n-1,7)=1$. So mapping $a\mapsto a^7$ is a bijection from $\mathbb{F}_{2^n}$ to $\mathbb{F}_{2^n}$. Since $\gcd(2^n-1,\frac{2^n-2}{3})=1$. then mapping $a \mapsto a^{\frac{2^n-2}{3}} $ is a bijection from $\mathbb{F}_{2^n}$ to $\mathbb{F}_{2^n}$. Note that $\solnumone(a)=4$ if and only if $\tr_n((a^7)^{\frac{2^n-2}{3}})=0$. As such, the number of $a\in \mathbb{F}_{2^n}^*$ such that $\mathrm{N}_1(a)=4$ equals the size of the set $\{x\in \mathbb{F}_{2^n}^*\mid \tr_n(x)=0\}$. Note that $\tr_n(x)$ is an affine function. So the number of $x\in \mathbb{F}_{2^n}^*$ such that $\tr_n(x)=0$ is $2^{n-1}-1$. Thus the number of $a\in \mathbb{F}_{2^n}^*$ such that $\solnumone(a)=4$ equals $2^{n-1}-1$.

\textbf{Subcase 2.2}: $3\mid n$ and $n$ is odd. In this case, we have $\gcd(2^n-1,\frac{2^n-2}{3})=1$. So mapping $a \mapsto a^{\frac{2^n-2}{3}} $ is a bijection from $\mathbb{F}_{2^n}$ to $\mathbb{F}_{2^n}$. Since $\gcd(2^n-1,7)=7$, then $a\mapsto a^7$ is a $7$-to-$1$ mapping from $\mathbb{F}_{2^n}^*$ to $\mathbb{F}_{2^n}^*$. Hence, the number of $a\in \mathbb{F}_{2^n}^*$ such that $\tr_n((a^{7})^{\frac{2n-2}{3}})=\tr_n((a^{\frac{2n-2}{3}})^7)=0$ equals the number of $a\in \mathbb{F}_{2^n}^*$ such that $\tr_n(a^{7})=0$. Since $\solnumone(a)=4$ if and only if $\tr_n((a^7)^{\frac{2^n-2}{3}})=0$, then one can easily verify that the number of $a\in \mathbb{F}_{2^n}^*$ such that $\solnumone(a)=4$ is $2^n-1-\wt(\tr_n(x^7))$.

\end{proof}

By Lemma \ref{Walsh_spec_quar} and Theorem \ref{dimension_kernel}, the following corollary is immediate.

\begin{corollary}\label{der_nonlinearity}
    Let $f=\tr_n(x^7)$. Denote $\nl(D_af)$ by the nonlinearity of $D_af$. For any $a\in \mathbb{F}_{2^n}^*$, the distribution of $\nl(D_af)$ is as follows:

 \begin{table}[H]
\begin{center}
\caption{The distribution of $\nl(D_af)$} 
\renewcommand\arraystretch{2}

\begin{tabular}{ c|c|c|c } 
\hline 
\multicolumn{2}{c|}{n} & $\nl(D_af)$ & The number of $a\in \mathbb{F}_{2^n}^*$ \\ 
\hline
\multirow{4}*{even $n$} & \multirow{2}*{$3\nmid n$} & $2^{n-1}- 2^{\frac{n}{2}}$ & $\frac{11}{3}\cdot 2^{n-2}-\frac{2}{3}$ \\
                \cline{3-4}
                ~ & ~ & $2^{n-1}- 2^{\frac{n+2}{2}}$ & $\frac{1}{3} \cdot 2^{n-2} -\frac{1}{3}$    \\ 
                \cline{2-4}
                ~ & \multirow{2}*{$3\mid n$} & $2^{n-1}- 2^{\frac{n}{2}}$ & $\frac{2}{3}(2^n-1)+\frac{1}{2}\wt(\tr_n(x^7))$ \\
                \cline{3-4}
                ~ & ~ & $2^{n-1}- 2^{\frac{n+2}{2}}$ & $\frac{1}{3}(2^n-1)-\frac{1}{2} \wt(\tr_n(x^7))$ \\
                \hline
\multirow{4}*{odd $n$} & \multirow{2}*{$3\nmid n$} & $2^{n-1}- 2^{\frac{n-1}{2}}$ & $2^{n-1}$ \\
                \cline{3-4}
                ~ & ~ & $2^{n-1}- 2^{\frac{n+1}{2}}$ & $2^{n-1}-1$   \\ 
                \cline{2-4}
                ~ & \multirow{2}*{$3\mid n$} & $2^{n-1}- 2^{\frac{n-1}{2}}$ & $\wt(\tr_n(x^7))$ \\
                \cline{3-4}
                ~ & ~ & $2^{n-1}- 2^{\frac{n+1}{2}}$ & $2^n-1-\wt(\tr_n(x^7))$ \\
			\hline
		\end{tabular}
		
	\end{center}
\end{table}
    
\end{corollary}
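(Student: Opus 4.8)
The plan is to combine the Walsh-spectrum characterization of quadratic functions in Lemma \ref{Walsh_spec_quar} with the nonlinearity formula \eqref{nonlinearity_walsh} and the kernel-dimension distribution already established in Theorem \ref{dimension_kernel}. First I would note that, since $f=\tr_n(x^7)$ has algebraic degree $3$ (the Hamming weight of $7$ written in binary), every first-order derivative $D_af$ has degree at most $2$. Thus each $D_af$ is a quadratic function, and Lemma \ref{Walsh_spec_quar} applies to it verbatim with $k=\dim(\mathcal{E}_{D_af})$.

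Next I would extract the maximum absolute Walsh value from that lemma. For a quadratic function whose linear kernel has dimension $k\ge 1$, the first row of the table in Lemma \ref{Walsh_spec_quar} shows that only $2^n-2^{n-k}<2^n$ of the Walsh values vanish, so at least one $\alpha$ satisfies $W_{D_af}(\alpha)\neq 0$; moreover every nonzero value equals $\pm 2^{(n+k)/2}$. Hence $\max_{\alpha\in\mathbb{F}_{2^n}}|W_{D_af}(\alpha)|=2^{(n+k)/2}$, and substituting this into \eqref{nonlinearity_walsh} yields the clean identity
\[
    \nl(D_af)=2^{n-1}-\tfrac{1}{2}\cdot 2^{(n+k)/2}=2^{n-1}-2^{(n+k-2)/2}.
\]

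Finally I would substitute each admissible value of $k$ from the distribution table in Theorem \ref{dimension_kernel} into this identity: for even $n$ the dimensions $k\in\{2,4\}$ give $2^{n-1}-2^{n/2}$ and $2^{n-1}-2^{(n+2)/2}$, while for odd $n$ the dimensions $k\in\{1,3\}$ give $2^{n-1}-2^{(n-1)/2}$ and $2^{n-1}-2^{(n+1)/2}$. Since the map $k\mapsto 2^{n-1}-2^{(n+k-2)/2}$ is injective, each kernel dimension corresponds to a distinct nonlinearity value, so the counts of $a\in\mathbb{F}_{2^n}^*$ recorded in Theorem \ref{dimension_kernel} carry over unchanged. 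There is no real obstacle here — the entire content is contained in Theorem \ref{dimension_kernel}, which is why the authors call the corollary immediate; the only point requiring a moment of care is confirming that the extremal value $2^{(n+k)/2}$ is genuinely \emph{attained} (and is not merely an upper bound on the spectrum), which the strict inequality $2^n-2^{n-k}<2^n$ for $k\ge 1$ guarantees.
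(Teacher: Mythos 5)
Your proposal is correct and follows exactly the route the paper intends: the paper simply declares the corollary ``immediate'' from Lemma \ref{Walsh_spec_quar} and Theorem \ref{dimension_kernel}, and your write-up supplies precisely the missing computation $\nl(D_af)=2^{n-1}-2^{(n+k-2)/2}$ via equation \eqref{nonlinearity_walsh}, including the (correct) observation that the value $2^{(n+k)/2}$ is attained because not all Walsh coefficients vanish when $k\ge 1$.
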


\begin{theorem}(The Weil bound, for example, Theorem 5.38 in \cite{LN97})\label{Weil's Theorem}
     Let $f \in \mathbb{F}_q[x]$ be of degree $d\ge 1$, where $\gcd(d,q)=1$. Let $\mathcal{X}$ be a nontrivial additive character of $\mathbb{F}_q$. Then 
     \begin{equation*}
         \left| \sum_{x\in \mathbb{F}_{q}}\mathcal{X}(f(x))\right| \le (n-1)q^{\frac{1}{2}}.
     \end{equation*}
     
\end{theorem}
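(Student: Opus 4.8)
The plan is to prove the bound through the Riemann Hypothesis for curves over finite fields (Weil's theorem), by realizing the character sum as, essentially, a point count on an Artin--Schreier cover of the affine line. Write $q = p^m$ with $p$ prime. Every additive character of $\mathbb{F}_q$ has the form $t \mapsto \zeta_p^{\,\mathrm{tr}_{\mathbb{F}_q/\mathbb{F}_p}(ct)}$ for some $c \in \mathbb{F}_q$, where $\zeta_p = e^{2\pi i/p}$; since $\mathcal{X}$ is nontrivial we may absorb the corresponding $c \ne 0$ into $f$ and assume $\mathcal{X}(t) = \zeta_p^{\,\mathrm{tr}_{\mathbb{F}_q/\mathbb{F}_p}(t)}$. (I note in passing that the displayed factor should read $(d-1)q^{1/2}$ with $d = \deg f$, matching Theorem 5.38 of \cite{LN97}.) The key object is the associated $L$-function.

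First I would introduce, for each $s \ge 1$, the twisted sum over the degree-$s$ extension,
\[
S_s = \sum_{x \in \mathbb{F}_{q^s}} \mathcal{X}\big(\mathrm{tr}_{\mathbb{F}_{q^s}/\mathbb{F}_q}(f(x))\big),
\]
so that $S_1$ is exactly the sum to be bounded, and form the $L$-function $L(T) = \exp\!\big(\sum_{s \ge 1} S_s T^s/s\big)$. The heart of the argument is to show that $L(T)$ is a polynomial with constant term $1$ and degree exactly $d-1$. This is where the hypothesis $\gcd(d,q)=1$, i.e.\ $\gcd(d,p)=1$, is used: it forces $f$ to contribute a single wildly ramified point at infinity of order $d$ on the cover $C : y^p - y = f(x)$ of $\mathbb{P}^1$, with no other ramification, so a Riemann--Hurwitz / conductor computation fixes the genus of $C$ at $g = (p-1)(d-1)/2$. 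The zeta function of $C$ then factors over the $p-1$ nontrivial additive characters, each contributing an $L$-factor of degree $2g/(p-1) = d-1$; the factor attached to $\mathcal{X}$ is our $L(T)$.

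Next, writing $L(T) = \prod_{j=1}^{d-1}(1 - \omega_j T)$ and comparing power-series expansions of $\log L$ gives the clean relation $S_s = -\sum_{j=1}^{d-1} \omega_j^{\,s}$, so in particular $S_1 = -\sum_{j=1}^{d-1}\omega_j$. It therefore remains only to prove that every reciprocal root satisfies $|\omega_j| = q^{1/2}$, which is precisely the Riemann Hypothesis for $C$: the eigenvalues of the $q$-power Frobenius acting on the $\mathcal{X}$-isotypic part of $H^1(C)$ all have absolute value $q^{1/2}$. Granting this, the triangle inequality yields $|S_1| \le (d-1)q^{1/2}$, which is the claim.

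The main obstacle is this last step, the Riemann Hypothesis for curves over finite fields, a genuinely deep theorem. I would either cite it outright (as the paper does, via \cite{LN97}) or, to keep the argument elementary and self-contained, replace the cohomological input by the Stepanov--Schmidt--Bombieri method: construct an auxiliary polynomial vanishing to high order at the $\mathbb{F}_{q^s}$-points of $C$, read off an upper bound for $N_s = \#C(\mathbb{F}_{q^s})$ from a degree count, obtain the matching lower bound by enlarging the constant field, and deduce the two-sided estimate $|N_s - (q^s+1)| \le 2g\,q^{s/2}$, which fed back through $L(T)$ forces $|\omega_j| = q^{1/2}$. The only other place needing care is the genus bookkeeping, making the exponent exactly $d-1$ rather than merely $O(d)$; this is exactly what the coprimality hypothesis $\gcd(d,p)=1$ secures.
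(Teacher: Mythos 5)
The paper does not prove this statement at all: it is imported verbatim as Theorem 5.38 of \cite{LN97}, so there is no internal proof to compare against. Your outline is the standard proof of that theorem --- reduce to the canonical character, package the sums $S_s$ over extension fields into the $L$-function of the Artin--Schreier cover $y^p-y=f(x)$, use $\gcd(d,p)=1$ to pin the ramification at infinity and hence get $\deg L = d-1$, and then invoke the Riemann Hypothesis for curves (or its elementary Stepanov--Schmidt--Bombieri substitute) to get $|\omega_j|=q^{1/2}$ --- and it is structurally correct. You are also right about the misprint: the paper states the hypothesis with degree $d$ but the bound with $(n-1)q^{1/2}$, a leftover from the notation of \cite{LN97} where the degree is called $n$; the intended bound is $(d-1)q^{1/2}$, which is exactly what the paper's subsequent Lemma on $\wt(\tr_n(x^d))$ uses, so nothing downstream is affected. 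The one caveat is that your argument is a proof \emph{schema} rather than a proof: the two genuinely hard ingredients (polynomiality and exact degree of $L(T)$, and the Riemann Hypothesis for curves) are still deferred to citation or to a separate Stepanov-type argument. Since the paper itself treats the entire theorem as a black box, this level of detail is already strictly more than the paper provides, but if the goal were a self-contained proof one would still have to carry out the Stepanov construction in full.
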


\begin{lemma}\label{Weil's bound}
    Let $d \ge 1$ be an odd number. We have $\wt(\tr_n(x^d))\ge 2^{n-1} - \frac{d-1}{2}\cdot2^{\frac{n}{2}}$.
\end{lemma}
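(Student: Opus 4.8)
The plan is to translate the Hamming weight of $\tr_n(x^d)$ into an additive character sum over $\mathbb{F}_{2^n}$ and then apply the Weil bound (Theorem \ref{Weil's Theorem}). Let $\chi(y)=(-1)^{\tr_n(y)}$ denote the canonical additive character of $\mathbb{F}_{2^n}$, which is nontrivial since $\tr_n$ maps onto $\mathbb{F}_2$. The first step is to recall the standard dictionary between weight and $\pm 1$-sums: splitting the sum $\sum_{x\in\mathbb{F}_{2^n}}(-1)^{\tr_n(x^d)}$ according to whether $\tr_n(x^d)$ equals $0$ or $1$ gives $\sum_{x\in\mathbb{F}_{2^n}}(-1)^{\tr_n(x^d)}=2^n-2\,\wt(\tr_n(x^d))$, so that $\wt(\tr_n(x^d))=\tfrac12\bigl(2^n-\sum_{x\in\mathbb{F}_{2^n}}\chi(x^d)\bigr)$.

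Next I would bound the character sum $\sum_{x}\chi(x^d)$ using Theorem \ref{Weil's Theorem} with $q=2^n$ and the polynomial $P(x)=x^d$ of degree $d$. The hypothesis $\gcd(d,q)=1$ is exactly where the oddness of $d$ is used: since $d$ is odd and $q=2^n$, we have $\gcd(d,2^n)=1$. The theorem then yields $\bigl|\sum_{x\in\mathbb{F}_{2^n}}\chi(x^d)\bigr|\le (d-1)\,2^{n/2}$, where the factor $d-1$ is the degree-minus-one bound (the symbol $n$ appearing in the statement of Theorem \ref{Weil's Theorem} refers to the degree of the polynomial, here $d$, not the extension degree).

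Combining the two steps finishes the proof: substituting $\sum_x\chi(x^d)\le (d-1)2^{n/2}$ into the weight identity gives $\wt(\tr_n(x^d))\ge \tfrac12\bigl(2^n-(d-1)2^{n/2}\bigr)=2^{n-1}-\tfrac{d-1}{2}\cdot 2^{n/2}$, as claimed. There is no genuine obstacle here; the only points requiring care are bookkeeping ones, namely verifying that $\chi$ is nontrivial and that $\gcd(d,2^n)=1$ so that Theorem \ref{Weil's Theorem} applies, and matching the $(d-1)$ factor correctly against the (slightly overloaded) notation in the Weil bound statement. This lemma will later be used to control the weight term $\wt(\tr_n(x^7))$ that appears in the distributions of Theorem \ref{dimension_kernel} and Corollary \ref{der_nonlinearity}, where the exact value of $\wt(\tr_n(x^7))$ is not available but a lower bound of the stated form suffices.
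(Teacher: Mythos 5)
Your proof is correct and follows essentially the same route as the paper: convert $\wt(\tr_n(x^d))$ into the character sum $\sum_{x}(-1)^{\tr_n(x^d)}$ and bound that sum by $(d-1)2^{n/2}$ via Theorem \ref{Weil's Theorem}, with the oddness of $d$ guaranteeing $\gcd(d,2^n)=1$. Your explicit verification of the hypotheses and your careful reading of the overloaded symbol $n$ in the Weil bound are exactly the bookkeeping the paper leaves implicit.
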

\begin{proof}
       Let $\mathcal{X}(x)=e^{\frac{2\pi i  \tr_n(x)}{p}}=  (-1)^{\tr_n(x)}$ for $p=2$. Applying the Weil bound, i.e., Theorem \ref{Weil's Theorem}, we have
    \begin{eqnarray*}
        \left|\sum_{x\in \mathbb{F}_{2^n}}\mathcal{X}(x^d) \right| 
        &=& \left|\sum_{x\in \mathbb{F}_{2^n}}(-1)^{\tr_n(x^d)}\right|
        \\ &\le& (d-1)2^{\frac{n}{2}}.
    \end{eqnarray*} 
    Since $\wt(\tr_n(x^d)) = 2^{n-1}-\frac{1}{2}\mid \sum_{x\in \mathbb{F}_{2^n}}(-1)^{\tr_n(x^d)}\mid $, we have 
    \[
    \wt(\tr_n(x^d))\ge 2^{n-1} - \frac{d-1}{2}\cdot 2^{\frac{n}{2}}.
    \]
\end{proof}

Now we are ready to prove Theorem \ref{theorem_1}, which gives a lower bound on the second-order nonlinearity of $\tr_n(x^7)$.

\begin{proof} (of Theorem \ref{theorem_1})
    By Proposition \ref{lower_non} and Corollary \ref{der_nonlinearity}, when $n$ is even and $3\nmid n$, we have 
    \begin{eqnarray*}
            \nl_2(f) & \ge & 2^{n-1}-\frac{1}{2}\sqrt{2^{2n}-2\sum_{a\in \mathbb{F}_{2^n}}\nl(D_af)}\\&
            = &2^{n-1}-\frac{1}{2}\sqrt{2^{2n}-2((2^{n-1}-2^{\frac{n}{2}})(\frac{11}{3}\cdot 2^{n-2}-\frac{2}{3})+(2^{n-1}-2^{\frac{n+2}{2}})(\frac{1}{3} \cdot 2^{n-2}-\frac{1}{3}))} \\&
            = &2^{n-1}-\frac{1}{2}\sqrt{ \frac{13}{3}\cdot2^{\frac{3}{2}n-1}+2^n-\frac{1}{3}\cdot 2^{\frac{n}{2}+3}}
            \\& = & 2^{n-1}-2^{\frac{3n}{4}-\frac{3}{2}+\frac{1}{2}\log_2 13- \frac{1}{2}\log_2 3}-O(2^{\frac{n}{4}}). 
    \end{eqnarray*}
    Similarly, when $n$ is even and $3\mid n$, we have 
\begin{eqnarray*}
    \nl_2(f) & \ge & 2^{n-1}-\frac{1}{2}\sqrt{\frac{1}{3}\cdot2^{\frac{3}{2}n+3}+2^n-\frac{1}{3}\cdot2^{\frac{n}{2}+3}-\wt(\tr_n(x^7))\cdot 2^{\frac{n}{2}}} \\
&\ge & 2^{n-1}-\frac{1}{2}\sqrt{\frac{13}{3}\cdot 2^{\frac{3}{2}n-1}+2^{n+2}-\frac{1}{3}\cdot2^{\frac{n}{2}+3}}\\
& = & 2^{n-1}-2^{\frac{3n}{4}-\frac{3}{2}+\frac{1}{2}\log_2 13- \frac{1}{2}\log_2 3}-O(2^{\frac{n}{4}}),
\end{eqnarray*}
where the second step is because $\wt(\tr_n(x^7))\ge 2^{n-1}- 3\cdot2^{\frac{n}{2}}$ by Lemma \ref{Weil's bound}.

By Proposition \ref{lower_non} and Corollary \ref{der_nonlinearity}, for odd $n$ and $3\nmid n$ we have 
    \begin{eqnarray*}
            \nl_2(f)&\ge & 2^{n-1}-\frac{1}{2}\sqrt{2^{2n}-2\sum_{a\in \mathbb{F}_{2^n}}\nl(D_af)}\\&
            = &2^{n-1}-\frac{1}{2}\sqrt{2^{2n}-2((2^{n-1}-2^{\frac{n-1}{2}})(2^{n-1})+(2^{n-1}-2^{\frac{n+1}{2}})(2^{n-1}-1))} \\&
            =&2^{n-1}-\frac{1}{2}\sqrt{2^{\frac{3n+1}{2}}+2^{\frac{3n-1}{2}}+2^n-2^{\frac{n+3}{2}}}\\&
            \ge& 2^{n-1}-2^{\frac{3n-5}{4}+\frac{1}{2}\log_2 3}-O(2^{\frac{n}{4}}). 
      \end{eqnarray*}
Similarly, when $n$ is odd and $3\mid n$, we have 
\begin{eqnarray*}
    \nl_2(f) & \ge &  2^{n-1}-\frac{1}{2}\sqrt{2^{\frac{3n+3}{2}}+2^n-2^{\frac{n+3}{2}}-\wt(\tr_n(x^7))\cdot 2^{\frac{n+1}{2}}} \\
& \ge & 2^{n-1}-\frac{1}{2}\sqrt{3\cdot 2^{\frac{3n-1}{2}}+2^n +3\cdot 2^{n+\frac{1}{2}}-2^{\frac{n+3}{2}}}
    \\ & \ge &2^{n-1}-2^{\frac{3n-5}{4}+\frac{1}{2}\log_2 3}-O(2^{\frac{n}{4}}),
\end{eqnarray*}
where the second step is because $\wt(\tr_n(x^7))\ge 2^{n-1}- 3\cdot2^{\frac{n}{2}}$ by Lemma \ref{Weil's bound}.
\end{proof}

\subsection{Functions of the type $\tr_n(x^{2^r+3})$ for $n=2r$}

In \cite{YT20}, Yan and Tang proved lower bounds on the second-order nonlinearity of the functions $\tr_n(x^{2^{r+1}+3})$, where $n = 2r$. This class of functions was first studied by Cusick and Dobbertin \cite{CD96}. We study a similar, but different, class of functions, that is, $\tr_n(x^{2^{r}+3})$ for $n = 2r$. In terms of techniques, our proof is similar to \cite{YT20}, and the lower bound is the same as that in \cite{YT20}. Our main contribution is to \emph{identify} this class of functions for the first time. 

Let $f=\tr_n(x^{2^r+3})$. By Proposition \ref{lower_non}, we can estimate the second-order nonlinearity $\nl_2(f)$ by calculating the nonlinearity of the derivatives of $f$, denoted by $D_a f$. We have
\begin{eqnarray*}
    D_af(x) & = & \tr_n(x^{2^r+3} + (x+a)^{2^r+3}) \\
    & = & \tr_n(a^{2^r}x^3+a^2x^{2^r+1}+ax^{2^r+2}) + \tr_n(a^3x^{2^r} + a^{2^r+1}x^2 + a^{2^r+2}x +  a^{2^r+3}),
\end{eqnarray*}
where $\tr_n(a^3x^{2^r} + a^{2^r+1}x^2 + a^{2^r+2}x + a^{2^r+3})$ is an affine function. 

\begin{theorem}\label{dim_E_ga}
    Let $\mathcal{E}_{D_a f}$ be the linear kernel of $D_a f(x)$. For odd $r$, we have
\begin{equation*}
\mathrm{dim}(\mathcal{E}_{D_af}) =
\begin{cases}
    r+1, & a\in \mathbb{F}_{2^r}^*, \\
    2,  & a\in \mathbb{F}_{2^n} \setminus \mathbb{F}_{2^r}.
\end{cases}
\end{equation*}
Let $G=\{g^{3s}\mid 0\le s \le \frac{2^r-4}{3}\}$ and $g$ is a primitive element of $\mathbb{F}_{2^r}$. For even $r$, we have

\begin{equation*}
\mathrm{dim}(\mathcal{E}_{D_af}) =
\begin{cases}
    r+2, & a\in G, \\
    r,  & a \in \mathbb{F}_{2^r}^* \setminus G, \\
    2,  & a \in \mathbb{F}_{2^n} \setminus \mathbb{F}_{2^r}.
\end{cases}
\end{equation*}

\end{theorem}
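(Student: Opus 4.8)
The plan is to reduce the computation of $\dim(\mathcal{E}_{D_af})$ to counting the roots of a single linearized polynomial, following the pattern of Theorem \ref{dimension_kernel}. Since the linear kernel of a quadratic function is determined by its associated bilinear form, the affine summand $\tr_n(a^3x^{2^r}+a^{2^r+1}x^2+a^{2^r+2}x+a^{2^r+3})$ of $D_af$ can be discarded, and I keep only the quadratic part $Q_a(x)=\tr_n(a^{2^r}x^3+a^2x^{2^r+1}+ax^{2^r+2})$. First I would compute the second derivative $D_bQ_a(x)$ and, using $\tr_n(y^2)=\tr_n(y)=\tr_n(y^{2^r})$ (valid because $n=2r$), collect the part that is linear in $x$ into the form $\tr_n(\psi_a(b)\,x)$ for an explicit $2$-polynomial $\psi_a(b)$ in $b$. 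Then $b\in\mathcal{E}_{D_af}$ iff $\psi_a(b)=0$, so by Lemma \ref{root_number1} the number of solutions is $2^{\dim(\mathcal{E}_{D_af})}$ and it remains to count them.

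The key structural observation is that, since $n=2r$, the field $\mathbb{F}_{2^n}/\mathbb{F}_{2^r}$ is a quadratic extension whose nontrivial automorphism is $y\mapsto\bar y:=y^{2^r}$, and the relative trace $\tr_{n/r}(y)=y+\bar y$ maps onto $\mathbb{F}_{2^r}$. I expect $\psi_a(b)$ to organize cleanly in terms of $S:=\tr_{n/r}(a)$ and $T:=\tr_{n/r}(b)$. When $a\in\mathbb{F}_{2^r}^*$ we have $\bar a=a$, which forces the two $b^{2^r}$-terms of $\psi_a$ to cancel and collapses the condition to $aT^2+(aT)^{2^{r-1}}=0$; squaring and using $(aT)^{2^r}=aT$ turns this into $aT(aT^3+1)=0$, i.e. $T=0$ or $T^3=a^{-1}$. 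Since each admissible value $T=\tr_{n/r}(b)$ is attained by exactly $2^r$ choices of $b$, the count is governed entirely by the number of cube roots of $a^{-1}$ in $\mathbb{F}_{2^r}^*$: for odd $r$ one has $\gcd(3,2^r-1)=1$, giving a unique root and $\dim=r+1$; for even $r$ one has $\gcd(3,2^r-1)=3$, giving three roots when $a^{-1}$ (equivalently $a$) is a cube, whence $\dim=r+2$ for $a\in G$, and no root otherwise, whence $\dim=r$ for $a\in\mathbb{F}_{2^r}^*\setminus G$.

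The main obstacle is the remaining case $a\in\mathbb{F}_{2^n}\setminus\mathbb{F}_{2^r}$, where no cancellation occurs and a direct root count of $\psi_a$ is unwieldy. My plan here is to avoid brute force by combining a conjugation symmetry with the parity constraint of Lemma \ref{parity_dim}. Writing $V=\{b:\psi_a(b)=0\}$, I would first restrict to $b\in\mathbb{F}_{2^r}$: the same collapse as above (now with $S=\tr_{n/r}(a)\neq0$, since $a\notin\mathbb{F}_{2^r}$) yields $Sb(S^3b+1)=0$, so $V\cap\mathbb{F}_{2^r}=\{0,S^{-3}\}$ is exactly one-dimensional. Next, squaring $\psi_a(b)=0$ and grouping should give an identity of the shape $\bar a^2T^4+S^4\bar b^2+\tr_{n/r}(\bar a b)=0$; summing it with its $\mathbb{F}_{2^r}$-conjugate clears the conjugate-dependent terms and, after using $a^2+\bar a^2=S^2$ and $b^2+\bar b^2=T^2$, reduces to $S^2T^2(T+S)^2=0$, forcing $T=\tr_{n/r}(b)\in\{0,S\}$ for every $b\in V$. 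Thus $\tr_{n/r}$ restricts to an $\mathbb{F}_2$-linear map $V\to\{0,S\}$; were it identically zero, then $V\subseteq\mathbb{F}_{2^r}$ would be one-dimensional, contradicting the even parity of $\dim V$, so the map is onto $\{0,S\}$ and $\dim V=\dim(V\cap\mathbb{F}_{2^r})+1=2$. The delicate points to get right are the bookkeeping of the $2^r$- and $2^{r-1}$-powers when forming $\psi_a$, and checking that this conjugate-plus-parity argument is airtight; with these in hand, all dimension values in the statement follow.
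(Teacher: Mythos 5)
Your proposal is correct and follows essentially the same route as the paper's proof: the same reduction of $\dim(\mathcal{E}_{D_af})$ to counting roots of a linearized polynomial, the same collapse to $aT(aT^3+1)=0$ with $T=\mathrm{tr}_{2r/r}(b)$ when $a\in\mathbb{F}_{2^r}^*$ (followed by the identical $\gcd(3,2^r-1)$ case analysis), and the same conjugate-summing identity $S^2T^2(S+T)^2=0$ when $a\notin\mathbb{F}_{2^r}$. The only divergence is the final count in that last case: the paper substitutes $\bar b=b+S$ back into the squared equation and exhibits the two extra kernel elements $a$ and $S^{-3}+a$ explicitly, whereas you combine rank--nullity for $\mathrm{tr}_{2r/r}|_V$ with the parity constraint of Lemma \ref{parity_dim} to force $\dim V=2$ without naming them --- a sound shortcut that trades one explicit computation for an appeal to the parity lemma.
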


\begin{proof} Let $g_a(x)= \tr_n(a^{2^r}x^3+a^2x^{2^r+1}+ax^{2^r+2})$.
By the definition of the linear kernel, we have 
\begin{align*}
    \mathcal{E}_{D_a f} =\mathcal{E}_{g_a} = \{x\in \mathbb{F}_{2^n} \mid B(x,y)=g_a(0)+g_a(x)+g_a(y)+g_a(x+y) = 0,\ \mbox{for\ all}\ y \in \mathbb{F}_{2^n}\}.
\end{align*}
Using the properties of the trace function and the fact that $n = 2r$, we have
 \begin{eqnarray}\label{kernel_ga}
    0&=& B(x,y)\nonumber \\
    &=&g_a(0)+g_a(x)+g_a(y)+g_a(x+y) \nonumber \\
    &=& \tr_n(a^{2^r}(x^2y+xy^2) + a(x^{2^r}y^2+x^2y^{2^r}) + a^2(x^{2^r}y+xy^{2^r}))  \nonumber \\
    &=& \tr_n((a^{2^r}x^2+a^2x^{2^r})y+(a^{2^r}x+ax^{2^r})y^2+(ax^2+a^2x)y^{2^r}) \nonumber \\
    &=& \tr_n((a^{2^r}x^2+a^2x^{2^r}+a^{2^{r-1}}x^{2^{n-1}}+a^{2^{n-1}}x^{2^{r-1}}+a^{2^r}x^{2^{r+1}}+a^{2^{r+1}}x^{2^r})y) .
 \end{eqnarray}
Equation \eqref{kernel_ga} holds for all $y\in \mathbb{F}_{2^n}$ if and only if the coefficient of $y$ is zero, that is, 
\begin{equation}
\label{equ:thm2_ax}
    a^{2^r}x^2+a^2x^{2^r}+a^{2^{r-1}}x^{2^{n-1}}+a^{2^{n-1}}x^{2^{r-1}}+a^{2^r}x^{2^{r+1}}+a^{2^{r+1}}x^{2^r}=0.
\end{equation}
Let $
\begin{cases} 
y=x^{2^r}\\
b=a^{2^r}
\end{cases}
$. Thus
$
\begin{cases} 
x=y^{2^r}\\
a=b^{2^r}
\end{cases}
$.
Equation \eqref{equ:thm2_ax} becomes 
\begin{equation*}
    bx^2+a^2y+b^{\frac{1}{2}}x^{2^{n-1}}+a^{2^{n-1}}y^{\frac{1}{2}}+by^2+b^2y=0.
\end{equation*}
Squaring both sides of the above equation, we have
\begin{eqnarray}
   0&=& b^2x^4+a^4y^2+bx+ay+b^2y^4+b^4y^2 \nonumber\\
  \label{E1} &=& b^2(x+y)^4+y^2(a^4+b^4)+bx+ay \\
  \label{E2} &=& a^{2^{r+1}}(x^4+x^{2^{r+2}})+x^{2^{r+1}}(a^4+a^{2^{r+2}})+a^{2^r}x+ax^{2^r}.
\end{eqnarray}
Thus $\mathcal{E}_{D_af}$ is the set of $x\in \mathbb{F}_{2^n}$ such that \eqref{E1} is satisfied. We consider the following cases.

\textbf{Case 1}: $a\notin \mathbb{F}_{2^r}$, i.e., $a\neq b$.

\textbf{Subcase 1.1}: $x\in \mathbb{F}_{2^r}$, i.e. $x=y$. In this case, \eqref{E1} is equivalent to 
\begin{eqnarray}\label{subcase1_1}
   0 & = &(a^4+b^4)x^2+(a+b)x \nonumber \\
    & = & (a+b)x((a+b)^3x + 1).
\end{eqnarray}
The solutions to \eqref{subcase1_1} are $x \in \{0, (a+b)^{2^n-4}\}$.

\textbf{Subcase 1.2}: $x\notin \mathbb{F}_{2^r}$. Since $(a^{2^r}x+ax^{2^r})^{2^r} = a^{2^r}x+ax^{2^r}$, we have $a^{2^r}x+ax^{2^r}\in \mathbb{F}_{2^r}$. From \eqref{E2}, we have $$
a^{2^{r+1}}(x^4+x^{2^{r+2}})+x^{2^{r+1}}(a^4+a^{2^{r+2}})=a^{2^r}x+ax^{2^r},
$$
which implies that $a^{2^{r+1}}(x^4+x^{2^{r+2}})+x^{2^{r+1}}(a^4+a^{2^{r+2}})\in \mathbb{F}_{2^r}$. Since any element $\alpha \in \mathbb{F}_{2^r}$ satisfies equation $\alpha^{2^r} = \alpha$, we have
\begin{eqnarray}\label{x_neq_y}
    0&=& \left(a^{2^{r+1}}(x^4+x^{2^{r+2}})+x^{2^{r+1}}(a^4+a^{2^{r+2}}) \right)^{2^r} +\left(a^{2^{r+1}}(x^4+x^{2^{r+2}})+x^{2^{r+1}}(a^4+a^{2^{r+2}}) \right) \nonumber\\
    &=& (a^2+a^{2^{r+1}})(x^2+x^{2^{r+1}})^2+(x^2+x^{2^{r+1}})(a^2+a^{2^{r+1}})^2  \nonumber \\
    &=& (a^2+a^{2^{r+1}})(x^2+x^{2^{r+1}})( a^2+a^{2^{r+1}} + x^2+x^{2^{r+1}}).
\end{eqnarray}
Since $a,x\notin \mathbb{F}_{2^r}$, we have $a^2+a^{2^{r+1}} \not= 0$ and $x^2+x^{2^{r+1}} \not= 0$. Thus, by \eqref{x_neq_y}, we have $a^2+a^{2^{r+1}} + x^2+x^{2^{r+1}} = 0$, that is, $(x+a)^2=(x+a)^{2^{r+1}}$. As such, we have $x+a=(x+a)^{2^r}$, that is, $y=x+a+b$. So we claim that if $x\in \mathcal{E}_{D_af}$, then we must have $y=x+a+b$. On the other hand, let us solve  \eqref{E1} assuming $y = x+a+b$ is satisfied, which, in fact, must be satisfied, as we have shown. Plugging $y=x+a+b$ into equation \eqref{E1}, we have
\begin{eqnarray}
0&= &b^2(a+b)^4+(x+a+b)^2(a+b)^4+bx+a(x+a+b) \nonumber\\
    &= & (a+b)^4x^2+a^2(a+b)^4+(a+b)x+a(a+b) \nonumber \\
    &= & (a+b)^4(x+a)^2+(a+b)(x+a) \nonumber \\
    & = & (a+b)(x+a)((a+b)^3(x+a) + 1)\nonumber ,
\end{eqnarray}
which implies that $x=a$ or $x=(a+b)^{2^n-4} + a$.

In Case 1 where  $a\notin \mathbb{F}_{2^r}$, we conclude that $\mathcal{E}_{D_af}=\{0, (a+b)^{2^n-4},a,(a+b)^{2^n-4}+a\}$ and $\dim (\mathcal{E}_{D_af}) = 2$.

\hspace*{\fill}

\textbf{Case 2}: $a\in \mathbb{F}_{2^r}^*$. In this case, $b = a$, equation \eqref{E1} becomes
\begin{eqnarray}\label{Case_2}
    0&=&a^2(x+y)^4+a(x+y) \nonumber \\
    &= &a(x+y)(a(x+y)^3+1),
\end{eqnarray}
which implies that $y = x$ or $(x+y)^3=a^{2^r-2}$.

\textbf{Subcase 2.1}: $y=x$, i.e., $x^{2^r}=x$. In this case, $x^{2^r}=x$ if and only if $x\in \mathbb{F}_{2^r}$. Thus $\mathbb{F}_{2^r} \subseteq \mathcal{E}_{D_af}$.

\textbf{Subcase 2.2}: $(x+y)^3=a^{2^r-2}$. In this case, we consider the following two subcases according to the parity of $r$.

\begin{itemize}
    \item If $r$ is odd, we have $2^r-1 \equiv 1 \pmod 3$ and $\gcd(2^r-2,3)=3$. Thus $(x+y)^3=a^{2^r-2} $ implies 
    \begin{equation}\label{E4}
        x^{2^r}+x=a^{\frac{2^r-2}{3}}.
    \end{equation}
    Since $\mathbb{F}_{2^n}$ is a field extension of $\mathbb{F}_{2^r}$ of degree $2$, $x^{2^r}+x$ is the trace function from $\mathbb{F}_{2^n}$ to $\mathbb{F}_{2^r}$, 
    which is a $2^r$-to-1 mapping. So the number of solutions to \eqref{E4} is $2^r$. Combining with Subcase 2.1, we conclude that when $r$ is odd and $a \in \mathbb{F}_{2^r}^*$, we have $\dim (\mathcal{E}_{D_af})=r+1$.
    \item If $r$ is even, we have $\gcd(2^r-1,3)=3$. Let $G=\{g^{3s}\mid 0\le s \le \frac{2^r-4}{3}\}$ be the multiplicative group of order $\frac{2^r-1}{3}$ and $g$ is a primitive element of $\mathbb{F}_{2^r}$. If $a\notin G$, then $a^{2^r-2}$ is not a cube, that is, $(x+y)^3=a^{2^r-2}=a^{-1}$ has no roots. Combining with Subcase 2.1, we deduce that $\dim (\mathcal{E}_{g_a})=r$ when $a \notin G$ and $r$ is even. If $a\in G$, we have
\begin{equation}\label{E5}
    x+y=x^{2^r}+x=g^{-s+\frac{2^r-1}{3}i},\ \mbox{for}\ i=0,1,2,
\end{equation}
for some $0 \le s \le \frac{2^r-4}{3}$. Similarly, we can prove, for each $i = 0, 1, 2$, equation \eqref{E5} has exactly $2^r$ solutions. Thus we have $\dim (\mathcal{E}_{g_a})=r+2$ when $a\in G$ and $r$ is even. 
\end{itemize}

Summarizing all the cases above, we complete the proof.
\end{proof}

Combining Proposition \ref{lower_non} and Theorem \ref{dim_E_ga}, we can prove the the lower bound on the second-order nonlinearity of $\tr_n(x^{2^r+3})$ for $n=2r$.

\begin{proof} (of Theorem \ref{theorem_2})
When $r$ is odd, by Lemma \ref{Walsh_spec_quar} and Theorem \ref{dim_E_ga}, we have
    \begin{equation}\label{odd_r}
\nl(D_af(x)) = 
\begin{cases}
    2^{n-1}-2^{\frac{n+r-1}{2}},  &a\in \mathbb{F}_{2^r}^*,\\
    2^{n-1}-2^{\frac{n}{2}},  &a\in \mathbb{F}_{2^n} \setminus \mathbb{F}_{2^r}.
\end{cases}
\end{equation}
By Proposition \ref{lower_non} and \eqref{odd_r}, we have
\begin{eqnarray*}
    \nl_2(f) &\ge& 2^{n-1} - \frac{1}{2}\sqrt{2^{\frac{3n}{2}+1} + 2^{\frac{5n}{4}+\frac{1}{2}}- 2^n  - 2^{\frac{3n}{4}+\frac{1}{2}}} \\&
    = &2^{n-1}-2^{\frac{3n}{4}-\frac{1}{2}}-O(2^{\frac{n}{2}}).
\end{eqnarray*}
When $r$ is even, similarly, we can prove
\begin{equation} \label{even_r}
\nl(D_af(x)) = 
\begin{cases}
    2^{n-1}-2^{\frac{n+r}{2}}, &a\in G, \\
    2^{n-1}-2^{\frac{n+r}{2}-1}, &a\in \mathbb{F}_{2^r}^*\setminus G, \\
    2^{n-1}-2^{\frac{n}{2}}, & a\in \mathbb{F}_{2^n} \setminus \mathbb{F}_{2^r}.
\end{cases}
\end{equation}
where $G=\{g^{3s}\mid 0\le s \le \frac{2^r-4}{3}\}$ is the multiplicative group of order $\frac{2^r-1}{3}$ and $g$ is a primitive element of $\mathbb{F}_{2^r}$. By Proposition \ref{lower_non} and \eqref{even_r}, we have
\begin{eqnarray*}
        \nl_2(f)&\ge &2^{n-1}-\frac{1}{2}\sqrt{2^{\frac{3}{2}n+1}+\frac{1}{3}\cdot 2^{\frac{5}{4}n+2}-2^n-\frac{1}{3}\cdot 2^{\frac{3}{4}n+2}}\\&
       = &2^{n-1}-2^{\frac{3n}{4}-\frac{1}{2}}-O(2^{\frac{n}{2}}).
\end{eqnarray*}

\end{proof}

\section{Third-order nonlinearity}

The following proposition is proved by applying Proposition \ref{lower_non} twice.

\begin{proposition}\label{pro3}\cite{Car08}
    Let $f$ be any $n$-variable function and $r$ a positive integer smaller than $n$. We have
    \[
    \nl_r(f)\ge 2^{n-1}-\frac{1}{2}\sqrt{\sum_{a\in \mathbb{F}_{2^n}}\sqrt{2^{2n}-2\sum_{b\in \mathbb{F}_{2^n}} \nl_{r-2}(D_aD_bf)}}.
    \]
\end{proposition}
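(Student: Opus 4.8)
The plan is to compose Proposition \ref{lower_non} with itself, once at order $r$ and once at order $r-1$. First I would apply Proposition \ref{lower_non} directly to $f$, which gives
\[
\nl_r(f) \ge 2^{n-1} - \frac{1}{2}\sqrt{2^{2n} - 2\sum_{a\in\mathbb{F}_{2^n}} \nl_{r-1}(D_a f)}.
\]
Writing $A = \sum_{a} \nl_{r-1}(D_a f)$, the right-hand side equals $g(A) := 2^{n-1} - \frac{1}{2}\sqrt{2^{2n} - 2A}$, a well-defined real number because $\nl_{r-1}(D_a f)\le 2^{n-1}$ for every $a$ forces $A \le 2^{2n-1}$, keeping the radicand nonnegative. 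The map $g$ is monotonically increasing in $A$ on the range $A\le 2^{2n-1}$, so any lower bound on $A$ yields a lower bound on $g(A)$, hence on $\nl_r(f)$.

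Next I would bound each summand from below by applying Proposition \ref{lower_non} a second time, now to the function $D_a f$ at order $r-1$:
\[
\nl_{r-1}(D_a f) \ge 2^{n-1} - \frac{1}{2}\sqrt{2^{2n} - 2\sum_{b\in\mathbb{F}_{2^n}} \nl_{r-2}(D_b D_a f)}.
\]
Since $D_b D_a f = D_a D_b f$ (both equal $f(x)+f(x+a)+f(x+b)+f(x+a+b)$), the inner sum is exactly $\sum_{b} \nl_{r-2}(D_a D_b f)$, so the index symmetrization is immediate. Summing this inequality over all $a\in\mathbb{F}_{2^n}$ produces a lower bound $A_0$ on $A$, namely
\[
A \ge A_0 := \sum_{a}\left(2^{n-1} - \frac{1}{2}\sqrt{2^{2n} - 2\sum_{b} \nl_{r-2}(D_a D_b f)}\right).
\]

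The final step is the key algebraic simplification: substituting $A_0$ into $g$ and observing a cancellation. Because $\sum_{a} 2^{n-1} = 2^n\cdot 2^{n-1} = 2^{2n-1}$, the constant terms satisfy $2\sum_{a} 2^{n-1} = 2^{2n}$, so that the leading $2^{2n}$ is annihilated and
\[
2^{2n} - 2A_0 = \sum_{a} \sqrt{2^{2n} - 2\sum_{b} \nl_{r-2}(D_a D_b f)}.
\]
Plugging this into $g$ and using $\nl_r(f)\ge g(A)\ge g(A_0)$ (monotonicity) gives precisely
\[
\nl_r(f) \ge 2^{n-1} - \frac{1}{2}\sqrt{\sum_{a} \sqrt{2^{2n} - 2\sum_{b} \nl_{r-2}(D_a D_b f)}},
\]
which is the claimed inequality. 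There is no genuine obstacle here; the only point requiring care is the monotonicity step, i.e., confirming that $A\le 2^{2n-1}$ so that replacing $A$ by its lower bound $A_0$ inside the outer square root is legitimate and preserves the inequality direction. The exact cancellation of the leading $2^{2n}$ term is what makes the nested-radical form come out cleanly.
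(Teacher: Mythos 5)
Your proof is correct and matches the paper's approach exactly: the paper simply notes that Proposition \ref{pro3} follows by applying Proposition \ref{lower_non} twice, which is precisely your composition argument, including the monotonicity of $A \mapsto 2^{n-1}-\frac{1}{2}\sqrt{2^{2n}-2A}$ and the cancellation of the leading $2^{2n}$ term. No issues.
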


By the above proposition, our goal is to estimate the nonlinearities of the second-order derivatives of $\tr_n(x^{15})$. Observe that
\begin{eqnarray*}
\sum_{a\in \mathbb{F}_{2^n}}\sqrt{2^{2n}-2\sum_{b\in \mathbb{F}_{2^n}} \nl_{r-2}(D_aD_bf)}
& = & \sum_{a\in \mathbb{F}_{2^n}}\sqrt{2^{2n}-2\sum_{b\in \mathbb{F}_{2^n}} \nl_{r-2}(D_aD_{ab}f)} \\    
& = & \sum_{a\in \mathbb{F}_{2^n}}\sqrt{2^{2n}-2\sum_{b\in \mathbb{F}_{2^n}} \nl_{r-2}(D_{ab}D_{a}f)}.
\end{eqnarray*}
Thus it is equivalent to estimate the first-order nonlinearity of $D_{ab}D_{a}f$ for all $a, b \in \mathbb{F}_{2^n}$.

\begin{lemma}
\label{lem:PQR}
Let $f = \tr_n(x^{15})$. For any  $a\in \mathbb{F}_{2^n}$ and $b\in \mathbb{F}_{2^n}$, element $x\in \mathbb{F}_{2^n}$ is in the linear kernel of $D_{ab}D_af$ if and only if $P(x,a,b)=0$, where
\begin{equation}
\label{equ:def_P}
P(x, a, b)=Q(x,a,b)(Q(x,a,b)+1)
\end{equation}
and 
\begin{equation}
\label{equ:def_Q}
Q(x,a,b)=(b^2+b)^{-4} R(x,a,b)(R(x,a,b)+1)
\end{equation}
and
\begin{equation}
\label{equ:def_R}
R(x,a,b)=a^{30}(b^2+b)^6\left((x^2+x)^2+(x^2+x)(b^2+b)\right)^4+ a^{15}(b^2+b)^5\left((x^2+x)^2+(x^2+x)(b^2+b)\right).
\end{equation}
\end{lemma}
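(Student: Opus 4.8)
The plan is to follow the same route used in the proof of Theorem \ref{dimension_kernel} for $\tr_n(x^7)$: compute the quadratic second-order derivative explicitly, extract its bilinear form, reduce the kernel condition to a single linearized (``$2$-polynomial'') equation in $x$, and finally massage that equation into the claimed nested factorization.

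First I would compute $D_a f$. Since $\deg \tr_n(x^{15}) = 4$, writing $15 = 2^3+2^2+2^1+2^0$ and using $(x+a)^{2^i} = x^{2^i}+a^{2^i}$ gives $(x+a)^{15} = \prod_{i=0}^{3}(x^{2^i}+a^{2^i}) = \sum_{S \subseteq \{0,1,2,3\}} x^{e(S)} a^{15-e(S)}$, where $e(S)=\sum_{i\in S}2^i$. Hence $D_a f(x) = \tr_n\bigl(\sum_{S \subsetneq \{0,1,2,3\}} a^{15-e(S)} x^{e(S)}\bigr)$, a function of algebraic degree $\le 3$ in $x$. Applying the second derivative $D_{ab}$ cancels the degree-$3$ part and yields a quadratic $q(x) := D_{ab}D_a f(x)$.

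Next, using Definition \ref{kernel}, $x \in \mathcal{E}_{D_{ab}D_a f}$ iff the associated bilinear form $B(x,y)=q(0)+q(x)+q(y)+q(x+y)$ vanishes for all $y$. I would collect the cross terms, which all have the shape $\tr_n(c\,x^{2^i}y^{2^j})$ with $c$ a monomial in $a,b$, and normalize each via $\tr_n(w^{2^k})=\tr_n(w)$ (i.e.\ $\tr_n(c\,x^{2^i}y^{2^j}) = \tr_n(c^{2^{-j}} x^{2^{i-j}} y)$) so that $B(x,y) = \tr_n(\Lambda(x)\,y)$ for a single $2$-polynomial $\Lambda(x)$ in $x$ with coefficients in $a,b$. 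Then $B(x,\cdot)\equiv 0$ iff $\Lambda(x)=0$, so the kernel is exactly the root set of $\Lambda$.

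The final and most delicate step is to transform $\Lambda(x)=0$ into $P(x,a,b)=0$. Here I would raise the equation to suitable powers of $2$ to clear the inverse-Frobenius exponents (exactly as the $4$th-power step in Theorem \ref{dimension_kernel}), and introduce the abbreviations $u=x^2+x$ and $v=b^2+b$, under which the surviving expression organizes into $W = u^2+uv = u(u+v)$. Repeatedly using the characteristic-$2$ identity $T^2+T = T(T+1)$, the equation should peel off as $R(R+1)=0$ at the innermost layer, then $Q(Q+1)=0$ with $Q=v^{-4}R(R+1)$, and finally $P = Q(Q+1)=0$, matching \eqref{equ:def_P}--\eqref{equ:def_R}. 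The main obstacle is purely the bookkeeping: carrying out the large expansion of $q$ and its bilinear form without error, and recognizing, after the power-raising, the precise Artin--Schreier ($T(T+1)$) pattern that produces the three nested factors. Verifying that $x$ enters only through $u=x^2+x$ (equivalently that $\mathbb{F}_2$ lies in every such kernel) is the structural fact that makes the clean factorization possible.
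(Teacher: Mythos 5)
Your overall route is the same as the paper's (second derivative $\to$ bilinear form $\to$ a single linearized equation $\Lambda(x)=0$ $\to$ raise to a power of $2$ $\to$ nested $T(T+1)$ factorization), but you have dropped the one ingredient that makes the stated $P(x,a,b)$ come out: the normalization $x\mapsto ax$. The paper does not work with $D_{ab}D_af(x)$ directly; it works with $g(x)=(D_{ab}D_af)(ax)=D_b\bigl((D_af)(a\cdot)\bigr)(x)$, for which $(D_af)(ax)=\tr_n\bigl(a^{15}\sum_{i=0}^{14}x^i\bigr)$ has a single scalar coefficient $a^{15}$, and then invokes Proposition \ref{equivalent} to identify $\mathcal{E}_{g}$ with $\mathcal{E}_{D_{ab}D_af}$. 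It is precisely this rescaling that makes the derivative direction become $b$ (rather than $ab$), makes the four trivial kernel elements be $0,1,b,b+1$, and makes $x$ enter only through $u=x^2+x$ alongside $v=b^2+b$. If you expand $q(x)=D_{ab}D_af(x)$ in the original coordinates, as your first two paragraphs describe, each cross term carries a coefficient $a^{15-e(S)}(ab)^{\cdots}$, the resulting $\Lambda$ has root set $a\cdot\{x: P(x,a,b)=0\}$ rather than $\{x: P(x,a,b)=0\}$, and your final claim that ``$x$ enters only through $u=x^2+x$'' is false for that unrescaled polynomial --- so the bookkeeping you flag as the main obstacle will not terminate in the stated factorization. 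The fix is small (substitute $x\mapsto ax$ before, or rescale the roots after), but as written the plan proves a statement about a dilate of the kernel, not the lemma's $P$.

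Two smaller omissions: you should dispose of the degenerate cases $a=0$ and $b\in\{0,1\}$ separately (there $D_{ab}D_af\equiv 0$, the kernel is all of $\mathbb{F}_{2^n}$, and $(b^2+b)^{-4}$ in \eqref{equ:def_Q} is not even defined), and the final power you need to clear the inverse Frobenius exponents is specifically the $8$th power (the lowest coefficient carries $2^{-3}$), which is what produces the degree-$64$ polynomial $P$ used later in Lemma \ref{lem:N_PQR_range}.
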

\begin{proof}
For any $a, b \in \mathbb{F}_{2^n}$, we have 
\begin{eqnarray*}
(D_af)(ax) &=& \tr_n((ax)^{15})+\tr_n((ax+a)^{15}) \\
  &=& \tr_n((ax)^{15}+(ax+a)^{15})\\
  &=& \tr_n(a^{15} (\sum_{i=0}^{14} x^i)),
\end{eqnarray*}
and
\begin{eqnarray*}
        &&D_{b}((D_af)(ax))\\&=&(D_{ab}D_af)(ax)\\
        &=& \tr_n(a^{15}( (b^2+b)x^{12}+(b^4+b)x^{10}+(b^4+b^2)x^{9}+ (b^8+b)x^6+(b^8+b^2)x^5 +(b^8+b^4)x^3 ))+ l(x),
\end{eqnarray*}
where $l(x)$ is an affine function. By Proposition \ref{equivalent}, we have $\mathcal{E}_{D_{ab}D_af(ax)}=\mathcal{E}_{D_{ab}D_af}$ for any $b\in \mathbb{F}_{2^n} \setminus \{0,1\}$ and $a\in \mathbb{F}_{2^n}^*$. (When $a = 0$ or $b \in \{0, 1\}$, $D_{ab} D_a f$ becomes 0, so the conclusion holds obviously.)

For convenience, let $g(x)=D_{ab}D_af(ax)$. We have  $\mathcal{E}_{D_{ab}D_af(ax)}=\{x\in\mathbb{F}_{2}^n\mid B(x,y)=g(0)+g(x)+g(y)+g(x+y)=0\ \text{for all $y\in \mathbb{F}_{2}^n$}\}$ by definition. By somewhat tedious computation, we have
\begin{eqnarray*}
         B(x,y)&=& g(0)+g(x)+g(y)+g(x+y)
         \\&= &\tr_n(a^{15}((b^2+b)x^4+(b^4+b)x^2+(b^4+b^2)x)y^8)\\
         & &+\tr_n(a^{15}((b^2+b)x^8+(b^8+b)x^2+(b^8+b^2)x)y^4 )\\
         & & +\tr_n(a^{15}((b^4+b)x^8+(b^8+b)x^4+(b^8+b^4)x)y^2 )\\
         & & +\tr_n(a^{15}((b^4+b^2)x^8+(b^8+b^2)x^4+(b^8+b^4)x^2)y) .
 \end{eqnarray*}
Using the properties of the trace function, we have
\begin{eqnarray*}
         B(x,y)
          &=&  \tr_n(((a^{15}((b^2+b)x^4+(b^4+b)x^2+(b^4+b^2)x))^{2^{-3}} \\
         & & +(a^{15}((b^2+b)x^8+(b^8+b)x^2+(b^8+b^2)x))^{2^{-2}}  \\
         & & +(a^{15}((b^4+b)x^8+(b^8+b)x^4+(b^8+b^4)x))^{2^{-1}} \\ 
         & & +a^{15}((b^4+b^2)x^8+(b^8+b^2)x^4+(b^8+b^4)x^2))y) .
    \end{eqnarray*}
It is clear that $B(x,y)=0$ for all $y\in \mathbb{F}_{2^n}$ if and only if the coefficient of $y$ is zero, that is,
\begin{eqnarray*}
        0&=&(a^{15}((b^2+b)x^4+(b^4+b)x^2+(b^4+b^2)x))^{2^{-3}}
         \\ 
         &+&(a^{15}((b^2+b)x^8+(b^8+b)x^2+(b^8+b^2)x))^{2^{-2}} 
         \\
         &+&(a^{15}((b^4+b)x^8+(b^8+b)x^4+(b^8+b^4)x))^{2^{-1}} 
         \\
         &+&a^{15}((b^4+b^2)x^8+(b^8+b^2)x^4+(b^8+b^4)x^2).
\end{eqnarray*}
Raising both sides of the above equation to the $8$th power, we get $P(x, a, b) = 0$, as desired.
\end{proof}

Let $\numroots_P(a,b)$ denote by the number of $x\in \mathbb{F}_{2^n}$ such that $P(x,a,b)=0$ where $a\neq 0$ and $b\neq 0,1$; let $\numroots_Q(a,b)$ denote by the number of $x\in \mathbb{F}_{2^n}$ such that $Q(x,a,b)=0$; let $\numroots_{Q+1}(a,b)$ denote by the number of $x\in \mathbb{F}_{2^n}$ such that $Q(x,a,b)+1=0$; let $\numroots_R(a,b)$ denote the number of $x\in \mathbb{F}_{2^n}$ such that $R(x,a,b)=0$, and $\numroots_{R+1}(a,b)$ denote the number of $x\in \mathbb{F}_{2^n}$ such that $R(x,a,b)+1=0$.

\begin{lemma}
 \label{lem:N_PQR_range}
 Let $a \in \mathbb{F}_{2^n}^*$ and $b \in \mathbb{F}_{2^n} \setminus \{0, 1\}$, and let polynomials $P(x, a, b)$, $Q(x, a, b)$ and $R(x, a, b)$ be defined as in Lemma \ref{lem:PQR}. We have $\numroots_P(a,b)=\numroots_Q(a,b)+\numroots_{Q+1}(a,b)$, and $ \numroots_Q(a,b)=\numroots_R(a,b)+\numroots_{R+1}(a,b)$. In addition,
 \begin{itemize}
     \item $\numroots_Q(a,b)\in \{2,2^2,2^3,2^4,2^5\}$ and $\numroots_{Q+1}(a,b)\le 32$.
     \item $\numroots_R(a,b)\in \{2^2,2^3,2^4\}$ and $\numroots_{R+1}(a,b)\le 16$.
     \item When $n$ is even, $\numroots_P(a,b) \in \{2^2,2^4,2^6\}$; when $n$ is odd, $\numroots_P(a,b) \in \{2^1,2^3,2^5\}$.
\end{itemize}
\end{lemma}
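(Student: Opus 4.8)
The plan is to exploit the nested \emph{linearized} structure of the three polynomials in the variable $x$. Writing $u = (x^2+x)^2 + (x^2+x)(b^2+b)$, a direct expansion gives $u = x^4 + (1+b^2+b)x^2 + (b^2+b)x$, which is a $2$-polynomial in $x$. Since $R(x,a,b) = a^{30}(b^2+b)^6 u^4 + a^{15}(b^2+b)^5 u$ is an $\mathbb{F}_{2^n}$-linear combination of the $2$-polynomials $u^4$ and $u$, it is itself a $2$-polynomial in $x$; and because squaring and adding preserve the $2$-polynomial property, $Q = (b^2+b)^{-4}(R^2+R)$ and $P = Q^2+Q$ are $2$-polynomials in $x$ as well. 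Hence, by Lemma \ref{root_number1}, each of $\numroots_R(a,b)$, $\numroots_Q(a,b)$, $\numroots_P(a,b)$ is a power of $2$, and the root sets of $R$, $Q$, $P$ are $\mathbb{F}_2$-subspaces.

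First I would record the two additive decompositions. As $P = Q(Q+1)$ and $Q$, $Q+1$ cannot vanish simultaneously in characteristic $2$, the root set of $P$ is the disjoint union of those of $Q$ and $Q+1$, giving $\numroots_P = \numroots_Q + \numroots_{Q+1}$; since $(b^2+b)^{-4} \ne 0$ (because $b \notin \{0,1\}$), the identity $Q = (b^2+b)^{-4} R(R+1)$ similarly yields $\numroots_Q = \numroots_R + \numroots_{R+1}$. Moreover, because $R$ and $Q$ are additive maps, the affine equations $R+1=0$ and $Q+1=0$ have solution sets that are either empty or cosets of $\ker R$ and $\ker Q$, respectively; thus $\numroots_{R+1} \in \{0, \numroots_R\}$ and $\numroots_{Q+1} \in \{0, \numroots_Q\}$.

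Next I would pin down $\numroots_R$. For the upper bound, $R$ has $x$-degree $16 = 2^4$ with leading coefficient $a^{30}(b^2+b)^6 \ne 0$, so $\numroots_R \le 2^4$. For the lower bound, each $x \in \{0, 1, b, b+1\}$ forces $u = 0$ (indeed $x^2+x \in \{0, b^2+b\}$ there, and $u$ vanishes at both), hence is a root of $R$; these four values are distinct because $b \notin \{0,1\}$, so $\numroots_R \ge 4$. Therefore $\numroots_R \in \{2^2, 2^3, 2^4\}$ and $\numroots_{R+1} \le 16$. Propagating through the decompositions gives $\numroots_Q = \numroots_R + \numroots_{R+1} \in \{4, 8, 16, 32\} \subseteq \{2, 2^2, 2^3, 2^4, 2^5\}$ and $\numroots_{Q+1} \le \numroots_Q \le 32$; likewise $\numroots_P = \numroots_Q + \numroots_{Q+1} \in \{2^2, 2^3, 2^4, 2^5, 2^6\}$.

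Finally, the even/odd split for $\numroots_P$ comes from a parity argument rather than further root counting, which is the one genuinely non-mechanical step. By Lemma \ref{lem:PQR}, the roots of $P$ are exactly the linear kernel of the quadratic function $D_{ab}D_a f$ (quadratic since $\tr_n(x^{15})$ has degree $4$), so by Lemma \ref{parity_dim} the dimension of this kernel has the same parity as $n$. Combining this with $\numroots_P \in \{2^2, \dots, 2^6\}$: for even $n$ the admissible exponents are $2, 4, 6$, giving $\numroots_P \in \{2^2, 2^4, 2^6\}$; for odd $n$ they are $3, 5$, so $\numroots_P \in \{2^3, 2^5\} \subseteq \{2^1, 2^3, 2^5\}$. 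The bulk of the work is the linearized-structure bookkeeping, which I expect to be routine once the $2$-polynomial form is recognized; the only real care is needed in verifying the degree and leading coefficient of $R$ and in the parity step that converts the power-of-two range into the stated even/odd lists.
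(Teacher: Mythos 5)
Your proposal is correct and follows essentially the same route as the paper: recognizing $P$, $Q$, $R$ as $2$-polynomials in $x$, using the factorizations to get the additive root-count identities, exhibiting the four roots $0,1,b,b+1$ of $R$, and invoking Lemma \ref{parity_dim} for the even/odd split of $\numroots_P$. Your coset observation that $\numroots_{R+1}\in\{0,\numroots_R\}$ and $\numroots_{Q+1}\in\{0,\numroots_Q\}$ is a mild sharpening of the paper's degree bounds, but does not change the argument in any essential way.
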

\begin{proof}

  Notice that $P(x,a,b)$ is a $2$-polynomial (in variable $x$) of degree 64; the number of roots for equation $P(x,a,b)=0$ is at most 64. By Lemma \ref{parity_dim}, we know that $\dim (\mathcal{E}_{D_{ab}D_af})$ and $n$ have the same parity. Therefore, when $n$ is even, $\numroots_P(a,b)\in \{2^2,2^4,2^6\}$; when $n$ is odd, $\numroots_P(a,b)\in \{2^1,2^3,2^5\}$. (Note that, when $b \notin \{0, 1\}$, $R(x, a, b) = 0$ has at least 4 roots $0, 1, b, b+1$, which implies that $P(x, a, b) = 0$ has at least 4 roots.)

 Since $P(x, a, b) = Q(x, a, b)(Q(x, a, b) + 1)$, we have $\numroots_P(a,b)=\numroots_Q(a,b)+\numroots_{Q+1}(a,b)$. Observe that $Q(x,a,b)$ is a $2$-polynomial of degree 32, we have $\numroots_Q(a,b)\in \{2,2^2,2^3,2^4,2^5\}$.

  From \eqref{equ:def_Q}, we have $\numroots_Q(a,b)=\numroots_R(a,b)+\numroots_{R+1}(a,b)$, when $b \notin \{0, 1\}$. Clearly, $R(x,a,b)$ is a $2$-polynomial of degree 16 in variable $x$. Note that $x=0,1,b,b+1$ are the four different roots whenever $b\in \mathbb{F}_{2^n}\setminus \{0,1\}$, then $\numroots_R(a,b)\in \{2^2,2^3,2^4\}$. On the other hand, the degree of $R(x,a,b)+1$ is 16, so $\numroots_{R+1}(a,b)\le 16$. Since $Q(x,a,b)$ is a 2-polynomial of degree 32, we have $\numroots_{Q}(a,b)\in \{2^2,2^3,2^4,2^5\}$. In the following, we lower bound  the number of $b\in \mathbb{F}_{2^n}\setminus \{0,1\}$ such that $\numroots_{P}(a,b)\le 16$ for a fixed $a\in \mathbb{F}_{2^n}^*$. 
\end{proof}

\begin{lemma}\label{dim_4_even}
Let $n$ be even. Let $a \in \mathbb{F}_{2^n}^*$ and $b \in \mathbb{F}_{2^n} \setminus \{0, 1\}$. If $\numroots_R(a,b)= 4$, then $\numroots_P(a,b)\le 16$. 
\end{lemma}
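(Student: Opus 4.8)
The plan is to exploit the $\mathbb{F}_2$-linearity of the linearized polynomials $R$ and $Q$, combined with the two decompositions $\numroots_P(a,b)=\numroots_Q(a,b)+\numroots_{Q+1}(a,b)$ and $\numroots_Q(a,b)=\numroots_R(a,b)+\numroots_{R+1}(a,b)$ recorded in Lemma~\ref{lem:N_PQR_range}. The single principle driving everything is this: for a linearized polynomial $L(x)$ over $\mathbb{F}_{2^n}$, the map $x\mapsto L(x)$ is $\mathbb{F}_2$-linear, so its image is a subspace and the solution set of $L(x)=c$ for any fixed $c$ is either empty or a coset of $\ker L$; hence $L(x)=c$ has at most as many roots as $L(x)=0$.

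First I would apply this to $R$. Lemma~\ref{lem:N_PQR_range} already records that $R(x,a,b)$ is a $2$-polynomial in $x$ (it is built by composing the linearized maps $x\mapsto x^2+x$, $u\mapsto u^2+(b^2+b)u$, and a fourth power), so its roots form an $\mathbb{F}_2$-subspace of size $\numroots_R(a,b)=4$. By the principle above, $R(x,a,b)=1$ has either $0$ or exactly $4$ solutions, so $\numroots_{R+1}(a,b)\le 4$ and therefore $\numroots_Q(a,b)=\numroots_R(a,b)+\numroots_{R+1}(a,b)\le 8$.

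Next I would repeat the argument one level up. Since $Q(x,a,b)=(b^2+b)^{-4}\bigl(R(x,a,b)^2+R(x,a,b)\bigr)$ is a nonzero scalar multiple of a linearized polynomial (note $b^2+b\ne 0$ as $b\notin\{0,1\}$), it is itself linearized, as Lemma~\ref{lem:N_PQR_range} also notes. Hence $Q(x,a,b)=1$ has either $0$ or $\numroots_Q(a,b)$ solutions, giving $\numroots_{Q+1}(a,b)\le \numroots_Q(a,b)\le 8$. Chaining the two estimates then yields
\[
\numroots_P(a,b)=\numroots_Q(a,b)+\numroots_{Q+1}(a,b)\le 2\,\numroots_Q(a,b)\le 16,
\]
which is exactly the claim.

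I do not expect a genuine obstacle in this lemma; the only delicate point is confirming that $R$ and $Q$ really are linearized in $x$ so that the coset principle applies, and this has already been established in the excerpt. It is worth observing that the parity constraint $\numroots_P(a,b)\in\{2^2,2^4,2^6\}$ for even $n$ is not needed at all — the bound $\numroots_P(a,b)\le 16$ follows directly from the two coset inequalities.
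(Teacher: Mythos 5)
Your proof is correct, but it takes a genuinely different route from the paper's. The paper bounds $\numroots_{R+1}(a,b)$ only by $\deg(R+1)=16$, obtaining $\numroots_Q(a,b)\le 20$, and then invokes the structural facts from Lemma~\ref{lem:N_PQR_range} twice: first that $\numroots_Q(a,b)$ is a power of $2$ at most $32$ (forcing $\numroots_Q(a,b)\le 16$), and then that $\numroots_P(a,b)\in\{2^2,2^4,2^6\}$ for even $n$ together with $\numroots_{Q+1}(a,b)\le 32$ (forcing $\numroots_P(a,b)\le 48$, hence $\le 16$). You instead use the coset principle for the additive maps $x\mapsto R(x,a,b)$ and $x\mapsto Q(x,a,b)$ to get $\numroots_{R+1}(a,b)\le\numroots_R(a,b)=4$ and $\numroots_{Q+1}(a,b)\le\numroots_Q(a,b)$, which yields the sharper chain $\numroots_Q(a,b)\le 8$ and $\numroots_P(a,b)\le 16$ with no appeal to parity or to the power-of-two structure at all. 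Your argument is cleaner and strictly more informative (it also shows $\dim(\mathcal{E}_{D_{ab}D_af})\le 3$ worth of roots at the $Q$ level, and it would transfer verbatim to odd $n$, where combined with the parity constraint it would recover the conclusion of Lemma~\ref{dim4_odd_prev} far more simply than the paper's quartic-equation analysis); the only ingredient you must verify, that $R$ and $Q$ are linearized in $x$, is indeed established in the paper.
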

\begin{proof}
Since $\numroots_R(a, b) = 4$, we have $\numroots_Q(a, b) \le 4 + \deg(R+1) = 20$. Note that $\numroots_{Q}(a,b)=\numroots_{R}(a,b) + \numroots_{R+1}(a,b) \in \{2^2,2^3,2^4,2^5\}$ by Lemma \ref{lem:N_PQR_range}. So we have $\numroots_Q(a, b) \le 16$.

Note that $\numroots_{P}(a,b)=\numroots_{Q}(a,b)+\numroots_{Q+1}(a,b)$, and $\numroots_{P}(a,b) \in \{2^2,2^4,2^6\}$ by Lemma \ref{lem:N_PQR_range}. So we have $\numroots_{P}(a,b) \le 16$.
\end{proof}

By Lemma \ref{dim_4_even}, when $n$ is even, to lower bound the number of $b\in \mathbb{F}_{2^n}\setminus \{0,1\}$ where $ \dim(\mathcal{E}_{D_{ab}D_af}) \le 4$, it suffices to lower bound the number of $b\in \mathbb{F}_{2^n}\setminus \{0,1\}$ where $ \numroots_R(a,b)=4$.

\begin{theorem}\label{th:N_R_even}
Let $n$ be even. For any $a \in \mathbb{F}_{2^n}^*$, there are at least $\frac{1}{3}\cdot (2^{n+1}-2^{\frac{n}{2}+1}-4)$ elements $b \in \mathbb{F}_{2^n}\setminus \{0,1\} $ such that $\numroots_R(a,b) = 4$.
\end{theorem}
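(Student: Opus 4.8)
The plan is to factor the polynomial $R(x,a,b)$ of \eqref{equ:def_R} completely and isolate the roots that are present for every admissible $b$. Writing $v=b^2+b$ and $w=(x^2+x)^2+(x^2+x)(b^2+b)$, equation \eqref{equ:def_R} reads $R=a^{15}v^5 w\,(a^{15}v w^3+1)$. Since $a\neq 0$ and $v\neq 0$ (as $b\notin\{0,1\}$), the factor $a^{15}v^5$ never vanishes, so the roots of $R$ split into those with $w=0$ and those with $a^{15}vw^3+1=0$. The first group is $w=(x^2+x)\big((x^2+x)+(b^2+b)\big)=0$, whose solutions are exactly $x\in\{0,1,b,b+1\}$, four distinct elements whenever $b\notin\{0,1\}$; hence $\numroots_R(a,b)\ge 4$ always. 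Consequently $\numroots_R(a,b)=4$ as soon as the second factor $a^{15}vw^3+1$ has no root, and it suffices to exhibit at least $\tfrac13(2^{n+1}-2^{\frac n2+1}-4)$ values of $b$ for which this happens.

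The key observation I would use is that $a^{15}vw^3+1=0$ can be solvable only if $(a^{15}v)^{-1}$ is a cube in $\mathbb{F}_{2^n}$, since $w^3$ is automatically a cube. As $n$ is even, $3\mid 2^n-1$, so cubes form an index-$3$ subgroup of $\mathbb{F}_{2^n}^*$; moreover $a^{15}=(a^5)^3$ is always a cube, so $(a^{15}v)^{-1}$ is a cube if and only if $v$ is a cube. (This cleanly decouples the condition from $a$.) Thus, whenever $v=b^2+b$ is a \emph{non-cube}, the second factor is unsolvable and $\numroots_R(a,b)=4$. It therefore suffices to count $b\in\mathbb{F}_{2^n}\setminus\{0,1\}$ with $b^2+b$ a non-cube. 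Since $b\mapsto b^2+b$ is a $2$-to-$1$ map from $\mathbb{F}_{2^n}\setminus\{0,1\}$ onto the nonzero trace-zero elements $\{v:\tr_n(v)=0\}\setminus\{0\}$, this count equals twice the number of nonzero non-cubes $v$ with $\tr_n(v)=0$.

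The only real work is to estimate the number of nonzero cubes lying in the trace-zero hyperplane, for which I would use Gauss sums. Let $\chi$ be a multiplicative character of order $3$ and $\psi(v)=(-1)^{\tr_n(v)}$ the canonical additive character, with $G(\chi^j)=\sum_{v\neq 0}\psi(v)\chi^j(v)$. Writing the cube-indicator as $\tfrac13\sum_{j=0}^2\chi^j(v)$ and the trace-zero indicator as $\tfrac12(1+\psi(v))$, the number of nonzero cubes in the hyperplane expands into $\tfrac16\sum_{v\neq0}\sum_j\chi^j(v)$ (the inner sum is $2^n-1$ for $j=0$ and $0$ for $j=1,2$), the term $\tfrac16\sum_{v\neq0}\psi(v)=-\tfrac16$, and $\tfrac16(G(\chi)+G(\chi^2))$. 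Using $|G(\chi^j)|=2^{\frac n2}$ and $G(\chi^2)=\overline{G(\chi)}$ (so $G(\chi)+G(\chi^2)=2\,\Re\,G(\chi)$ is real with absolute value at most $2^{\frac n2+1}$), I expect to obtain that this number equals $\tfrac16(2^n-2)+\tfrac13\,\Re\,G(\chi)\le \tfrac16(2^n-2)+\tfrac13 2^{\frac n2}$. Subtracting from the total $2^{n-1}-1$ of nonzero trace-zero elements leaves at least $\tfrac13(2^n-2^{\frac n2}-2)$ non-cubes, and multiplying by $2$ yields the claimed $\tfrac13(2^{n+1}-2^{\frac n2+1}-4)$.

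The main, and essentially the only, obstacle is pinning down the constant in the character-sum step so that the $O(2^{\frac n2})$ error matches the stated $2^{\frac n2+1}$ exactly; the factorization, the identification of the four universal roots, and the cube criterion are all elementary. I would also note that only the non-cube values of $v$ are used as a sufficient condition: any cube $v$ for which the three cube roots of $(a^{15}v)^{-1}$ all lie outside the image of $x\mapsto w(x)$ would contribute further $b$ with $\numroots_R(a,b)=4$, so discarding those cases only strengthens the inequality and keeps the estimate a valid lower bound.
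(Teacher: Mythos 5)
Your proposal is correct and follows essentially the same route as the paper: factor $R$ as $a^{15}v^5w\,(a^{15}vw^3+1)$, identify the four universal roots $\{0,1,b,b+1\}$ from $w=0$, observe that $a^{15}$ is a cube so the second factor is unsolvable exactly when $b^2+b$ is a non-cube, and then count such $b$ via the $2$-to-$1$ map $b\mapsto b^2+b$ onto the trace-zero hyperplane. The only (cosmetic) difference is in the final count: you estimate the number of non-cubes in the hyperplane directly with cubic Gauss sums ($|G(\chi)+G(\chi^2)|\le 2^{\frac n2+1}$), whereas the paper runs a $3$-to-$1$ mapping argument for $x^3=b^2+b$ and invokes the Weil bound $\wt(\tr_n(x^3))\ge 2^{n-1}-2^{\frac n2}$ --- the two are the same square-root-cancellation estimate and yield the identical constant $\frac13(2^{n+1}-2^{\frac n2+1}-4)$.
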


\begin{proof}
When $x\notin \{0,1,b,b+1\}$, we have $(x^2+x)^2+(x^2+x)(b^2+b)\neq 0$. Let $y=(x^2+x)^2+(x^2+x)(b^2+b)$. Since $R(x,a,b)=0$, we can deduce that
\begin{equation}
\label{equ:ycubeab}
    y^3=\frac{1}{a^{15}(b^2+b)}.
\end{equation}

Let $G=\{g^{3s}\mid 0\le s\le \frac{2^n-4}{3}\}$, where $g$ is a primitive element. If $b^2+b \not\in G$, it is clear that \eqref{equ:ycubeab} has no solution, which implies that $\solnum_R( a, b) = 4$. Next, we prove there are at least $\frac{1}{3} \cdot(2^{n+1}-2^{\frac{n}{2}+1}-4)$ elements $b$ such that $b^2+b \not\in G$, which will complete our proof.

We estimate the number of elements $b\in \mathbb{F}_{2^n} \setminus \{0,1\}$ such that $b^2+b\notin G$. Let $s_1$ denote the number of elements $b\in \mathbb{F}_{2^n} \setminus \{0,1\}$ such that $b^2+b\notin G$; let $s_2$ denote the number of elements $b\in \mathbb{F}_{2^n} \setminus \{0,1\}$ such that $b^2+b\in G$. 

Consider equation 
\begin{equation}
\label{equ:xcube_bsquare_b}
x^3= b^2+b.
\end{equation}
Observe that
\begin{itemize}
    \item Equation \eqref{equ:xcube_bsquare_b} has (at least) a solution in variable $x$ if and only if $b^2+b \in G$.
    \item It is well known that (for example, see page 536 in \cite{Car21}) equation $b^2+b=c$ has two solutions if and only if $\tr_n(c) = 0$, otherwise the equation has no solution. Thus, equation \eqref{equ:xcube_bsquare_b} in variable $b$ has a solution if and only if $\tr_n(x^3) = 0$.
\end{itemize}
 For any fixed $b$, denote the set of solutions by $X_b$, and let $X = \cup_{b \in \mathbb{F}_{2^n}\setminus \{0,1\}} X_b$. Consider the mapping $\phi(x) : X \to \mathbb{F}^*_{2^n}$, where $\phi(x) = x^3$. Notice that $\phi(x): \mathbb{F}_{2^n}^* \to \mathbb{F}_{2^n}^*$ is a 3-to-1 mapping on $\mathbb{F}_{2^n}^*$. Furthermore, mapping $\phi: X \to \mathbb{F}^*_{2^n}$ is also a 3-to-1 mapping on $X$. Otherwise, there exist $x_1\in X$, $x_2\notin X$ such that $x_1^3=x_2^3$ and $\tr_n(x_1^3)\neq \tr_n(x_2^3)$, which is a contradiction.

Recall that there exists $b$ such that $x^3 = b^2 + b$ if and only if $\tr_n(x^3) = 0$. Therefore, $|X| = 2^n - 1 - \wt(\tr_n(x^3))$. Combining with the fact that $\phi(x): X \to \mathbb{F}_{2^n}^*$ is a 3-to-1 mapping, we have
\[
|\{b^2 + b : b^2 + b \in G, b \in \mathbb{F}_{2^n}\setminus \{0,1\}| = \frac{1}{3} \cdot (2^n - 1 - \wt(\tr_n(x^3))).
\]
Since $b\mapsto b^2+b$ is 2-to-1 mapping on $\mathbb{F}_{2^n}\setminus \{0,1\}$, we have $s_2= 2\cdot |\{b^2 + b : b^2 + b \in G\}|=\frac{1}{3} \cdot (2^{n+1}-2-2\wt(\tr_n(x^3)))$. So $s_1= 2^n-2-s_2=\frac{1}{3} \cdot (2^n-4+2\wt(\tr_n(x^3)))$. By Lemma \ref{Weil's bound}, we have $\wt(\tr_n(x^3))\ge 2^{n-1}-2^{\frac{n}{2}}$. So $s_1 \ge \frac{1}{3} \cdot (2^{n+1}-2^{\frac{n}{2}+1}-4)$. 

The number of $b\in \mathbb{F}_{2^n}\setminus \{0,1\}$ such that \eqref{equ:ycubeab} with no solution is at least $\frac{1}{3}\cdot (2^{n+1}-2^{\frac{n}{2}+1}-4)$. 
\end{proof}

By Theorem \ref{th:N_R_even}, the following theorem is immediate.

\begin{theorem}\label{dim4_even}
    Let $n$ be even. We have
    \begin{table}[H]
	\begin{center}
 \caption{The distribution of $\dim(\mathcal{E}_{D_{ab}D_af})$ for any fixed $a\in \mathbb{F}_{2^n}^*$, even $n$} 
        \renewcommand\arraystretch{1.5}
		\begin{tabular}{ c|c|c|c } 
			\toprule 
			$\dim(\mathcal{E}_{D_{ab}D_af}) $ & n& $\{2, 4\}$ & $6 $ \\ 
			\hline
			The number of $b\in \mathbb{F}_{2^n}$ & 2& $\ge \frac{2^{n+1}-2^{\frac{n}{2}+1}-4}{3}$  &  $\le \frac{2^n+2^{\frac{n}{2}+1}-2}{3}$   \\ 
			\bottomrule
		\end{tabular}
		
	\end{center}
\end{table}
\end{theorem}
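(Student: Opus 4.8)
The plan is to derive the two bounds directly from Lemma~\ref{dim_4_even} and Theorem~\ref{th:N_R_even}, using the correspondence between the root count $\numroots_P(a,b)$ and the kernel dimension. Recall that by Lemma~\ref{lem:PQR} the element $x$ lies in the linear kernel $\mathcal{E}_{D_{ab}D_af}$ precisely when $P(x,a,b)=0$; since this kernel is an $\mathbb{F}_2$-subspace, we have $\numroots_P(a,b)=2^{\dim(\mathcal{E}_{D_{ab}D_af})}$. Combined with Lemma~\ref{lem:N_PQR_range}, for even $n$ and any fixed $a\in\mathbb{F}_{2^n}^*$ and $b\in\mathbb{F}_{2^n}\setminus\{0,1\}$, the dimension $\dim(\mathcal{E}_{D_{ab}D_af})$ takes values in the trichotomy $\{2,4,6\}$. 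Thus the statement reduces to counting how many $b$ fall into the low-dimensional classes $\{2,4\}$ versus the top class $6$.

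First I would translate the hypothesis of Lemma~\ref{dim_4_even} into a dimension bound: whenever $\numroots_R(a,b)=4$, Lemma~\ref{dim_4_even} gives $\numroots_P(a,b)\le 16=2^4$, hence $\dim(\mathcal{E}_{D_{ab}D_af})\in\{2,4\}$. Next I would invoke Theorem~\ref{th:N_R_even}, which guarantees at least $\tfrac{1}{3}(2^{n+1}-2^{\frac{n}{2}+1}-4)$ elements $b\in\mathbb{F}_{2^n}\setminus\{0,1\}$ with $\numroots_R(a,b)=4$. Chaining these two facts yields the lower bound in the first column: for every fixed $a\in\mathbb{F}_{2^n}^*$ there are at least $\tfrac{1}{3}(2^{n+1}-2^{\frac{n}{2}+1}-4)$ elements $b$ with $\dim(\mathcal{E}_{D_{ab}D_af})\in\{2,4\}$.

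The upper bound in the second column then follows by complementation inside $\mathbb{F}_{2^n}\setminus\{0,1\}$, which has exactly $2^n-2$ elements. Since $6$ is the only remaining value the dimension can take, the number of $b$ with $\dim=6$ is at most
\begin{equation*}
(2^n-2)-\frac{1}{3}\left(2^{n+1}-2^{\frac{n}{2}+1}-4\right)=\frac{2^n+2^{\frac{n}{2}+1}-2}{3},
\end{equation*}
which is precisely the claimed bound. The only points requiring care are bookkeeping rather than mathematical: one must exclude $b\in\{0,1\}$ (where $D_{ab}D_af\equiv 0$ and the kernel is the whole space) so that the total count is $2^n-2$, and one must record that the trichotomy $\{2,4,6\}$ from Lemma~\ref{lem:N_PQR_range} leaves no other dimension to absorb the slack. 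Since all the substantive work---bounding $\numroots_R(a,b)$ via the Weil estimate---was already carried out in Theorem~\ref{th:N_R_even}, there is no genuine obstacle remaining; the theorem is indeed immediate.
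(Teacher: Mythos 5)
Your proposal is correct and follows exactly the route the paper intends: the paper declares the theorem ``immediate'' from Theorem~\ref{th:N_R_even}, and you supply precisely the chain it relies on (Lemma~\ref{dim_4_even} converting $\mathrm{N}_R(a,b)=4$ into $\dim(\mathcal{E}_{D_{ab}D_af})\in\{2,4\}$, then complementation over the $2^n-2$ elements $b\notin\{0,1\}$). Your bookkeeping remarks about the two degenerate values $b\in\{0,1\}$ and the trichotomy $\{2,4,6\}$ match the paper's table, so there is nothing to add.
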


\begin{lemma}\label{dim4_odd_prev} Let $n$ be odd.
Let $a \in \mathbb{F}_{2^n}^*$ and $b \in \mathbb{F}_{2^n}\setminus \{0,1\}$.
If $ \numroots_R(a,b)= 4$, then $\numroots_P(a,b)\le 8$. 
\end{lemma}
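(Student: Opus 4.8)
The plan is to prove the stronger numerical bound $\numroots_P(a,b)\le 24$; since $\numroots_P(a,b)\in\{2^1,2^3,2^5\}$ for odd $n$ by Lemma \ref{lem:N_PQR_range}, this immediately forces $\numroots_P(a,b)\le 8$. I emphasize at the outset why the even-case argument (Lemma \ref{dim_4_even}) does not transfer: there the crude estimates $\numroots_{R+1}\le 16$ and $\numroots_{Q+1}\le 32$ coming from the polynomial degrees give only $\numroots_P\le 48$, which for odd $n$ fails to exclude the value $2^5=32$. The odd case therefore genuinely requires the sharper ``$0$ or $2$ roots'' count of Lemma \ref{num_root_quar}.

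The technical engine is the substitution $y=(x^2+x)^2+(x^2+x)(b^2+b)$ already present in \eqref{equ:def_R}. Factoring gives $R(x,a,b)=a^{15}(b^2+b)^5\,y\bigl(a^{15}(b^2+b)y^3+1\bigr)$, so the equation $R+1=0$, rewritten in $y$, becomes the quartic $y^4+\alpha y+\beta=0$ with $\alpha=\bigl(a^{15}(b^2+b)\bigr)^{-1}\neq 0$ (using $a\neq 0$ and $b\notin\{0,1\}$). By Lemma \ref{num_root_quar}, for odd $n$ this quartic has at most two roots $y$. Each admissible $y$ contributes at most four values of $x$: one first solves the quadratic $u^2+(b^2+b)u+y=0$ for $u=x^2+x$ (at most two roots, since $b^2+b\neq 0$), and then $x^2+x=u$ (at most two roots). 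Hence $\numroots_{R+1}(a,b)\le 8$, and with the hypothesis $\numroots_R(a,b)=4$ and Lemma \ref{lem:N_PQR_range} I get $\numroots_Q(a,b)=\numroots_R(a,b)+\numroots_{R+1}(a,b)\le 12$, so $\numroots_Q(a,b)\le 8$ (it is a power of two).

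For the remaining factor I would argue similarly. By \eqref{equ:def_Q}, $Q+1=0$ is equivalent to $R^2+R=(b^2+b)^4$, a quadratic in $R$ with at most two roots $r_1,r_2$. For each $r_i$ the equation $R(x,a,b)=r_i$ is again a quartic $y^4+\alpha y+\gamma_i=0$ in $y$ with the same nonzero $\alpha$, hence at most two $y$-values, hence at most eight $x$-values by the counting above; summing over $r_1,r_2$ yields $\numroots_{Q+1}(a,b)\le 16$. Combining, $\numroots_P(a,b)=\numroots_Q(a,b)+\numroots_{Q+1}(a,b)\le 8+16=24<32$, which completes the proof.

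The main obstacle is exactly the gap between $24$ and $48$: the proof must avoid bounding $\numroots_{R+1}$ and $\numroots_{Q+1}$ by the raw polynomial degrees and instead push every count through the single-variable quartic in $y$, where the odd-$n$ branch of Lemma \ref{num_root_quar} supplies the decisive factor of two. The one bookkeeping point to check carefully is that $x\mapsto y$ is at most $4$-to-$1$, which is precisely where the hypothesis $b\notin\{0,1\}$ (equivalently $b^2+b\neq 0$) is used.
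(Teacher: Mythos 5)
Your proposal is correct and follows essentially the same route as the paper's proof: reduce $R+1=0$ and $R=c_i$ (from the quadratic $R^2+R=(b^2+b)^4$ arising from $Q+1=0$) to quartics $y^4+\alpha y+\gamma=0$ in $y=(x^2+x)^2+(b^2+b)(x^2+x)$, apply the odd-$n$ branch of Lemma \ref{num_root_quar} to get at most two $y$-roots each, multiply by the $4$-to-$1$ bound on $x\mapsto y$, and conclude $\numroots_P\le 24<32$ so the power-of-two constraint forces $\numroots_P\le 8$. Your explicit two-step factorization of the preimage count ($u=x^2+x$ then $x$) is a slightly cleaner justification of the ``at most $4$ values of $x$ per $y$'' step than the paper's appeal to the degree of the composite polynomial, but the argument is otherwise identical.
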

\begin{proof}
Let $n$ be odd and let $\numroots_R(a, b) = 4$. We will prove the followings step by step:
\begin{itemize}
    \item $\numroots_{R+1}(a, b) \le 8$.
    \item $\numroots_Q(a, b) \le 8$.
    \item $\numroots_{Q+1}(a, b) \le 16$.
    \item $\numroots_P(a, b) \le 8$.
\end{itemize}

First, let us prove $\numroots_{R+1}(a, b) \le 8$.
If $R(x,a,b)+1=0$, we have
\begin{eqnarray}
   0&=& a^{30}(b^2+b)^6\big( (x^2+x)^2+(b^2+b)(x^2+x)\big)^4+a^{15}(b^2+b)^5\big((x^2+x)^2+(b^2+b)(x^2+x)\big)+1 \label{y_5}\\
    &=&a^{30}(b^2+b)^6y^4+a^{15}(b^2+b)^5y+1,\label{y6}
\end{eqnarray}
where $y=(x^2+x)^2+(b^2+b)(x^2+x)$. Equation \eqref{y6} can be converted to 
\begin{equation}\label{y5}
    y^4+\frac{y}{a^{15}(b^2+b)}+\frac{1}{a^{30}(b^2+b)^6}=0.
\end{equation}
By Lemma \ref{num_root_quar}, since $n$ is odd, equation \eqref{y5} in variable $y$ has no solution or exactly two solutions. Furthermore, since $(x^2+x)^2+(b^2+b)(x^2+x)$ is a polynomial of degree 4, the number of $x\in \mathbb{F}_{2^n}$ such that $(x^2+x)^2+(b^2+b)(x^2+x)+y=c$, for any $c$, is at most 4. So equation \eqref{y_5} in variable $x$ has at most 8 solutions, that is, $\numroots_{R+1}(a,b)\le 8$.

Second, we prove $\numroots_Q(a, b) \le 8$.
Note that $\numroots_{Q}(a,b)=\numroots_{R}(a,b)+\numroots_{R+1}(a,b)\le 12$. On the other hand, by Lemma \ref{lem:N_PQR_range}, $\numroots_{Q}(a,b) \in \{2, 2^2, 2^3, 2^4, 2^5\}$. So we have $ \numroots_{Q}(a,b) \le 8$. 

Next, we prove $\numroots_{Q+1}(a, b) \le 16$. Suppose $Q(x,a,b)+1=0$. We have $(b^2+b)^4Q(x,a,b)+(b^2+b)^4=0$, that is,
\begin{equation}
\label{equ:step3_quadR}
R(x,a,b)(R(x,a,b)+1) + (b^2+b)^4=0.    
\end{equation}
Viewing \eqref{equ:step3_quadR} as a quadratic equation in variable $R$, we know that \eqref{equ:step3_quadR} has at most 2 solutions, denoted by $c_1, c_2$. We shall prove that, for each $i = 1, 2$, $R(x, a, b) = c_i$ has at most 8 solutions.

Let $R(x,a,b)=c_i$ and let $y=(x^2+x)^2+(b^2+b)(x^2+x)$, where $c_i\in \mathbb{F}_{2^n}^*$. Then we have 
\[
a^{30}(b^2+b)^6y^4+a^{15}(b^2+b)^5y+c_i=0,
\]
that is
\begin{equation}\label{odd_eq}
    y^4+ \frac{y}{a^{15}(b^2+b)}+\frac{c_i}{a^{30}(b^2+b)^6}=0.
\end{equation}
By Lemma \ref{num_root_quar}, equation \eqref{odd_eq}, in variable $y$, has no solution or exactly two solutions. Furthermore, since $(x^2+x)^2+(b^2+b)(x^2+x)$ is a polynomial of degree 4, the number of $x\in \mathbb{F}_{2^n}$ such that $(x^2+x)^2+(b^2+b)(x^2+x)=d$ is at most 4. In total, equation $R(x, a, b) = c_i$ has at most 8 solutions.
Thus, the $Q(x, a, b) + 1 = 0$ has at most $16$ solutions, that is,
$\numroots_{Q+1}(a,b)\le 16$. 

Finally, we prove $\numroots_P(a, b) \le 8$. Note that $\numroots_{P}(a,b)=\numroots_{Q}(a,b)+\numroots_{Q+1}(a,b)\le 24$. By Lemma \ref{lem:N_PQR_range}, $\numroots_{P}(a,b)\in \{1,8,32\}$. So we have $\numroots_{P}(a,b)\le 8$.
\end{proof}

\begin{theorem}\label{dim4_odd}
    Let $n$ be odd. We have
\begin{table}[H]
	\begin{center}
 \caption{The distribution of $\dim(\mathcal{E}_{D_{ab}D_af})$ for any fixed $a\in \mathbb{F}_{2^n}^*$, odd $n$} 
        \renewcommand\arraystretch{1.5}
		\begin{tabular}{ c|c|c|c } 
			\toprule 
			$\dim(\mathcal{E}_{D_{ab}D_af}) $ & n & $  \{ 1, 3\} $ & $ 5 $ \\ 
			\hline
			The number of $b\in \mathbb{F}_{2^n}$ & 2& $\ge 3\cdot 2^{n-4}+10$  &  $\le 13\cdot 2^{n-4}-12  $   \\ 
			\bottomrule
		\end{tabular}
		
	\end{center}
\end{table}
\end{theorem}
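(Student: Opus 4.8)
The plan is to mirror the even-$n$ argument (Theorem \ref{th:N_R_even} together with Theorem \ref{dim4_even}), using Lemma \ref{dim4_odd_prev} as the bridge. For odd $n$, whenever $\numroots_R(a,b)=4$ we have $\numroots_P(a,b)\le 8$, i.e.\ $\dim(\mathcal{E}_{D_{ab}D_af})\le 3$; since this dimension has the same parity as $n$ by Lemma \ref{parity_dim}, it must lie in $\{1,3\}$. Hence $\{b:\numroots_R(a,b)=4\}\subseteq\{b:\dim\in\{1,3\}\}$, so $\#\{b:\dim\in\{1,3\}\}\ge \#\{b:\numroots_R(a,b)=4\}$, and it suffices to show the latter is at least $3\cdot 2^{n-4}+10$. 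Because the two dimension classes partition the $2^n-2=16\cdot 2^{n-4}-2$ values $b\in\mathbb{F}_{2^n}\setminus\{0,1\}$, the complementary estimate $\le 13\cdot 2^{n-4}-12$ for $\dim=5$ follows immediately.

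First I would analyze the root structure of $R(x,a,b)=0$ for odd $n$. Writing $R=a^{15}(b^2+b)^5\,y\,(a^{15}(b^2+b)y^3+1)$ with $y=(x^2+x)^2+(x^2+x)(b^2+b)$, the factor $y=0$ always produces the four trivial roots $x\in\{0,1,b,b+1\}$. The key difference from the even case is that for odd $n$ the cube map is a bijection on $\mathbb{F}_{2^n}^*$, so $y^3=(a^{15}(b^2+b))^{-1}$ has a \emph{unique} solution $y_0$, and the non-cube shortcut of Theorem \ref{th:N_R_even} is unavailable. Setting $v=x^2+x$ and $u=b^2+b$, the extra roots solve the nested chain $v^2+uv=y_0$ followed by $x^2+x=v$. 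I would show $\numroots_R(a,b)\in\{4,8\}$: the quadratic $v^2+uv+y_0=0$ is solvable iff $\tr_n(y_0/u^2)=0$; its roots $v_1,v_2$ satisfy $\tr_n(v_1)=\tr_n(v_2)$ because $v_1+v_2=u=b^2+b$ has trace $0$; and each solvable $v_i$ lifts to two values of $x$ exactly when $\tr_n(v_i)=0$. Thus extra roots appear ($\numroots_R=8$) precisely when $\tr_n(y_0/u^2)=0$ and $\tr_n(v_1)=0$, and $\numroots_R=4$ otherwise.

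With this characterization the count reduces to a finite-field estimate. Eliminating the cube root by substituting $m=u^{1/3}$ (well defined for odd $n$) turns the extra-root system into a quartic of the shape $m^4+wm+a^{-5}w^{-1}=0$ in the auxiliary variable $w=x^2+x$, equivalently a quadratic $w^2+m^3w+a^{-5}m^{-1}=0$ in $w$; the odd-$n$ branch of Lemma \ref{num_root_quar} then pins down the number of solutions, and the solvability of the $w$-quadratic is exactly the condition $\tr_n(a^{-5}m^{-7})=0$. I would isolate from this a cleanly countable sufficient condition guaranteeing $\numroots_R=4$, express the count through the trace conditions $\tr_n(m^3)=0$, $\tr_n(a^{-5}m^{-7})=0$ and $\tr_n(v_1)=0$, convert these into character sums, and bound the cross terms with the Weil bound (Theorem \ref{Weil's Theorem} and Lemma \ref{Weil's bound}) applied to the relevant trace monomials. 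The main term then yields $3\cdot 2^{n-4}$, while the small additive constant $+10$ absorbs the finitely many degenerate values (such as $u=0$, repeated roots, and $b\in\{0,1\}$); since the condition used is only sufficient, the resulting bound need not be tight, which accounts for the smaller guaranteed fraction compared with the even case.

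The hard part will be the counting in this last step. Unlike the even case, where membership in the cube subgroup $G$ gave a clean sufficient condition tied to $\wt(\tr_n(x^3))$, here the event $\numroots_R=8$ is governed by two coupled trace conditions — one on $a^{-5}m^{-7}$ and one on a root $v_1$ of a quadratic whose coefficients themselves depend on $m$ — so the naive product heuristic does not directly apply and a correlated pair of character sums must be handled. Controlling how the value of $\gcd(7,2^n-1)$ (which differs according to whether $3\mid n$) changes the multiplicities of the map $m\mapsto m^{-7}$, and thus the fibres of the counting map, is what makes the bookkeeping delicate and dictates the precise constants $3\cdot 2^{n-4}+10$ and $13\cdot 2^{n-4}-12$.
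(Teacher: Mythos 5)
Your reduction is sound and matches the paper's: via Lemma \ref{dim4_odd_prev}, $\numroots_R(a,b)=4$ forces $\numroots_P(a,b)\le 8$, hence $\dim(\mathcal{E}_{D_{ab}D_af})\in\{1,3\}$ by parity, and the bound for $\dim=5$ follows by complementation in $\mathbb{F}_{2^n}\setminus\{0,1\}$. Your root-structure analysis of $R$ is also correct and recovers exactly the paper's sufficient condition: writing $m^3=b^2+b$, the extra roots exist only if $\tr_n(a^{-5}m^{-7})=\tr_n\bigl(a^{-5}(b^2+b)^{\frac{2^n-2}{3}-2}\bigr)=0$, so $\tr_n(a^{-5}(b^2+b)^{\frac{2^n-2}{3}-2})=1$ guarantees $\numroots_R(a,b)=4$. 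Up to this point you are on the paper's track (the paper does not need your finer analysis of when $\numroots_R=8$, since it only uses the one-sided implication).

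The genuine gap is the counting step, which is the heart of the theorem and which you do not carry out. The paper counts the $b$ with $\tr_n(a^{-5}(b^2+b)^{\frac{2^n-2}{3}-2})=1$ by a \emph{degree argument}: it expands the trace into an explicit reduced polynomial $h(b)$ over $\mathbb{F}_{2^n}$ (using $\frac{2^n-2}{3}-2=\sum_{i=1}^{r-1}2^{n-2i}$ for $n=2r+1$), proves in Lemmas \ref{number_roots} and \ref{minimum_degree} that its top degree is $\frac{5}{3}\cdot 2^{n-1}-\frac{32}{3}$ and its lowest monomial has degree $\frac{1}{3}(2^{n-4}+1)$, and concludes $h(b)=0$ has at most $13\cdot 2^{n-4}-12$ roots in $\mathbb{F}_{2^n}\setminus\{0,1\}$, whence at least $3\cdot 2^{n-4}+10$ elements satisfy $h(b)=1$; the constants are exactly the difference of these two degrees. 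Your proposed replacement --- character sums controlled by Theorem \ref{Weil's Theorem} and Lemma \ref{Weil's bound} --- cannot be executed with the tools in the paper: the relevant exponent is $m^{-7}=m^{2^n-8}$ (or $(b^2+b)^{\frac{2^n-2}{3}-2}$), a polynomial of degree $\Theta(2^n)$ and of even degree, so the stated Weil bound both fails its hypothesis $\gcd(d,q)=1$ and, even formally applied, gives an error term $(d-1)2^{n/2}$ that swamps the main term. One would need the Weil bound for Artin--Schreier sums of rational functions with poles (a Kloosterman-type bound), plus control of the half-trace term $\tr_n(v_1(m))$ whose argument is not a polynomial in $m$; neither is provided. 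Moreover, your assertion that ``the main term then yields $3\cdot 2^{n-4}$'' is unsupported and almost certainly wrong: an equidistribution/character-sum argument restricted to the sufficient condition $\tr_n(a^{-5}m^{-7})=1$ would give a main term of about $8\cdot 2^{n-4}$ (and about $12\cdot 2^{n-4}$ using your full characterization), not $3\cdot 2^{n-4}$, and the additive constant $+10$ is not a degeneracy correction but arises from $2^n-2-(13\cdot 2^{n-4}-12)$. So the specific bounds in the statement are asserted rather than derived, and the route you sketch would need a substantially different (and unstated) toolkit to close.
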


\begin{proof}
Since $n$ is odd, we have $3\mid (2^n-2)$. When $x\in \{0,1,b,b+1\}$, we have $(x^2+x)^2+(b^2+b)(x^2+x)\neq 0 $. If $R(x,a,b)=0$ for $a\in \mathbb{F}_{2^n}^*$ and $b\in \mathbb{F}_{2^n}\setminus \{0,1\}$, we have
\begin{equation*}
    a^{15}(b^2+b)\big((x^2+x)^2+(b^2+b)(x^2+x) \big)^3=1,
\end{equation*}
which is 
\begin{equation}\label{cpy3}
(x^2+x)^2+(b^2+b)(x^2+x)=a^{-5}(b^2+b)^{\frac{2^n-2}{3}}. 
\end{equation}
Multiplying $\frac{1}{(b^2+b)^2}$ to both sides of \eqref{cpy3}, we get
\begin{equation}\label{odd_N_1}
  \left(\frac{x^2+x}{b^2+b}\right)^2+\frac{x^2+x}{b^2+b}=a^{-5}(b^2+b)^{\frac{2^n-2}{3}-2}.  
\end{equation}
If $\tr_n(a^{-5}(b^2+b)^{\frac{2^n-2}{3}-2})=1$, then $t^2+t=a^{-5}(b^2+b)^{\frac{2^n-2}{3}-2}$ has no solution, where $t=\frac{x^2+x}{b^2+b}$. So $\numroots_{R}(a,b)=4$. Thus it suffices to lower bound the number of elements $b\in \mathbb{F}_{2^n}\setminus \{0,1\}$ such that $\tr_n(a^{-5}(b^2+b)^{\frac{2^n-2}{3}-2})=1$.

Let $n=2r+1$. Note that $\frac{2^n-2}{3}-2=\sum_{i=1}^{r-1} 2^{n-2i}$.
So we have 
 \begin{eqnarray*}
        (b^2+b)^{\frac{2^n-2}{3}-2}&=&(b^2+b)^{\sum_{i=1}^{r-1}2^{n-2i}}
        \\&=& \prod_{i=1}^{r-1} (b^2+b)^{2^{n-2i}}
        \\&=& \prod_{i=1}^{r-1} (b^{2^{n-2i+1}}+b^{2^{n-2i}})
        \\&=& \sum\limits_{d_1, d_2, \ldots, d_{r-1}\in \{0,1\}} b^{\sum_{i=1}^{r-1} 2^{n-2i+d_i}}.
  \end{eqnarray*}
Expanding the trace function using its definition, we have
 \begin{eqnarray}\label{ Expanded_function}
        & &\tr_n(a^{-5}(b^2+b)^{\frac{2^n-2}{3}-2})\nonumber
        \\& = &
        \tr_n(a^{-5} \sum\limits_{d_1, \ldots, d_{r-1}\in \{0,1\}} b^{\sum_{i=1}^{r-1} 2^{n-2i+d_i}} )
        \nonumber 
        \\&=& \sum_{d_1, \ldots, d_{r-1}\in \{0,1\}} \tr_n(a^{-5}b^{\sum_{i=1}^{r-1} 2^{n-2i+d_i}} )
        \nonumber 
        \\&= & \sum_{d_1, \ldots, d_{r-1}\in \{0,1\}} \sum_{j=0}^{n-1} (a^{-5})^{2^j} b^{\sum_{i=1}^{r-1} 2^{n-2i+d_i+j}}
        \nonumber
        \\&= & \sum_{d_1, \ldots, d_{r-1}\in \{0,1\}} \sum_{j=0}^{n-1} (a^{-5\cdot 2^j}) b^{\sum_{i=1}^{r-1}2^{n-2i+d_i+j}}.
  \end{eqnarray}

 For convenience, let 
 \begin{equation}
 h(b)=\sum_{d_1, \ldots, d_{r-1}\in \{0,1\}} \sum_{j=0}^{n-1} (a^{-5\cdot 2^j}) b^{\sum_{i=1}^{r-1}2^{n-2i+d_i+j}}.
\end{equation}
Next, we will analyze the highest and lowest degree terms of the polynomial $h(b)$ as they are closely related to the number of roots of $h(b)=0$.
 
\begin{lemma}\label{number_roots}
The maximum degree of $h(b)$ is $\frac{5}{3} \cdot 2^{n-1} - \frac{32}{3}$ for $n\ge 6$. 
\end{lemma}

\begin{lemma} \label{minimum_degree}
    The minimum degree of the monomial of $h(b)$ is $\frac{1}{3}\cdot (2^{n-4}+1)$, which implies $h(b)=b^{\frac{1}{3}\cdot (2^{n-4}+1)}p(b)$, where $b\nmid p(b)$.
\end{lemma}

The proofs of these two lemmas can be found in Appendix \ref{app:proof_of_lemma_number_roots} and \ref{app:proof_of_lem_min_degree}.

By Lemma \ref{number_roots} and \ref{minimum_degree}, the number of elements $b\in \mathbb{F}_{2^n}\setminus \{0,1\}$ for which $h(b)=0$ is at most $13\cdot 2^{n-4}-12$. Hence, the number of $b\in \mathbb{F}_{2^n}\setminus \{0,1\}$ such that $h(b)=1$ is at least $3\cdot 2^{n-4}+10$.

Hence, we have the number of $b\in \mathbb{F}_{2^n} \setminus \{0,1\}$ such that $\numroots_{P}(a,b)=8$ is at least $3\cdot 2^{n-4}+10$ since
the set of $b\in \mathbb{F}_{2^n}\setminus \{0,1\}$ such that $\tr_n(a^{-5}(b^2+b)^{\frac{2^n-2}{3}-2})=1 $ satisfying is the set of roots of the equation $h(b)=1$. That is, the number of $b\in \mathbb{F}_{2^n}\setminus \{0,1\}$ such that $\dim(\mathcal{E}_{D_{ab}D_af})\le 3$ satisfying is at least $3\cdot 2^{n-4}+10$; the number of $b\in \mathbb{F}_{2^n}\setminus \{0,1\}$ such that $\dim(\mathcal{E}_{D_{ab}D_af})= 5$ is at most $13\cdot 2^{n-4}-12$. 

\end{proof}

By Theorem \ref{dim4_even} and \ref{dim4_odd}, the following corollary is immediate.
\begin{corollary}\label{cor_tr15}
    Let $f=\tr_n(x^{15})$. Denote that $\nl(D_{ab}D_af)$ be the nonlinearity of $D_{ab}D_af$. For any $b\in \mathbb{F}_{2^n}$, the distribution of $\nl(D_{ab}D_af)$ is as follows:

 \begin{table}[H]
\begin{center}
\caption{The distribution of $\nl(D_{ab}D_af)$} 
\renewcommand\arraystretch{2}
\begin{tabular}{ c|c|c } 
\hline 
n & $\nl(D_{ab}D_af)$ & The number of $b\in \mathbb{F}_{2^n}$ \\ 
\hline
\multirow{3}*{even $n$}  & $0$ & $2$ \\
                \cline{2-3}
                ~ &  $\ge 2^{n-1}- 2^{\frac{n}{2}+1}$ & $\ge \frac{1}{3} \cdot (2^{n+1}-2^{\frac{n}{2}+1}-4)$    \\ 
                \cline{2-3}
                ~ &  $\le 2^{n-1}- 2^{\frac{n}{2}+2}$ & $\le \frac{1}{3} \cdot (2^n+2^{\frac{n}{2}+1}-2)$ \\
                \hline
\multirow{3}*{odd $n$}  & $0$ & $2$ \\
                \cline{2-3}
                ~ & $\ge 2^{n-1}- 2^{\frac{n+1}{2}}$ & $\ge 3\cdot 2^{n-4}+10$   \\ 
                \cline{2-3}
                ~ & $\le 2^{n-1}- 2^{\frac{n+3}{2}}$ & $\le 13\cdot 2^{n-4}-12  $  \\
			\hline
		\end{tabular}
		
	\end{center}
\end{table}

\end{corollary}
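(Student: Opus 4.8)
The plan is to translate the kernel-dimension distributions established in Theorem \ref{dim4_even} and Theorem \ref{dim4_odd} into nonlinearity values via Lemma \ref{Walsh_spec_quar}. Since $f=\tr_n(x^{15})$ has algebraic degree $4$ (the Hamming weight of $1111_2$), every second-order derivative $D_{ab}D_af$ has degree at most $2$ and is therefore quadratic. For such a function, Lemma \ref{Walsh_spec_quar} gives $\max_{\mu}|W_{D_{ab}D_af}(\mu)|=2^{\frac{n+k}{2}}$, where $k=\dim(\mathcal{E}_{D_{ab}D_af})$, so by \eqref{nonlinearity_walsh}
\[
\nl(D_{ab}D_af)=2^{n-1}-2^{\frac{n+k}{2}-1}.
\]
This single identity drives the whole argument: a larger kernel dimension yields a strictly smaller nonlinearity.

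First I would dispose of the degenerate cases $b\in\{0,1\}$. As already observed in the proof of Lemma \ref{lem:PQR}, when $b\in\{0,1\}$ (or $a=0$) the derivative $D_{ab}D_af$ vanishes identically, so $\nl(D_{ab}D_af)=0$; these are exactly the two values of $b$ recorded in the first row of the table for each parity. For the remaining $b\in\mathbb{F}_{2^n}\setminus\{0,1\}$ the kernel dimension is governed by the appropriate theorem, and I simply substitute the admissible dimensions into the displayed identity.

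For even $n$, Theorem \ref{dim4_even} puts $\dim(\mathcal{E}_{D_{ab}D_af})\in\{2,4\}$ for at least $\tfrac{1}{3}(2^{n+1}-2^{\frac{n}{2}+1}-4)$ values of $b$; taking $k=2$ and $k=4$ gives nonlinearities $2^{n-1}-2^{\frac{n}{2}}$ and $2^{n-1}-2^{\frac{n}{2}+1}$, both at least $2^{n-1}-2^{\frac{n}{2}+1}$. The remaining at most $\tfrac{1}{3}(2^{n}+2^{\frac{n}{2}+1}-2)$ values have $k=6$, giving nonlinearity exactly $2^{n-1}-2^{\frac{n}{2}+2}$, hence $\le 2^{n-1}-2^{\frac{n}{2}+2}$. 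For odd $n$, the same substitution with $k\in\{1,3\}$ (nonlinearities $2^{n-1}-2^{\frac{n-1}{2}}$ and $2^{n-1}-2^{\frac{n+1}{2}}$, both $\ge 2^{n-1}-2^{\frac{n+1}{2}}$) and with $k=5$ (nonlinearity $2^{n-1}-2^{\frac{n+3}{2}}$) reproduces the odd rows of the table, the counts carried over verbatim from Theorem \ref{dim4_odd}.

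Since the corollary is a direct bookkeeping consequence of two previous theorems, there is no genuine obstacle; the only point demanding care is the direction of the inequalities. Because the counts in Theorems \ref{dim4_even} and \ref{dim4_odd} are lower or upper bounds rather than exact values, and because the grouped dimension sets $\{2,4\}$ and $\{1,3\}$ each correspond to two distinct nonlinearity values, I must pair each ``$\ge$'' count with the weaker (larger-dimension, smaller-nonlinearity) endpoint, so that the claimed ``$\ge$'' nonlinearity bound holds for \emph{every} $b$ in that group, and pair each ``$\le$'' count with the exact $k=6$ (respectively $k=5$) value. With this pairing fixed, the table entries follow immediately.
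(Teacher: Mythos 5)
Your proposal is correct and follows exactly the route the paper intends: the paper declares the corollary ``immediate'' from Theorems \ref{dim4_even} and \ref{dim4_odd}, and the missing bookkeeping is precisely your conversion $\nl(D_{ab}D_af)=2^{n-1}-2^{\frac{n+k}{2}-1}$ via Lemma \ref{Walsh_spec_quar}, together with the observation that $b\in\{0,1\}$ gives the identically zero derivative. Your care in pairing each grouped dimension set with its smallest nonlinearity endpoint, and the exact $k=6$ (resp.\ $k=5$) value with the ``$\le$'' count, is exactly what is needed and matches the table.
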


Now we are ready to prove Theorem \ref{theorem_3}, which gives a lower bound on the third-order nonlinearity of $\tr_n(x^{15})$.  
    
\begin{proof}(of Theorem \ref{theorem_3})
By Proposition \ref{pro3} and Corollary \ref{cor_tr15}, for even $n$, we have 
 \begin{eqnarray*}
  & & \nl_3(f)  
    \\& \ge&2^{n-1}-\frac{1}{2} \sqrt{(2^n-1)\sqrt{2^{2n}-2((2^{n-1}-2^{\frac{n}{2}+1})(\frac{2^{n+1}-2^{\frac{n}{2}+1}-4}{3})+ (2^{n-1}-2^{\frac{n}{2}+2})(\frac{2^n+2^{\frac{n}{2}+1}-2}{3}))}+2^n } 
    \\& =& 2^{n-1}-\frac{1}{2} \sqrt{(2^n-1)\sqrt{\frac{1}{3}\cdot 2^{\frac{3}{2}n+4}+\frac{7}{3}\cdot 2^{n+1}-\frac{1}{3}\cdot 2^{\frac{n}{2}+5}}+2^n} \\
    &\ge & 2^{n-1}- 2^{\frac{7n}{8}-\frac{1}{4}\log_2 3}-O(2^{\frac{3n}{8}}).
 \end{eqnarray*}

By Proposition \ref{pro3} and Corollary \ref{cor_tr15}, when $n$ is odd and $n > 6$, we have 
 \begin{eqnarray*}
   & &\nl_3(f) \ge 2^{n-1}-
    \\& &\ \frac{1}{2} \sqrt{(2^n-1)\sqrt{2^{2n}-2((2^{n-1}-2^{\frac{n+1}{2}})(3\cdot 2^{n-4}+10)+(2^{n-1}-2^{\frac{n+3}{2}})(13\cdot 2^{n-4}-12))}+2^n}
    \\& =& 2^{n-1}-\frac{1}{2} \sqrt{(2^n-1)\sqrt{\frac{29}{8}\cdot 2^{\frac{3n+1}{2}}+2^{n+1}-7\cdot 2^{\frac{n+5}{2}}}+2^n} \\
    \\& \ge & 2^{n-1}- 2^{\frac{7n}{8}-\frac{13}{8}+\frac{1}{4} \log_2 29}-O(2^{\frac{3n}{8}}).
 \end{eqnarray*}
\end{proof}

\subsection{Comparison}

We list the lower bound values on the third-order nonlinearity of $\tr_n(x^{15})$ for $7\le n\le 20$ in Table \ref{table_for_odd} and \ref{table_for_even}. Our lower bound outperforms all the existing lower bounds \cite{Car08, GG10, Sin14}, both asymptotically and for all concrete $n$. 
  \begin{table}[H]
	\begin{center}
		\begin{tabular}{ c|c|c|c|c|c|c|c } 
			\hline
			$n$ & 7& 9 & 11& 13& 15& 17 &19 \\ 
			\hline
			$\nl_3$ & 12&80  & 429 & 2096  &9660  & 42923 & 186092  \\ 
			\hline
		\end{tabular}
   \caption{Lower bounds in Theorem \ref{theorem_3} for odd $n$} 
        \label{table_for_odd}
	\end{center}
\end{table}
\begin{table}[H]
	\begin{center}
		\begin{tabular}{ c|c|c|c|c|c|c|c } 
			\hline 
			$n$ & 8& 10 & 12& 14& 16& 18 &20 \\ 
			\hline
			$\nl_3$ & 30&  183 &944  & 4484 &20308 &89180  & 383411  \\ 
			\hline
		\end{tabular}
 \caption{Lower bounds in Theorem \ref{theorem_3} for even $n$}
        \label{table_for_even}
	\end{center}
\end{table}

\section{Higher-order nonlinearity}
In this section, we lower bound the $r$-th order nonlinearity for Boolean functions $\tr_n(x^{2^{r+1}-1})$ and $\tr_n(x^{2^n-2})$.

Applying $t$ times Proposition \ref{lower_non}, we have

\begin{proposition} \label{high_nonlinearity}\cite{Car08} Let $f$ be any $n$-variable Boolean function and $r$ a positive integer smaller than $n$. We have
\[
\nl_r(f)\ge 2^{n-1}- \frac{1}{2}\sqrt{ \sum_{a_1\in \mathbb{F}_{2^n}} \sqrt{ \sum_{a_2\in \mathbb{F}_{2^n}} \ldots \sqrt{ 2^{2n}-2\sum_{a_t\in \mathbb{F}_{2^n}} \nl_{r-t}(D_{a_t}D_{a_{t-1}}\ldots D_{a_1}f )} }}.
\]
    
\end{proposition}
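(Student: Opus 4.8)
The plan is to prove the statement by induction on $t$, where each inductive step consists of a single application of Proposition \ref{lower_non}. The base case $t=1$ is exactly Proposition \ref{lower_non} (after renaming the dummy variable to $a_1$), and Proposition \ref{pro3} is precisely the $t=2$ instance, so the mechanism is already on display there. The induction runs for $1\le t\le r$, which is all that is needed for the nonlinearities $\nl_{r-t}$ of the iterated derivatives to be defined.

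For the inductive step I would first isolate the elementary but essential \emph{monotonicity} fact. Suppose the claim holds for some $t$, and write the right-hand side as
\[
\Phi_t := 2^{n-1}-\frac{1}{2}\sqrt{\sum_{a_1}\sqrt{\sum_{a_2}\cdots\sqrt{2^{2n}-2\sum_{a_t}\nl_{r-t}(D_{a_t}\cdots D_{a_1}f)}}}.
\]
I claim $\Phi_t$ is a nondecreasing function of each innermost quantity $\nl_{r-t}(D_{a_t}\cdots D_{a_1}f)$. Indeed, increasing any such value decreases the innermost radicand $2^{2n}-2\sum_{a_t}\nl_{r-t}(\cdots)$, hence decreases the innermost square root, hence decreases every sum and every square root sitting above it, so the argument of the outermost square root only shrinks; the prefix $-\frac{1}{2}\sqrt{\cdot}$ then flips this into an increase of $\Phi_t$. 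This is exactly what licenses substituting a lower bound for each deepest $\nl_{r-t}(\cdots)$ without reversing the inequality.

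Next I would apply Proposition \ref{lower_non} to each deepest derivative,
\[
\nl_{r-t}(D_{a_t}\cdots D_{a_1}f)\ge 2^{n-1}-\frac{1}{2}\sqrt{2^{2n}-2\sum_{a_{t+1}}\nl_{r-t-1}(D_{a_{t+1}}D_{a_t}\cdots D_{a_1}f)},
\]
and substitute. Since the actual nonlinearities dominate these bounds and $\Phi_t$ is nondecreasing, the inductive hypothesis gives $\nl_r(f)\ge\Phi_t\ge\Phi_t(\text{lower bounds})$. The crucial simplification is the exact cancellation inside the innermost bracket: because $\sum_{a_t}2^{n-1}=2^{n}\cdot 2^{n-1}=2^{2n-1}$, writing $E_{a_t}=2^{2n}-2\sum_{a_{t+1}}\nl_{r-t-1}(\cdots)$ we get
\[
2^{2n}-2\sum_{a_t}\Bigl(2^{n-1}-\frac{1}{2}\sqrt{E_{a_t}}\Bigr)=2^{2n}-2^{2n}+\sum_{a_t}\sqrt{E_{a_t}}=\sum_{a_t}\sqrt{E_{a_t}}.
\]
The leading $2^{2n}$ is annihilated and a fresh layer $\sum_{a_t}\sqrt{\,\cdot\,}$ appears; substituting this back into $\Phi_t$ produces exactly the $(t+1)$-level expression with $\nl_{r-t-1}$ at the bottom, closing the induction.

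The argument is entirely mechanical once these two ingredients are in hand, so I anticipate no genuine obstacle. The only point demanding care is the monotonicity bookkeeping, namely verifying that the substitution performed at the deepest level propagates correctly through the alternating $\sqrt{\cdot}$ and $\sum$ operations; the cancellation $2\sum_{a_t}2^{n-1}=2^{2n}$ is what makes the nested radical structure reproduce itself cleanly at each stage.
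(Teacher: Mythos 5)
Your proposal is correct and is exactly the argument the paper intends: the paper's entire justification is the one line ``Applying $t$ times Proposition \ref{lower_non}, we have,'' and your induction on $t$ --- with the monotonicity of the nested expression in the innermost nonlinearities and the cancellation $2\sum_{a_t}2^{n-1}=2^{2n}$ --- is precisely what that iteration amounts to when written out. The only minor point worth noting is the range of $t$ (one needs $r-t\ge 0$ so that the innermost nonlinearity is defined), which you already address.
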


By Proposition \ref{high_nonlinearity}, to lower bound the $r$-th order nonlinearity for functions $\tr_n(x^{2^{r+1}-1})$, our strategy is to lower bound the first-order nonlinearity $\nl(D_{a_{r-1}}D_{a_{r-2}}\ldots D_{a_1}f)$ for all distinct $a_1,a_2,\ldots,a_{r-1}\in \mathbb{F}^*_{2^n}$. We will need the following lemma in the proof of Lemma \ref{nonliearity_derivative}.

The following lemma is proved in \cite{GK12}; we state a special case of interest using different notations. Let $a, b$ be two positive integers, where $a = \sum_{i \ge 0} 2^i a_i$ and $b = \sum_{i \ge 0} 2^i b_i$ be the binary representations of $a$ and $b$ respectively. Define a partial order $\preceq$ between two positive integers as follows: $a \preceq b$ if and only if $a_i \le b_i$ for all $i \ge 0$; $a \prec b$ if and only if $a \preceq b$ and $a \neq b$. Lucas's theorem says that ${\binom{b}{a} } \equiv 1 \pmod 2$ if and only if $a \preceq b$. 

\begin{lemma} (Lemma 4 in \cite{GK12}) \label{def_rth_der}
Let $f=\tr_n(x^{2^{r+1}-1})$. For any distinct $a_1, a_2, \ldots, a_t \in \mathbb{F}_{2^n}^*$, where $1 \le t \le r$, we have
\begin{equation}
\label{der_multi_f}
D_{a_{t}}D_{a_{t-1}}\ldots D_{a_{1}}f(x) = \tr_n\Big(\sum_{ \substack{0 \prec d_{t} \prec d_{t-1} \prec \ldots \prec d_1 \prec d_0 = 2^{r+1}-1 \\  \wt(d_k)=r+1-k,\ k=1,2,\ldots,t}} x^{d_t}\prod_{i=1}^t a_i^{d_{i-1} - d_{i}}\Big) + p(x),
\end{equation}
where $\deg(p) \le r-t$.
\end{lemma}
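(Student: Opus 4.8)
The plan is to prove \eqref{der_multi_f} by induction on the order $t$ of the derivative, using a single computational device: because $\tr_n$ is $\mathbb{F}_2$-linear, differentiation may be carried out \emph{inside} the trace, i.e. $D_a\tr_n(g(x))=\tr_n\bigl(g(x)+g(x+a)\bigr)$ for any polynomial map $g$. Expanding one monomial by the binomial theorem in characteristic $2$ and invoking Lucas's theorem gives $(x+a)^d=\sum_{e\preceq d}a^{e}x^{d-e}$, so the $e=0$ term cancels $x^d$ and, after the bijective reindexing $d'=d-e$ (which matches the integers $e$ with $0\prec e\preceq d$ to the integers $d'$ with $d'\prec d$), we obtain the per-monomial rule
\[
D_a\tr_n(c\,x^d)=\tr_n\!\Bigl(\sum_{d'\prec d}c\,a^{\,d-d'}x^{d'}\Bigr),\qquad c\in\mathbb{F}_{2^n}.
\]
This is the only identity that will be iterated.

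For the base case $t=1$, applying the rule with $c=1$ and $d_0=2^{r+1}-1$ yields $D_{a_1}f(x)=\tr_n\bigl(\sum_{d_1\prec d_0}a_1^{\,d_0-d_1}x^{d_1}\bigr)$. I would split this sum according to $\wt(d_1)$: the terms with $\wt(d_1)=r$ (those removing exactly one bit from $d_0$) are precisely the claimed top-weight part, while the terms with $\wt(d_1)\le r-1$ have, by $\deg(\tr_n(x^{d_1}))=\wt(d_1)$, algebraic degree at most $r-1=r-t$ and are absorbed into $p(x)$. For the inductive step, I would write the order-$(t-1)$ derivative as $\tr_n(S_{t-1})+p'(x)$, where $S_{t-1}=\sum x^{d_{t-1}}\prod_{i=1}^{t-1}a_i^{\,d_{i-1}-d_i}$ ranges over the top-weight chains with $\wt(d_k)=r+1-k$ and $\deg(p')\le r-t+1$. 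Applying $D_{a_t}$ and using additivity gives $D_{a_t}\tr_n(S_{t-1})+D_{a_t}p'(x)$; differentiating $\tr_n(S_{t-1})$ term by term with the per-monomial rule simply appends one more factor $a_t^{\,d_{t-1}-d_t}$ and one more node $d_t\prec d_{t-1}$ to each chain, producing $\tr_n\bigl(\sum x^{d_t}\prod_{i=1}^{t}a_i^{\,d_{i-1}-d_i}\bigr)$ now summed over all $d_t\prec d_{t-1}$.

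It then remains to sort these contributions by degree, which is the step requiring the most care. A chain survives in the trace part exactly when $\wt(d_t)=r+1-t$, i.e. when the last step $d_t\prec d_{t-1}$ drops a single bit (possible since $\wt(d_{t-1})=r+2-t$); combined with the inductive constraint $\wt(d_k)=r+1-k$ for $k\le t-1$ already carried by $S_{t-1}$, this recovers the full constraint $\wt(d_k)=r+1-k$ for all $k=0,\dots,t$ appearing in \eqref{der_multi_f}. The remaining chains have $\wt(d_t)\le r-t$, hence trace-monomial degree $\le r-t$; collecting them together with $D_{a_t}p'$—whose degree is at most $\deg(p')-1\le r-t$ because a first-order derivative strictly lowers algebraic degree—gives the new remainder $p$ with $\deg(p)\le r-t$, completing the induction. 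The main obstacle is exactly this bookkeeping: one must ensure that nothing of weight $r+1-t$ can leak out of the low-degree reservoir $p'$, which holds precisely because $D_{a_t}$ is degree-lowering. I would also note that distinctness of the $a_i$ is not needed for the identity itself—if two of them coincide, each surviving monomial is produced an even number of times and cancels, in agreement with $D_aD_af=0$—so distinctness matters only in the application, to ensure the resulting quadratic form is nontrivial.
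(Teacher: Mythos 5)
The paper does not actually prove this lemma: it is imported verbatim as Lemma~4 of \cite{GK12}, so there is no in-paper argument to compare yours against. Judged on its own, your induction is correct and complete. The per-monomial rule $D_a\tr_n(c\,x^d)=\tr_n\bigl(\sum_{d'\prec d}c\,a^{d-d'}x^{d'}\bigr)$ follows exactly as you say from the characteristic-$2$ binomial expansion together with Lucas's theorem and the reindexing $d'=d-e$ for $0\prec e\preceq d$; the base case and the inductive step are both handled properly; and the two points that genuinely need care are both addressed: (i) a chain contributes to the top-weight part iff the new step drops exactly one bit, which combined with $\wt(d_{t-1})=r+2-t$ recovers the constraint $\wt(d_t)=r+1-t$ (and $0\prec d_t$ is then automatic since $t\le r$); (ii) nothing of weight $r+1-t$ can emerge from the reservoir $p'$ because $D_{a_t}$ lowers algebraic degree by at least one, so $\deg(D_{a_t}p')\le r-t$. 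Your closing remark that distinctness of the $a_i$ is irrelevant to the identity itself (repeated $a_i$ just make the top-weight sum cancel in pairs, consistent with $D_aD_af=0$) is also accurate and clarifies why the hypothesis appears in the statement. This is essentially the standard derivation one would expect to find in \cite{GK12}, so nothing further is needed.
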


The next lemma gives a lower bound on the first-order nonlinearity for the $(r-1)$-th order derivatives of $\tr_n(x^{2^{r+1}-1})$.

\begin{lemma} \label{nonliearity_derivative}
Let $f=\tr_n(x^{2^{r+1}-1})$. For any distinct $a_1,a_2,\ldots,a_{r-1}\in \mathbb{F}_{2^n}^*$, we have
\begin{equation*}
    \nl(D_{a_{r-1}}D_{a_{r-2}}\ldots D_{a_{1}}f)\ge 2^{n-1}-2^{\frac{n+2r-2}{2}}.
\end{equation*}
\end{lemma}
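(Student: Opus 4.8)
The plan is to show that $q := D_{a_{r-1}}D_{a_{r-2}}\cdots D_{a_1}f$ is a genuine quadratic function whose linear kernel has dimension at most $2r$, and then to read off the nonlinearity from Lemma~\ref{Walsh_spec_quar}. By Lemma~\ref{def_rth_der} applied with $t=r-1$, the top-degree part of $q$ is $\tr_n\big(\sum_{0\le j<i\le r} c_{ij}\,x^{2^i+2^j}\big)$ plus an affine term (since $\deg p\le r-t=1$), where each $c_{ij}\in\mathbb{F}_{2^n}$ is the sum of products of the $a_l$ coming from the chains ending at $d_{r-1}=2^i+2^j$. The decisive structural fact is that every surviving exponent satisfies $d_{r-1}\prec d_0=2^{r+1}-1$, so both bit positions obey $0\le j<i\le r$; this confinement of $i,j$ to $\{0,\dots,r\}$ is what will ultimately bound the degree. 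Writing $k=\dim(\mathcal{E}_q)$, Lemma~\ref{Walsh_spec_quar} gives $\nl(q)=2^{n-1}-2^{(n+k)/2-1}$, so it suffices to prove $k\le 2r$.

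Next I would compute the linear kernel explicitly. The associated bilinear form is $B(x,y)=\tr_n\big(\sum_{j<i}c_{ij}(x^{2^i}y^{2^j}+x^{2^j}y^{2^i})\big)$; collecting the coefficient of $y=y^{2^0}$ by repeatedly using $\tr_n(u^{2^m})=\tr_n(u)$ turns the condition ``$B(x,y)=0$ for all $y$'' into a single linearized equation $\Gamma(x)=0$, where $\Gamma(x)=\sum_{j<i}\big(c_{ij}^{2^{-j}}x^{2^{i-j}}+c_{ij}^{2^{-i}}x^{2^{j-i}}\big)$ (exponents of $2$ taken mod $n$). Thus $\mathcal{E}_q=\{x:\Gamma(x)=0\}$, and by Lemma~\ref{root_number1} this root set has size $2^k$.

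The key step is the degree bound. As written, $\Gamma$ contains high-degree monomials $x^{2^{n-(i-j)}}$, so I would apply the Frobenius map $r$ times, i.e.\ raise $\Gamma(x)=0$ to the $2^r$-th power, which leaves the solution set unchanged. Since $1\le i-j\le r$, the exponents $2^{i-j}$ move into the range $\{2^{r+1},\dots,2^{2r}\}$ while the exponents $2^{j-i}\equiv 2^{n-(i-j)}$ move into $\{2^{0},\dots,2^{r-1}\}$; hence $\Gamma(x)^{2^r}$ is a polynomial of degree at most $2^{2r}$. Provided $\Gamma\not\equiv 0$, a nonzero polynomial of degree $\le 2^{2r}$ has at most $2^{2r}$ roots, so $2^{k}=|\mathcal{E}_q|\le 2^{2r}$, i.e.\ $k\le 2r$. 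Plugging this into $\nl(q)=2^{n-1}-2^{(n+k)/2-1}$ yields $\nl(q)\ge 2^{n-1}-2^{(n+2r-2)/2}$, as claimed.

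The main obstacle is the last proviso, namely verifying $\Gamma\not\equiv 0$, equivalently that $q$ really has degree $2$ rather than collapsing to an affine or zero function. This is exactly where linear independence of $a_1,\dots,a_{r-1}$ over $\mathbb{F}_2$ enters: the coefficients $c_{ij}$ are Moore-type determinants in the $a_l$, and when the directions are linearly dependent the $(r-1)$-th derivative vanishes identically, since each argument $x+\sum_{i\in S}a_i$ is attained an even number of times and the terms cancel in characteristic $2$. In that degenerate case $\nl(q)=0$ and the stated bound carries content only when $n\le 2r$. I would therefore treat the genuinely quadratic case via the degree argument above, and dispose of dependent directions by this cancellation observation. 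The remaining bookkeeping — writing out the $c_{ij}$ and checking the exponent shifts modulo $n$ — is routine once the confinement $i,j\in\{0,\dots,r\}$ and the Frobenius twist are in place.
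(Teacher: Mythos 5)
Your argument is essentially the paper's own proof: apply Lemma~\ref{def_rth_der} with $t=r-1$, form the bilinear form, collect the coefficient of $y$ into one linearized equation, raise it to the $2^{r}$-th power so every exponent lands in $\{2^{0},\ldots,2^{2r}\}$, conclude $\dim(\mathcal{E}_q)\le 2r$, and read off the nonlinearity from Lemma~\ref{Walsh_spec_quar}. Your closing caveat is the one substantive addition: the paper never rules out the linearized polynomial being identically zero, and indeed for distinct but $\mathbb{F}_2$-linearly dependent $a_1,\ldots,a_{r-1}$ (possible once $r\ge 4$) the derivative vanishes, $\nl(q)=0$, and the stated bound fails for $n>2r$ — so the lemma as written really does need a linear-independence hypothesis (or a separate treatment of dependent tuples in the proof of Theorem~\ref{theorem_4}), a point on which the paper's proof is silent.
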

\begin{proof}
Let $g(x) = D_{a_{r-1}}D_{a_{r-2}}\ldots D_{a_{1}}f(x)$. Applying Lemma \ref{def_rth_der} with $t = r-1$, we have
\[
g(x) = \tr_n\Big(\sum_{\substack{0 \prec d_{r-1} \prec d_{r-2} \prec \ldots \prec d_1 \prec d_0 = 2^{r+1}-1 \\  \wt(d_k)=r+1-k,\ k=1,2,\ldots,r-1}} x^{d_{r-1}}\prod_{i=1}^{r-1} a_i^{d_{i-1} - d_{i}}\Big) + p(x),
\]
where $\deg(p) \le 1$.

Let $B(x, y) = g(0) + g(x) + g(y) + g(x+y)$. We have
\begin{eqnarray}
B(x, y) &=& \tr_n\Big(\sum_{\substack{0 \prec d_{r-1} \prec d_{r-2} \prec \ldots \prec d_1 \prec d_0 = 2^{r+1}-1 \\  \wt(d_k)=r+1-k,\ k=1,2,\ldots,r-1}} x^{d_{r-1}}\prod_{i=1}^{r-1} a_i^{d_{i-1} - d_{i}}\Big) + \nonumber \\
& & \tr_n\Big(\sum_{\substack{0 \prec d_{r-1} \prec d_{r-2} \prec \ldots \prec d_1 \prec d_0 = 2^{r+1}-1 \\  \wt(d_k)=r+1-k,\ k=1,2,\ldots,r-1}} y^{d_{r-1}}\prod_{i=1}^{r-1} a_i^{d_{i-1} - d_{i}}\Big) +  \nonumber \\
& & \tr_n\Big(\sum_{\substack{0 \prec d_{r-1} \prec d_{r-2} \prec \ldots \prec d_1 \prec d_0 = 2^{r+1}-1 \\ \wt(d_k)=r+1-k,\ k=1,2,\ldots,r-1}} (x+y)^{d_{r-1}}\prod_{i=1}^{r-1} a_i^{d_{i-1} - d_{i}}\Big) \nonumber \\
& = & \tr_n\Big(\sum_{\substack{0 \prec d_{r-1} \prec d_{r-2} \prec \ldots \prec d_1 \prec d_0 = 2^{r+1}-1 \\  \wt(d_k)=r+1-k,\ k=1,2,\ldots,r-1}} \sum_{0 \prec d_r \prec d_{r-1}}x^{d_{r-1}-d_r} y^{d_r}\prod_{i=1}^{r-1} a_i^{d_{i-1} - d_{i}}\Big)  \nonumber \\
& = & \tr_n\Big(\sum_{ \substack{0 \prec d_r \prec d_{r-1} \prec d_{r-2} \prec \ldots \prec d_1 \prec d_0 = 2^{r+1}-1 \\  \wt(d_k)=r+1-k,\ k=1,2,\ldots,r}} x^{d_{r-1}-d_r} y^{d_r}\prod_{i=1}^{r-1} a_i^{d_{i-1} - d_{i}}\Big). \label{equ:Bxy_di}
\end{eqnarray}

Let $e_i = d_{i-1} - d_i$ for $i = 1, 2, \ldots, r$. Let $e_{r+1} = d_r$. Note that 
\[
0 \prec d_r \prec d_{r-1} \prec d_{r-2} \prec \ldots \prec d_1 \prec d_0 = 2^{r+1}-1
\]
and $\wt(d_k)=r+1-k$ for $k=1,2,\ldots,r$. So $e_1, e_2, \ldots, e_{r+1}$ are distinct, and $\wt(e_k) = 1$ for $k = 1,2, \ldots, r+1$. Rewriting \eqref{equ:Bxy_di}, we have
\begin{equation}\label{equ:Bxy_di_update}
   B(x, y) = \tr_n\Big(\sum_{\substack{\text{distinct } e_{1},e_{2},\ldots,e_{r+1}\in \{2^0,2^1,\ldots,2^{r}\} \\ \wt(e_k)=1,\ \forall k\in \{ 1,2,\ldots,r+1\}}} (\prod_{i=1}^{r-1} a_i^{e_i}) x^{e_r}y^{e_{r+1}} \Big). 
\end{equation}

According to \eqref{equ:Bxy_di_update}, $B(x, y) = 0$ holds for all $y$ if and only if the coefficient of $y$ is zero, that is,

\begin{equation}
\label{equ:coe_y}
\sum_{\substack{\text{distinct }e_{1},e_{2},\ldots,e_{r+1}\in \{2^0,2^1,\ldots,2^{r}\} \\ \wt(e_k)=1,\ \forall k\in \{ 1,2,\ldots,r+1\}}} \left(a_1^{e_1}a_2^{e_2}a_3^{e_3}\ldots a_{r-1}^{e_{r-1}} x^{e_{r}}\right)^{e_{r+1}^{-1}}=0.
\end{equation}
Raising both sides of \eqref{equ:coe_y} to the $2^{r}$th power, we have 
\begin{equation}\label{multiple_term}
    \sum\limits_{\substack{\text{distinct } e_{1},e_{2},\ldots,e_{r+1} \in \{2^0,2^1,\ldots,2^{r}\} \\ \wt(e_k)=1,\ \forall k\in \{ 1,2,\ldots,r+1\}}} \left(a_1^{e_1}a_2^{e_2}a_3^{e_3}\ldots a_{r-1}^{e_{r-1}} x^{e_{r}}\right)^{2^r\cdot e_{r+1}^{-1}}=0.
\end{equation}
Observe that each monomial in the left hand side of \eqref{multiple_term} has degree at most $2^{2r}$, because $e_r \le 2^r$ and $2^r\cdot e_{r+1}^{-1} \le 2^r$. So the degree of \eqref{multiple_term} is at most $2^{2r}$, which implies that \eqref{multiple_term} has at most $2^{2r}$ solutions. Therefore, the dimension of the linear kernel of $B(x,y) $ is at most $2r$. By Lemma \ref{Walsh_spec_quar}, we have 

\begin{equation*}
    \nl(D_{a_{r-1}}D_{a_{r-2}}\ldots D_{a_{1}}f)\ge 2^{n-1}-2^{\frac{n+2r-2}{2}}. 
\end{equation*}
\end{proof}

We will need the following lemma in the proof of Theorem \ref{finv_nlr_lowerbound}.

\begin{lemma}\label{expansion_formula_multiple}
Let integer $r \ge 1$.
Let $\alpha_1 > \alpha_2 > \ldots > \alpha_r>0$ and $c_1,c_2,\ldots, c_r>0$. We have

\[
c_1\cdot 2^{\alpha_1 n}+c_2\cdot 2^{\alpha_2 n}+\ldots+c_r\cdot 2^{\alpha_r n} \le (\sqrt{c_1}\cdot 2^{\frac{1}{2}\cdot \alpha_1n}+ \frac{c_2}{2 \sqrt{c_1}}\cdot 2^{(\alpha_2-\frac{1}{2} \cdot \alpha_1)n} + \ldots +\frac{c_r}{2 \sqrt{c_1}}\cdot 2^{(\alpha_r-\frac{1}{2} \cdot \alpha_1)n} )^2
\]
\end{lemma}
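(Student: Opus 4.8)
The plan is to prove the inequality by directly expanding the square on the right-hand side and observing that it reproduces the left-hand side exactly, up to a manifestly nonnegative remainder. Concretely, I would set $A = \sqrt{c_1}\, 2^{\frac{1}{2}\alpha_1 n}$ and $B_k = \frac{c_k}{2\sqrt{c_1}}\, 2^{(\alpha_k - \frac{1}{2}\alpha_1)n}$ for $2 \le k \le r$, so that the right-hand side is precisely $\bigl(A + \sum_{k=2}^{r} B_k\bigr)^2$. All of these quantities are strictly positive, since $c_1, \ldots, c_r > 0$.

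Next I would expand $\bigl(A + \sum_{k=2}^r B_k\bigr)^2 = A^2 + 2A\sum_{k=2}^r B_k + \bigl(\sum_{k=2}^r B_k\bigr)^2$ and evaluate the first two groups of terms. The square of the leading term gives $A^2 = c_1\, 2^{\alpha_1 n}$, matching the first summand of the left-hand side. The crucial point is that each cross term is tuned precisely: $2 A B_k = 2 \cdot \sqrt{c_1}\, 2^{\frac{1}{2}\alpha_1 n} \cdot \frac{c_k}{2\sqrt{c_1}}\, 2^{(\alpha_k - \frac{1}{2}\alpha_1)n} = c_k\, 2^{\alpha_k n}$, where the factors $\sqrt{c_1}$ cancel and the exponents add up as $\frac{1}{2}\alpha_1 n + (\alpha_k - \frac{1}{2}\alpha_1)n = \alpha_k n$. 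Hence $A^2 + 2A\sum_{k=2}^r B_k$ equals exactly $c_1 2^{\alpha_1 n} + \sum_{k=2}^r c_k 2^{\alpha_k n}$, i.e. the entire left-hand side.

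Finally, the remaining term $\bigl(\sum_{k=2}^r B_k\bigr)^2$ is the square of a positive real, hence nonnegative, so the right-hand side is at least the left-hand side, which is the claimed bound. This is essentially a one-line verification once the coefficients are matched, so I do not anticipate any real obstacle; the only thing requiring care is the bookkeeping of the exponents and the cancellation of $\sqrt{c_1}$ in the cross terms, which is exactly why the specific coefficient $\frac{c_k}{2\sqrt{c_1}}$ appears in the statement. Conceptually this is the multi-term analogue of the concavity estimate $\sqrt{c_1 + x} \le \sqrt{c_1} + \frac{x}{2\sqrt{c_1}}$, which is what lets one pull the square root inside the sum in the applications to $\nl_r$. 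I would also note that the hypotheses $\alpha_1 > \cdots > \alpha_r$ are not needed for the inequality itself; they only ensure that the dominant term is the one extracted, which is what makes the bound asymptotically useful.
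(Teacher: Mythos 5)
Your proof is correct and is essentially identical to the paper's: both expand the square, observe that $A^2$ plus the cross terms $2AB_k$ reproduce the left-hand side exactly, and discard the nonnegative remainder $\bigl(\sum_{k\ge 2} B_k\bigr)^2$. Your side remark that the ordering hypothesis on the $\alpha_i$ is not needed for the inequality itself is also accurate.
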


\begin{proof}
By straightforward calculation, we have
\begin{eqnarray*}
\text{R.H.S} & = &  c_1\cdot 2^{\alpha_1 n} +\ldots+ c_r\cdot 2^{\alpha_r n} + \sum_{i=2}^{r}\frac{c_i^2}{4c_1}\cdot 2^{(2\alpha_i-\alpha_1)n}+ \sum_{i,j=2}^{r} \frac{c_i\cdot c_j}{2c_1}2^{(\alpha_i+\alpha_j-\alpha_1)n}\\
& \ge & \text{L.H.S}
\end{eqnarray*}

\end{proof}

In the following, we lower bound the $r$-th order nonlinearity for functions $\tr_n(x^{2^{r+1}-1})$.

\begin{theorem}\label{lower_nlr_high}
Let $f=\tr_n(x^{2^{r+1}-1})$ and $r \ge 1$. We have
\[
\nl_r(f)\ge 2^{n-1}-2^{(1-2^{-r})n+\frac{r}{2^{r-1}}-1}- O(2^{\frac{n}{2}}).
\]
\end{theorem}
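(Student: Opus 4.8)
The plan is to apply Proposition \ref{high_nonlinearity} with $t=r-1$, so that the innermost quantity becomes the first-order nonlinearity $\nl(D_{a_{r-1}}\ldots D_{a_1}f)$, and then to unwind the nested radicals while tracking a single dominant term and pushing everything else into the error $O(2^{\frac n2})$. I would treat $r\ge 2$ as the main case; the case $r=1$ (where $f=\tr_n(x^3)$ is itself quadratic) follows directly from Lemma \ref{nonliearity_derivative} with no derivatives taken.

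First I would record the innermost estimate. For distinct nonzero $a_1,\ldots,a_{r-1}$, Lemma \ref{nonliearity_derivative} gives $\nl(D_{a_{r-1}}\ldots D_{a_1}f)\ge 2^{n-1}-2^{\frac{n+2r-2}{2}}$. Because the sums in Proposition \ref{high_nonlinearity} run over all of $\mathbb{F}_{2^n}$, I must separate the \emph{degenerate} tuples, namely those in which some $a_i=0$ or two of the $a_i$ coincide: for such tuples the iterated derivative is identically $0$ (since $D_aD_af=0$ and $D_0f=0$), so its nonlinearity vanishes. The key point is that for each summation variable there are at most $r-1$ degenerate values, so they contribute only $O(2^n)$ at each level, negligible against the dominant term. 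Thus for fixed distinct nonzero $a_1,\ldots,a_{r-2}$ one obtains
\[
2^{2n}-2\!\!\sum_{a_{r-1}\in\mathbb{F}_{2^n}}\!\!\nl(D_{a_{r-1}}\ldots D_{a_1}f)\le 2^{2n}-2\big(2^n-(r-1)\big)\big(2^{n-1}-2^{\frac{n+2r-2}{2}}\big)=2^{\frac{3n}{2}+r}+O(2^n).
\]

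Next I would analyze the recursion on exponents. Starting from the innermost dominant term $2^{\frac{3n}{2}+r}$, each subsequent operation is a square root, which halves the coefficients of both $n$ and $r$, followed by a summation over one new variable, which multiplies by at most $2^n$ and hence adds $1$ to the coefficient of $n$. Writing the dominant term as $2^{c_kn+d_kr}$ after $k$ square roots, I would verify by induction that $c_k=1-2^{-(k+1)}$ and $d_k=2^{-k}$. Since exactly $t=r-1$ square roots are taken, the dominant term after the final radical is $2^{(1-2^{-r})n+\frac{r}{2^{r-1}}}$, and the factor $\frac12$ in Proposition \ref{high_nonlinearity} produces the claimed main term $2^{(1-2^{-r})n+\frac{r}{2^{r-1}}-1}$.

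The hard part will be controlling the lower-order contributions through the nested radicals. Here I would use Lemma \ref{expansion_formula_multiple} (equivalently $\sqrt{M+E}\le\sqrt M+\frac{E}{2\sqrt M}$), so that an error $E$ sitting beneath a dominant term $M$ becomes an error $\frac{E}{2\sqrt M}$ after the square root. Propagating the initial error $O(2^n)$ through the alternating square roots and summations, the error exponent after $k$ square roots works out to $\frac12(1-2^{-k})$, which stays strictly below $\frac n2$ for every $k\le r-1$; the $O(2^n)$ bad-tuple contributions are even smaller once passed through the outer radicals, since each such radical sits above a dominant term whose $n$-exponent is close to $2$. Hence all lower-order terms accumulate to $O(2^{\frac n2})$, and combining with the main term yields
\[
\nl_r(f)\ge 2^{n-1}-2^{(1-2^{-r})n+\frac{r}{2^{r-1}}-1}-O(2^{\frac n2}),
\]
as required.
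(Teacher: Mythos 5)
Your proposal is correct and follows essentially the same route as the paper: the paper applies Carlet's bound one level at a time via a recursion $u_i\le\frac{1}{2}\sqrt{2^n(i+1)+2^{n+1}u_{i+1}}$ (seeded by Lemma \ref{nonliearity_derivative} and unwound by induction using Lemma \ref{expansion_formula_multiple}), which is exactly your unwinding of the nested radicals in Proposition \ref{high_nonlinearity}, including the same treatment of degenerate tuples and the same dominant-exponent bookkeeping $c_k=1-2^{-(k+1)}$, $d_k=2^{-k}$.
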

\begin{proof} (of Theorem \ref{theorem_4})
Let $l_0=\nl_r(f)$ and 
\[
l_i=\min_{\text{distinct}\ a_1,\ldots,a_i\in \mathbb{F}_{2^n}^*} \nl_{r-i} (D_{a_i}\ldots D_{a_1}f)
\]
for $i = 1, 2, \ldots, r-1$.

By Proposition \ref{lower_non}, we have
\begin{eqnarray}
    l_{i}&=&\min_{\substack{ \text{distinct}\\ a_1,\ldots,a_{i}\in \mathbb{F}_{2^n}^*}}\nl_{r-i}(D_{a_{i}}\ldots D_{a_1}f) \nonumber\\
    &\ge& \min_{\substack{ \text{distinct}\\ a_1,\ldots,a_{i}\in \mathbb{F}_{2^n}^*}} 2^{n-1}-\frac{1}{2}\sqrt{2^{2n}-2\sum_{a_{i+1}\in \mathbb{F}_{2^n}^* \setminus\{a_1, a_2, \ldots, a_i\}}\nl_{r-i-1} (D_{a_{i+1}}\ldots D_{a_1}f)} \nonumber\\
    &\ge & 2^{n-1}-\frac{1}{2}\sqrt{2^{2n}-2(2^n-(i+1))l_{i+1}},
    \label{equ:l_i_ip1}
\end{eqnarray}
for $i = 0, 1, \ldots, r-2$.
Let $u_i=2^{n-1}-l_i$. Replacing $l_i$ by $2^{n-1}-u_i$ in \eqref{equ:l_i_ip1}, we have
\begin{equation}
\label{equ:u_i_ip1}
    u_i\le \frac{1}{2}\sqrt{2^n(i+1)+2^{n+1}u_{i+1}}.    
\end{equation}

\begin{claim}\label{l_i_def}
\begin{equation}
\label{equ:u_i_upper_bound}
u_{i}\le \frac{1}{2}\left(2^{(1 - 2^{-(r-i)})n+\frac{r}{2^{r-i-1}}}+ \sum_{j=1}^{r-i-1}(j+i)\cdot 2^{\frac{2^{j}-1}{2^{r-i}}n-\frac{2^{j}-1}{2^{r-i-1}}r-j} \right).
\end{equation}
for $0\le i\le r-2$.
\end{claim}
\begin{proof} (of Claim \ref{l_i_def})
We prove by induction on $i$. For the base step, we prove the claim for $i=r-2$. By \eqref{equ:u_i_ip1}, we have 
\begin{eqnarray}\label{eq:u_r_2}
    u_{r-2}\le \frac{1}{2}\sqrt{2^n(r-1)+2^{n+1}u_{r-1}} .
\end{eqnarray}

By definition of $l_{r-1}$ and Lemma \ref{nonliearity_derivative}, we have
$l_{r-1}\ge 2^{n-1}-2^{\frac{n+2r-2}{2}}$, that is, $u_{r-1} \le  2^{\frac{n+2r-2}{2}}$. Plugging $u_{r-1} \le  2^{\frac{n+2r-2}{2}}$ into \eqref{eq:u_r_2}, we have
\begin{eqnarray*}
   u_{r-2}&\le &
   \frac{1}{2}\sqrt{2^{\frac{1}{2}(3n+2r)}+(r-1)2^n}
   \\ &\le & \frac{1}{2}(2^{\frac{3}{4}n+\frac{r}{2}}+(r-1)2^{\frac{n}{4}-\frac{r}{2}-1}),
\end{eqnarray*}
where the last step follows from Lemma \ref{expansion_formula_multiple}.

For the induction step, assuming inequality \eqref{equ:u_i_upper_bound} holds for $i+1$, we prove \eqref{equ:u_i_upper_bound} for $i$, where $i = r-3, r-4, \ldots, 0$. Assuming \eqref{equ:u_i_upper_bound} is true for $i+1$, we prove it for $i$. We have
 \begin{eqnarray*}
    u_i&\le& \frac{1}{2}\sqrt{2^n(i+1)+2^{n+1}u_{i+1}}\\
    &\le & \frac{1}{2}\sqrt{2^n(i+1)+ 2^{n}\cdot\left(2^{(1 - 2^{-(r-i-1)})n+\frac{r}{2^{r-i-2}}}+ \sum_{j=1}^{r-i-2}(j+i+1)\cdot 2^{\frac{2^{j}-1}{2^{r-i-1}}n-\frac{2^{j}-1}{2^{r-i-2}}r-j} \right)} \\
    &\le & \frac{1}{2}\left( 2^{(1-2^{-(r-i)})n+\frac{r}{2^{r-i-1}}}+ \sum_{j=1}^{r-i-1}(j+i)\cdot 2^{\frac{2^{j}-1}{2^{r-i}}n-\frac{2^{j}-1}{2^{r-i-1}}r-j}\right),
\end{eqnarray*}
as desired, where the third step follows from Lemma \ref{expansion_formula_multiple}.
\end{proof}

Turn back to the proof of Theorem \ref{theorem_4}. By Claim \ref{l_i_def}, we have 
\begin{eqnarray*}\label{nl_r_result}
    \nl_r(f)&=&2^{n-1}-u_0 \\
    &\ge & 2^{n-1}-2^{(1-2^{-r})n+\frac{r}{2^{r-1}}-1}- \sum_{j=1}^{r-1}j\cdot 2^{\frac{2^{j}-1}{2^{r}}n-\frac{2^{j}-1}{2^{r-1}}r-(j+1)}\\
    &\ge & 2^{n-1}-2^{(1-2^{-r})n+\frac{r}{2^{r-1}}-1}-O(2^{\frac{n}{2}}).
\end{eqnarray*}
\end{proof}

\begin{remark}
    By Theorem \ref{lower_nlr_high}, we deduce that
    \begin{eqnarray*}
            \nl_r(f)&\ge& 2^{n-1}-2^{(1-2^{-r})n+\frac{r}{2^{r-1}}-1}- O(2^{\frac{n}{2}})
            \\& =& 2^{n-1}(1-\exp(-\frac{\alpha\cdot n}{2^r})),
    \end{eqnarray*}
where $\alpha\approx \log_2 e$ when $ r\ll \log_2 n$.
\end{remark}

Similarly, for the inverse function, we prove the following nonlinearity lower bound. This is studied by Carlet in \cite{Car08}, who claims that the $r$-th order nonlinearity is asymptotically lower bounded by $2^{n-1}-2^{(1-2^{-r})n} $. We credit the lower bound, i.e., Theorem \ref{finv_nlr_lowerbound}, to Carlet, since our proof closely follows the method in \cite{Car08} by working out the calculations carefully. The proof of the following theorem is in Appendix \ref{high_non_inverse}. 

\begin{theorem}\label{finv_nlr_lowerbound}
Let $f_{\mathrm{inv}}=\tr_n(x^{2^n-2})$. For any $r \ge 1$, we have
    $ \nl_r(f_{\mathrm{inv}})  \ge  2^{n-1}-2^{(1-2^{-r})n-2^{-(r-1)}}- O(2^{\frac{n}{2}})$.
\end{theorem}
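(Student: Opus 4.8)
The plan is to run Carlet's recursion exactly as in the proof of Theorem \ref{theorem_4}, changing only its base case. Concretely, I would set $l_0=\nl_r(f_{\mathrm{inv}})$ and $l_i=\min_{\text{distinct }a_1,\dots,a_i}\nl_{r-i}(D_{a_i}\cdots D_{a_1}f_{\mathrm{inv}})$, with $u_i=2^{n-1}-l_i$. Applying Proposition \ref{lower_non} as in \eqref{equ:l_i_ip1}--\eqref{equ:u_i_ip1} gives the same recurrence $u_i\le\tfrac12\sqrt{2^n(i+1)+2^{n+1}u_{i+1}}$ for $0\le i\le r-2$, and this is unwound by the identical induction of Claim \ref{l_i_def} (the nested radicals being linearized by Lemma \ref{expansion_formula_multiple}). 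Since this whole apparatus depends on the starting function only through the value $u_{r-1}$, the proof reduces to a single new estimate: a lower bound on the first-order nonlinearity of every $(r-1)$-th order derivative $h:=D_{a_{r-1}}\cdots D_{a_1}f_{\mathrm{inv}}$ with distinct $a_1,\dots,a_{r-1}\in\mathbb{F}_{2^n}^*$.

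The obstruction to copying Lemma \ref{nonliearity_derivative} verbatim is that $f_{\mathrm{inv}}$ has algebraic degree $n-1$, so its derivatives are far from quadratic and the linear-kernel description (Lemma \ref{Walsh_spec_quar}) does not apply. Instead I would use the closed form $D_{a_1}\cdots D_{a_k}g(x)=\sum_{S\subseteq[k]}g\!\left(x+\sum_{i\in S}a_i\right)$, valid for every $g$, with $g(x)=x^{2^n-2}=x^{-1}$. Writing $a_S=\sum_{i\in S}a_i$, this yields
\[
h(x)=\tr_n\!\left(\sum_{S\subseteq[r-1]}\frac{1}{x+a_S}\right)+\text{const}=\tr_n\!\left(\frac{N(x)}{D(x)}\right)+\text{const},
\]
where $D(x)=\prod_{S\subseteq[r-1]}(x+a_S)$ has degree $2^{r-1}$ with simple poles (the distinct sums $a_S$) and $\deg N<2^{r-1}$. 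For $r=2$ the substitution $x\mapsto a_1x$ turns $h$ into $\tr_n(a_1^{-1}/(x^2+x))$, a Kloosterman sum whose Walsh transform has absolute value at most $2^{n/2+1}$.

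I would then bound the Walsh spectrum $W_h(\mu)=\sum_{x}(-1)^{\tr_n(N(x)/D(x))+\tr_n(\mu x)}$, the target being
\[
\max_{\mu\in\mathbb{F}_{2^n}}|W_h(\mu)|\le 2^{(n+2^r-2)/2},\qquad\text{equivalently}\qquad u_{r-1}\le 2^{\frac n2+2^{r-1}-2}.
\]
This matches the Kloosterman bound above when $r=2$. To obtain it for general $r$ I would clear denominators in $N(x)/D(x)+\mu x$ and estimate the resulting additive character sum by a rational-function form of the Weil bound (Theorem \ref{Weil's Theorem}) applied to the associated curve, or more elementarily by counting the roots of the polynomial equation recording when $W_h(\mu)$ is extremal, the pole and degree data of $D$ controlling the count. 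Feeding $u_{r-1}\le 2^{n/2+2^{r-1}-2}$ into the recurrence and solving exactly as in Claim \ref{l_i_def} gives $u_0\le 2^{(1-2^{-r})n-2^{-(r-1)}}+O(2^{n/2})$, whence $\nl_r(f_{\mathrm{inv}})=2^{n-1}-u_0\ge 2^{n-1}-2^{(1-2^{-r})n-2^{-(r-1)}}-O(2^{n/2})$, as claimed.

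The hard part is precisely this base-case estimate with the correct constant, because the recurrence is extremely sensitive to the $n$-independent part of $\log_2 u_{r-1}$: that term propagates (scaled by $2^{-(r-1)}$) into the final exponent and is exactly what produces the $-2^{-(r-1)}$. A careless pole count or a mis-normalized character-sum bound would alter it and hence the whole lower-order term. Verifying that the $2^{r-1}$ sums $a_S$ are genuinely distinct (so $D$ has the stated degree and simple poles) and that $N(x)/D(x)+\mu x$ is never of the degenerate form $g(x)^2+g(x)+c$ (the exceptional case of the Weil estimate) is the calculation Carlet could have done but did not write down; once $u_{r-1}\le 2^{n/2+2^{r-1}-2}$ is in hand, the remainder is the routine radical-unwinding already carried out for Theorem \ref{theorem_4}.
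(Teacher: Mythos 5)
Your plan diverges from the paper's proof in a structural way, and the divergence is exactly where the gap sits. The paper never estimates the first-order nonlinearity of the $(r-1)$-th derivatives of $f_{\mathrm{inv}}$. Instead it imports from \cite{Car08} a recursion \emph{in the order $r$ for the same function} (Proposition \ref{nlr_lower_bound}: $l_r=\sqrt{(2^n-1)(l_{r-1}+1)+2^{n-2}}$, which exploits the fact that $D_af_{\mathrm{inv}}$ is essentially $f_{\mathrm{inv}}$ again up to affine equivalence), anchors it at Carlet's explicit third-order bound (Proposition \ref{nl3_lower_bound}), and then the entire content of the appendix is the radical-unwinding of Lemma \ref{l_r_bound} via Lemma \ref{expansion_formula_multiple}. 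Your route instead drives the generic recursion of Theorem \ref{theorem_4} all the way down to $u_{r-1}$, which forces you to bound the Walsh spectrum of $\tr_n\bigl(\sum_{S}(x+a_S)^{-1}\bigr)$, a rational function with up to $2^{r-1}$ poles. Your arithmetic is right: the target $u_{r-1}\le 2^{\frac{n}{2}+2^{r-1}-2}$ is precisely what the unwinding needs to reproduce the $-2^{-(r-1)}$ in the exponent. But that estimate is the whole theorem, and you do not prove it. The Weil bound available in the paper (Theorem \ref{Weil's Theorem}) applies only to polynomials of degree coprime to $q$; the rational-function/Artin--Schreier version you invoke is an external result whose characteristic-2 hypotheses (that $N(x)/D(x)+\mu x$ is not of the form $g^2+g+c$, and the correct accounting of wild ramification at the poles and at infinity) are exactly the delicate points, and you defer them with ``I would estimate \dots or more elementarily by counting roots.'' As written, the base case is asserted, not established.

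A second, concrete defect: the subset sums $a_S=\sum_{i\in S}a_i$ are \emph{not} distinct for merely distinct $a_1,\dots,a_{r-1}$; whenever the $a_i$ are $\mathbb{F}_2$-linearly dependent, sums collide, the terms $(x+a_S)^{-1}$ cancel in pairs, and in fact $D_{a_{r-1}}\cdots D_{a_1}f_{\mathrm{inv}}\equiv 0$, so its nonlinearity is $0$ and your lower bound on $l_{r-1}$ (a minimum over distinct tuples) is false. You need to restrict the minimum to linearly independent tuples and separately absorb the $O(2^{i})$ dependent choices of $a_{i+1}$ into the sum in \eqref{equ:l_i_ip1}; this is repairable but is not the ``verification'' you describe. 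In short: the scaffolding (recursion plus Lemma \ref{expansion_formula_multiple}) matches the paper's in spirit, but your proposal replaces the paper's elementary base case with a harder unproved character-sum estimate, so it does not yet constitute a proof.
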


Note that the bound in Theorem \ref{lower_nlr_high} is slightly better than that in Theorem \ref{finv_nlr_lowerbound}.

\subsection{Comparison}
 Babai, Nisan and Szegedy \cite{BNS92} proved that  the $r$-th nonlinearity of the generalized inner product function
 \[
 \mathrm{GIP}_{r+1}(x_1, x_2, \ldots, x_n) = \prod_{i=1}^{r+1}x_i + \prod_{i=r+2}^{2(r+1)}x_i + \ldots + \prod_{i=n-r}^{n}x_i
 \]
 is lower bounded by $2^{n-1}(1-\mathrm{exp}(-\Omega(\frac{n}{r\cdot 4^r})))$. Bourgain \cite{Bou05} and Green \emph{et al.} \cite{GRS05} proved that the $r$-th nonlinearity of the $\mathrm{mod}_3$ function is at least $2^{n-1}(1-\mathrm{exp}(-\frac{n}{8^r}))$; Viola \cite{Vio06} and Chattopadhyay \cite{Cha07} improved this bound to $2^{n-1}(1-\mathrm{exp}(-\frac{n}{4^r}))$. Viola \cite{Vio06} exhibited an explicit function $f \in P$ (which relies on explicit small-bias generators) with $r$-th nonlinearity at least $2^{n-1}(1-\mathrm{exp}(-\frac{\alpha \cdot n}{2^r}))$, where $\alpha < \frac{1}{4}\cdot \log_2 e$; the lower bound is also proved in \cite{VW08} using similar argument. 

By Theorem \ref{lower_nlr_high}, we prove that the $r$-th order nonlinearity of $\tr_n(x^{2^{r+1}-1})$ is at least $2^{n-1}(1-\exp(-\frac{\beta\cdot n}{2^r}))$, where $\beta\approx \log_2 e$ when $r\ll \log_2 n$. Previous to our work, the best lower bound is $2^{n-1}(1-\mathrm{exp}(-\frac{\alpha \cdot n}{2^r}))$ \cite{Vio06, VW08}, where $\alpha < \frac{1}{4}\cdot \log_2 e$.

\section{Conclusion}
Using algebraic methods, we lower bound the second-order, third-order, and higher-order nonlinearities of some trace monomial Boolean functions. For the second-order nonlinearity, we study Boolean functions $\tr_n(x^7)$ and $\tr_n(x^{2^r+3})$ for $n=2r$; the latter class of Boolean functions is studied for the first time. Our lower bounds match the best proven lower bounds on the second-order nonlinearity among all trace monomial functions \cite{Car08, YT20}. For the third-order nonlinearity, we prove the lower bound for functions $\tr_n(x^{15})$, which is the best provable third-order nonlinearity lower bound. For higher-order nonlinearity, we prove the lower bound
\[
\nl_r(f)\ge 2^{n-1}-2^{(1-2^{-r})n+\frac{r}{2^{r-1}}-1}-O(2^{\frac{n}{2}})
\]
for functions $\tr_n(x^{2^{r+1}-1})$. When $r \ll \log n$, this is the best lower bound, compared with all the previous works, e.g., \cite{BNS92, Bou05, GRS05, Cha07, Vio06, VW08}.

\printbibliography

\appendix

\section{Proof of Lemma \ref{number_roots}}
\label{app:proof_of_lemma_number_roots}
Let $c=a^{-5}$. Let $h(b)=\sum\limits_{\substack{d_1\in \{0,1\} \\ \cdots \\ d_{r-1}\in \{0,1\}}} \sum_{j=0}^{n-1} c^{2^j} b^{\sum_{i=1}^{r-1}2^{n-2i+d_i+j}}$. We know $\{b^i \mid 1 \le i \le 2^n-1, b\in \mathbb{F}_{2^n}^*\}$ is a multiplicative group of order $ 2^n-1$. We prove there exists only one monomial of $h(b)$ with the maximum degree $ \frac{5}{3} \cdot 2^{n-1} - \frac{32}{3}$. Let $d_1=0$, $d_i=1 $, $j=1$ for $i\ge 2$. We have a monomial $c^2b^{2^{n-1}+\sum_{i=2}^{r-1}2^{n-2i+2}} =c^2b^{2^{n-1}+2^{n-2}+2^{n-4}+\ldots +2^5}=c^2b^{\frac{5}{3} \cdot 2^{n-1} - \frac{32}{3}} $. Next, we prove there are no other monomials with degree $\frac{5}{3} \cdot 2^{n-1} - \frac{32}{3}$.

\textbf{Case 1}: $j=0$. In this case, we deduce that the degree of the corresponding monomials $cb^{\sum_{i=1}^{r-1} 2^{n-2i+d_i}}$ for $d_i\in \{0,1\}$ and $1\le i\le r-1$ is 
    \begin{eqnarray*}
        &&2^{n-2+d_1}+2^{n-4+d_2}+\ldots+2^{3+d_{r-1}}\\ &\le & 2^{n-1}+2^{n-3}+\ldots+ 2^4
        \\ &= & \frac{2^{n+1}-16}{3}.    
     \end{eqnarray*}
When $n\ge 6$, we have $\frac{2^{n+1}-16}{3} < \frac{5}{3} \cdot 2^{n-1} - \frac{32}{3}$.

\textbf{Case 2}: $d_1=0$ and  $j=1$. In this case, there is a monomial $c^2b^{2^{n-1}+\sum_{i=2}^{r-1} 2^{n-2i+1+d_i}}$ , the degree of which is
    \begin{eqnarray*}
            & & 2^{n-1}+\sum_{i=2}^{r-1} 2^{n-2i+1+d_i}
            \\& \le & 2^{n-1}+\sum_{i=2}^{r-1} 2^{n-2i+2} 
            \\& = & \frac{5}{3} \cdot 2^{n-1} - \frac{32}{3}.   
    \end{eqnarray*}
The equality holds if and only if $d_1=0$, $d_i=1 $, $j=1$ for $i\ge 2$.

\textbf{Case 3}: $d_1+j\ge 2$ and $3+d_{r-1}+j\le n-1$. In this case, we have $\sum_{i=1}^{r-1} 2^{n-2i+d_i+j}\ge 2^n$ for some $i$ and $j$. Let $i_0$ be such that $n-2i_0+d_{i_0}+j\ge n$ and $n-2(i_0+1)+d_{i_0+1}+j\le n-1 $. We can deduce that $n-2\le n-2i_0-2+d_{i_0+1}+j\le n-1 $. 
Since $b^{2^n}=b$, we have 
\begin{eqnarray}\label{case3}
        c^{ 2^j}b^{\sum_{i=1}^{r-1}2^{n-2i+d_i+j} }&= 
        & c^{ 2^j}(b^{\sum_{i=1}^{i_0} 2^{n-2i+d_i+j}}\cdot b^{\sum_{i={i_0}+1}^{r-1} 2^{n-2i+d_i+j}}) \nonumber
        \\&=&c^{ 2^j} (b^{\sum_{i=1}^{i_0} 2^{-2i+d_i+j}} \cdot b^{\sum_{i={i_0}+1}^{r-1} 2^{n-2i+d_i+j} } ).
\end{eqnarray}
    Let $d= \sum_{i=1}^{i_0} 2^{-2i+d_i+j} + \sum_{i=i_0+1}^{r-1} 2^{n-2i+d_i+j}$ denote by the degree of the polynomial \eqref{case3}.
    For a fixed $j$, we have $2^{n-2i+d_i+j}$ is decreasing for $i_0+1\le i\le r-1$ and $d_i\in \{0,1\} $ and $2^{-2i+d_i+j}$ is also decreasing for $1\le i\le i_0$ and $d_i\in \{0,1\} $. That is,
    \begin{equation}\label{recurise_eq}
        2^{n-1}\ge 2^{n-2i_0-2+d_{i_0+1}+j}>\ldots>2^{3+d_{r-1}+j}>2^{d_1+j-2}>\ldots>2^{d_{i_0}+j-2i_0}.
    \end{equation}

\indent \textbf{Subcase 3.1}: $n-2i_0-2+d_{i_0+1}+j=n-2$. We have 
\begin{eqnarray*}
         d&=&  \sum_{i=i_0+1}^{r-1} 2^{n-2i+d_i+j}+\sum_{i=1}^{i_0} 2^{-2i+d_i+j}
         \\& < &2^{n-1}<\frac{5}{3} \cdot 2^{n-1} - \frac{32}{3}
\end{eqnarray*}
 for $n\ge 6 $.

\indent \textbf{Subcase 3.2}:  $n-2i_0-2+d_{i_0+1}+j=n-1$ and $i_0=r-2$. We have
    \begin{equation*}
    \begin{aligned}
            d\quad &= \quad  2^{n-1}+ \sum_{i=1}^{r-2} 2^{-2i+d_i+j} 
            \\ \quad &< \quad 2^{n-1}+2^{d_1+j-1} 
            \\ \quad & \le \quad 2^{n-1}+2^{n-5-d_{r-1}+d_1} && \qquad \text{By $3+d_{r-1}+j\le n-1$}
            \\ \quad& \le \quad 2^{n-1}+2^{n-4} 
    \end{aligned}
     \end{equation*}

For $n\ge 6$, we have $d\le 2^{n-1}+2^{n-4}< \frac{5}{3} \cdot 2^{n-1} - \frac{32}{3}$.

\textbf{Subcase 3.3}: $n-2i_0-2+d_{i_0+1}+j=n-1$ and $i_0<r-2$. We have
    \begin{equation*}
    \begin{aligned}
        d \quad &= \quad \sum_{i=1}^{r-1-i_0} 2^{n-2i_0-2i+d_{i_0+i}+j} +\sum_{i=1}^{i_0}2^{-2i+d_i+j}
            \\&= \quad \sum_{i=1}^{r-1-i_0} 2^{n-2i+1+d_{i_0+i}-d_{i_0+1}} +\sum_{i=1}^{i_0}2^{-2i+d_i+j} 
            \\ &\le \quad 2^{n-1}+(\sum_{i=2}^{r-1-i_0}2^{n-2i+2})+2^{d_1+j-1} && \qquad \text{By $ \sum_{i=1}^{i_0}2^{-2i+d_i+j} < 2^{d_1+j-1}$ }
            \\&= \quad 2^{n-1} +( \sum_{i=2}^{r-1-i_0}2^{n-2i+2})+ 2^{2i_0+d_1-d_{i_0+1}} && \qquad \text{By  $n-2i_0-2+d_{i_0+1}+j=n-1$ }
            \\ &\le \quad 2^{n-1}+( \sum_{i=2}^{r-1-i_0}2^{n-2i+2})+2^{2i_0+1} && \qquad \text{By  $d_1-d_{i_0+1}\le 1 $ } 
            \\&< \quad 2^{n-1}+\sum_{i=2}^{r-i_0} 2^{n-2i+2}.
    \end{aligned}
    \end{equation*}

Since $ 2^{n-1}+\sum_{i=2}^{r-i_0} 2^{n-2i+2} \le 2^{n-1}+\sum_{i=2}^{r-1} 2^{n-2i+2} = \frac{5}{3} \cdot 2^{n-1} - \frac{32}{3}$, we have $d< \frac{5}{3} \cdot 2^{n-1} - \frac{32}{3}$.

\textbf{Case 4}: $3+d_{r-1}+j\ge n$. In this case, all the exponents $2^{n-2i+d_i+j}$ of the corresponding monomials' degree of $h(b)$ are greater than $2^n$. So the degree of the corresponding monomials can be represented as 
 \begin{equation*}
 \begin{aligned}
    &  \sum_{i=1}^{r-1} 2^{-2i+d_i+j}
         \\ \le \quad & \sum_{i=1}^{r-1} 2^{n-2i+d_i-1} && \qquad \text{By $j\le n-1$}
         \\ \le \quad & \sum_{i=1}^{r-1} 2^{n-2i}  &&\qquad \text{Since $d_i\le 1$}
         \\ < \quad & 2^{n-1}
         \\< \quad & \frac{5}{3} \cdot 2^{n-1} - \frac{32}{3}
 \end{aligned}
\end{equation*}
for $n\ge 6$.

Summarizing all cases, we know that the degree of the polynomial $h(b)$ is exactly $\frac{5}{3} \cdot 2^{n-1} - \frac{32}{3}$. 

\section{ Proof of Lemma \ref{minimum_degree}}
\label{app:proof_of_lem_min_degree}

Let $c=a^{-5}$. Let $h(b)=\sum\limits_{\substack{d_1\in \{0,1\} \\ \cdots \\ d_{r-1}\in \{0,1\}}} \sum_{j=0}^{n-1} c^{ 2^j} b^{\sum_{i=1}^{r-1}2^{n-2i+d_i+j}}$. Let $d_i=0,d_{r-1}=1,j=n-4$ for $i< r-1$, we have a monomial $c^{ 2^{n-4}}b^{\sum_{i=1}^{r-2}2^{2n-2i-4}+2^n}=c^{2^{n-4}}b^{\sum_{i=1}^{r-2}2^{n-2i-4}+1}=c^{2^{n-4}}b^{\frac{2^{n-4}+1}{3}}$. Next, we prove that there are no other monomials with the minimum degree $\le \frac{2^{n-4}+1}{3}$.
    
    \textbf{Case 1}: $3+d_{r-1}+j<n$. In this case, we have $2^{n-2+d_1+j}\ge2^{n-2}$, $2^{3+d_{r-1}+j}\le 2^{n-1}$. Let $i_0$ be such that $n-2i_0+d_{i_0}+j\ge n$ and $n-2(i_0+1)+d_{i_0+1}+j= n-1 $ or $n-2$. Then we have
    \begin{eqnarray*}
        c^{2^j}b^{\sum_{i=1}^{r-1}2^{n-2i+d_i+j} }&= 
        & c^{ 2^j}(b^{\sum_{i=1}^{i_0} 2^{n-2i+d_i+j}}\cdot b^{\sum_{i={i_0}+1}^{r-1} 2^{n-2i+d_i+j}})
        \\&=&c^{ 2^j} (b^{\sum_{i=1}^{i_0} 2^{-2i+d_i+j}} \cdot b^{\sum_{i={i_0}+1}^{r-1} 2^{n-2i+d_i+j} } ).
    \end{eqnarray*}
    
    According to \eqref{recurise_eq}, we deduce that the degree of the above monomial is 
        \begin{eqnarray*}
             2^n> \sum_{i=1}^{r-1}2^{n-2i+d_i+j} &=& \sum_{i=1}^{i_0} 2^{-2i+d_i+j}+\sum_{i={i_0}+1}^{r-1} 2^{n-2i+d_i+j}
            \\ &>   &2^{n-2} 
            >  \frac{2^{n-4}+1}{3}.
        \end{eqnarray*}

\textbf{Case 2}: $3+d_{r-1}+j\ge n$. In this case, the degree of the following monomials is
        \begin{equation*}
        \begin{aligned}
            \quad  \sum_{i=1}^{r-1}2^{n-2i+d_i+j}   =  & \quad \sum_{i=1}^{r-1}2^{-2i+d_i+j} \\  \ge &  \quad (\sum_{i=1}^{r-2}2^{n-2i-3+d_i-d_{r-1}})+2^{3+d_{r-1}+j-n} && \qquad \text{By $3+d_{r-1}+j\ge n $} \\ 
             \ge & \quad \sum_{i=1}^{r-2}2^{n-2i-4}+2^{0} \\
             = & \quad \frac{2^{n-4}+1}{3}.
        \end{aligned}
        \end{equation*}
The equality holds if and only if $d_i=0$, $d_{r-1}=1$ and $j=n-4$ where $0 \le i \le r-2$.

Summarizing all cases, we prove that the minimum degree of the monomial of $h(b)$ is $\frac{2^{n-4}+1}{3}$.

\section{Proof of Theorem \ref{finv_nlr_lowerbound}} \label{high_non_inverse}
Carlet \cite{Car08} gave the expression of the lower bound on the $r$-th order nonlinearity of the inverse function $\tr_n(x^{2^n-2})$ by applying iteratively the lower bound on the third-order nonlinearity. 
\begin{proposition} \label{nl3_lower_bound} \cite[Proposition 7]{Car08}
    Let $f_{\mathrm{inv}}(x)=\tr_n(x^{2^n-2})$. Then we have
    \footnote{There is a mistake of the lower bound in \cite{Car08}, which has been corrected here.}   
    \begin{eqnarray*}
        \nl_3(f_{\mathrm{inv}})&\ge& 2^{n-1}-\frac{1}{2}\sqrt{ (2^n-1)\sqrt{2^{\frac{3n}{2}+3}+3\cdot 2^{n+1}-2^{\frac{n}{2}+3}+16 }+2^n}.
    \end{eqnarray*} 
\end{proposition}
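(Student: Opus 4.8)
The plan is to establish this $\nl_3$ bound by a direct application of Carlet's two-step recursion (Proposition \ref{pro3}) with $r=3$, which gives
\[
\nl_3(f_{\mathrm{inv}}) \ge 2^{n-1} - \tfrac12\sqrt{\textstyle\sum_{a\in\mathbb{F}_{2^n}}\sqrt{2^{2n}-2\sum_{b\in\mathbb{F}_{2^n}}\nl(D_aD_bf_{\mathrm{inv}})}}.
\]
Since $\nl(D_aD_bf_{\mathrm{inv}})=2^{n-1}-\tfrac12\max_\mu|W_{D_aD_bf_{\mathrm{inv}}}(\mu)|$, the whole problem reduces to controlling, for each fixed $a$, the quantity $\sum_b\max_\mu|W_{D_aD_bf_{\mathrm{inv}}}(\mu)|$. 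The $a=0$ summand is trivial: $D_0D_bf_{\mathrm{inv}}=0$, so its outer term equals $\sqrt{2^{2n}}=2^n$, which accounts for the ``$+2^n$'' inside the square root of the claimed bound. It then remains to show, for every nonzero $a$, that $\sum_b\max_\mu|W_{D_aD_bf_{\mathrm{inv}}}(\mu)|\le M$, where $M=2^{\frac{3n}{2}+3}+3\cdot2^{n+1}-2^{\frac{n}{2}+3}+16$; indeed, using $\sum_b\nl(D_aD_bf_{\mathrm{inv}})=2^{2n-1}-\tfrac12\sum_b\max_\mu|W|$, the inequality $\sum_b\max_\mu|W|\le M$ is exactly equivalent to $2^{2n}-2\sum_b\nl\le M$, so that each of the $2^n-1$ nonzero-$a$ summands is at most $\sqrt M$, yielding $(2^n-1)\sqrt M+2^n$ inside the outer root.

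Next I would compute the second derivative explicitly. By inclusion-exclusion applied to $x\mapsto x^{-1}$ (with the convention $0^{-1}=0$), for $x\notin\{0,a,b,a+b\}$ we have
\[
D_aD_bf_{\mathrm{inv}}(x)=\tr_n\Big(\tfrac1x+\tfrac1{x+a}+\tfrac1{x+b}+\tfrac1{x+a+b}\Big),
\]
which is $\tr_n$ of a rational function with four \emph{distinct simple} poles whenever $b\notin\{0,a\}$. The two degenerate values $b\in\{0,a\}$ give $D_aD_bf_{\mathrm{inv}}=0$, contributing $\max_\mu|W|=2^n$ each, i.e.\ $2^{n+1}$ to the sum over $b$; so I only need to bound the $2^n-2$ nondegenerate terms.

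The third step is to bound the Walsh coefficients. For fixed nondegenerate $a,b$ and any $\mu$, the value $W_{D_aD_bf_{\mathrm{inv}}}(\mu)$ equals, up to the four pole points which alter it by at most a constant, the additive character sum of the rational function $R(x)=\frac1x+\frac1{x+a}+\frac1{x+b}+\frac1{x+a+b}+\mu x$. Because all residues are $1$, $R$ is never of the degenerate Artin-Schreier form $h^2+h$, and $R$ has four finite simple poles together with a simple pole at infinity when $\mu\neq0$. Invoking the Weil bound for character sums of rational functions (the rational-function generalization of Theorem \ref{Weil's Theorem}), these five simple poles yield $|W_{D_aD_bf_{\mathrm{inv}}}(\mu)|\le 8\cdot 2^{n/2}+O(1)$. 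Summing over the $2^n-2$ nondegenerate $b$ and adding the $2^{n+1}$ from the degenerate ones produces the dominant term $2^{\frac{3n}{2}+3}$, with the remaining constants $3\cdot2^{n+1}-2^{\frac{n}{2}+3}+16$ coming from the pole corrections and the degenerate count. Substituting back into Proposition \ref{pro3} gives the stated bound, and expanding the nested square roots recovers the asymptotic form $2^{n-1}-2^{\frac{7n}{8}-\frac14}-O(2^{\frac{3n}{8}})$.

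The hard part will be pinning down the exact constant $8$ in the per-term Walsh bound and, even more delicately, matching the precise lower-order coefficients $3\cdot2^{n+1}-2^{\frac{n}{2}+3}+16$. The constant $8$ is \emph{not} the naive ``degree minus one'': it must be read off from the conductor of the five simple poles (four finite and one at infinity), and a crude uniform bound of $8\cdot2^{n/2}$ per term does not by itself reproduce the stated lower-order terms, so the summation must distinguish the pole corrections from the $0^{-1}=0$ convention and account separately for the degenerate $b\in\{0,a\}$. It is precisely this constant-level bookkeeping that was mishandled in Carlet's original statement and is corrected here; I expect the leading asymptotics to be routine once the per-term bound is fixed, while the careful accounting of the $O(2^n)$ and $O(2^{n/2})$ contributions is the genuine obstacle.
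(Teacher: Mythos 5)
The paper does not actually prove this proposition: it is imported verbatim from \cite{Car08} (Proposition 7), with only a footnote noting that a constant in Carlet's original statement has been corrected. So there is no in-paper argument to compare against; what you have written is, in effect, a reconstruction of Carlet's own proof, and your reconstruction is structurally sound. The reduction via Proposition \ref{pro3}, the isolation of the $a=0$ term as the ``$+2^n$'', the identification of $2^{2n}-2\sum_b\nl(D_aD_bf_{\mathrm{inv}})$ with $\sum_b\max_\mu|W_{D_aD_bf_{\mathrm{inv}}}(\mu)|$, and the computation $D_aD_bf_{\mathrm{inv}}(x)=\tr_n\bigl(ab(a+b)/(x(x+a)(x+b)(x+a+b))\bigr)$ off the four exceptional points are all correct, as is your observation that the four poles are simple with residue $1$, so the Artin--Schreier form cannot degenerate and the conductor gives genus $4$ (five simple poles including infinity when $\mu\neq0$), hence the Weil constant $8$ rather than a ``degree minus one''. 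Two remarks. First, the one genuinely external ingredient is the Weil bound for character sums of \emph{rational} functions; the paper's Theorem \ref{Weil's Theorem} covers only polynomials of small degree and is useless here (writing $x^{2^n-2}$ as a polynomial gives astronomical degree), so you must cite the rational-function version explicitly. Second, the bookkeeping you flag as the ``genuine obstacle'' in fact closes, and in your favor: the precise count gives $|W_{D_aD_bf_{\mathrm{inv}}}(\mu)|\le 8\cdot2^{n/2}+4$ for every nondegenerate pair (the $+4$ absorbing the four exceptional points, and $\mu=0$ giving the better constant $6$), so
\[
\sum_{b}\max_\mu|W_{D_aD_bf_{\mathrm{inv}}}(\mu)|\le (2^n-2)\bigl(8\cdot2^{n/2}+4\bigr)+2^{n+1}=2^{\frac{3n}{2}+3}+3\cdot2^{n+1}-2^{\frac{n}{2}+4}-8,
\]
which is strictly smaller than the $M=2^{\frac{3n}{2}+3}+3\cdot2^{n+1}-2^{\frac{n}{2}+3}+16$ appearing in the statement. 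Hence your route proves a (marginally) sharper inequality, and the stated proposition follows a fortiori; you do not need to reproduce Carlet's exact lower-order constants, only to dominate them.
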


\begin{proposition} \label{nlr_lower_bound} \cite[page 1271]{Car08}
    Let $f_{\mathrm{inv}}(x)=\tr_n(x^{2^n-2})$ and $r\ge 1$. Then 
    \begin{equation}\label{finv_recursion}
             \nl_r(f_{\mathrm{inv}})\ge 2^{n-1}-l_r,
    \end{equation} 
where $l_r=\sqrt{(2^n-1)(l_{r-1}+1)+2^{n-2} }$.
\end{proposition}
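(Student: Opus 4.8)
The plan is to prove the bound in two stages: first establish a recursive lower bound that relates $\nl_r(f_{\mathrm{inv}})$ to $\nl_{r-1}(f_{\mathrm{inv}})$, seeded by an explicit third-order bound, and then solve that recursion to extract the asymptotic form $2^{n-1}-2^{(1-2^{-r})n-2^{-(r-1)}}-O(2^{n/2})$. Throughout I follow Carlet's method (Proposition \ref{lower_non}), supplying the details that are only sketched in \cite{Car08}. The structural fact driving everything is the identity $D_a f_{\mathrm{inv}}(x)=\tr_n\big(\tfrac{1}{x}+\tfrac{1}{x+a}\big)=\tr_n\big(\tfrac{a}{x^2+ax}\big)=f_{\mathrm{inv}}(\phi_a(x))$, valid for $x\notin\{0,a\}$, where $\phi_a(x)=a^{-1}x^2+x$ is $\mathbb{F}_2$-linear with kernel $\{0,a\}$, hence a $2$-to-$1$ map onto a hyperplane. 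Thus every nonzero first-order derivative of $f_{\mathrm{inv}}$ is the inverse function composed with a linear $2$-to-$1$ map.

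To establish the base case I would use that the second-order derivatives $D_aD_b f_{\mathrm{inv}}$ are quadratic: I compute the dimension of their linear kernels, invoke Lemma \ref{Walsh_spec_quar} to read off the Walsh spectrum, and use the Weil bound (Lemma \ref{Weil's bound}) to control the relevant weights, yielding the explicit $\nl_3$ estimate of Proposition \ref{nl3_lower_bound}. For the recursive step I would feed the identity above into Proposition \ref{lower_non}. Since $\phi_a(a)=0$, the composition $f_{\mathrm{inv}}\circ\phi_a$ is invariant under $x\mapsto x+a$, so its degree-$\le(r-1)$ approximations may be symmetrized; this relates $\nl_{r-1}(D_a f_{\mathrm{inv}})$ to $\nl_{r-1}(f_{\mathrm{inv}})$ and produces a self-referential inequality of the form $\nl_{r-1}(D_a f_{\mathrm{inv}})\ge 2\,\nl_{r-1}(f_{\mathrm{inv}})-2^{n-1}-2$, uniformly in $a\neq 0$. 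Because this bound is the same for every $a\neq 0$, substituting it into Proposition \ref{lower_non} collapses the sum $\sum_a\nl_{r-1}(D_a f_{\mathrm{inv}})$ into $2^n-1$ equal terms; writing $l_r=2^{n-1}-\nl_r(f_{\mathrm{inv}})$ then yields exactly the recursion $l_r=\sqrt{(2^n-1)(l_{r-1}+1)+2^{n-2}}$ of Proposition \ref{nlr_lower_bound}.

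With the recursion in hand, I would solve it by induction on $r$, starting from the base value $l_3=2^{7n/8-1/4}$ furnished by Proposition \ref{nl3_lower_bound} (note $\tfrac{7}{8}n-\tfrac14=(1-2^{-3})n-2^{-(3-1)}$). The dominant balance is $l_r^2\approx 2^n l_{r-1}$, so the leading exponent satisfies $\beta_r=\tfrac12(1+\beta_{r-1})$ with closed form $\beta_r=1-2^{-r}$, while the additive constant in the exponent halves at each step, reproducing $-2^{-(r-1)}$ precisely as seeded by the base case. To make this rigorous and to absorb the genuinely lower-order corrections (the $+1$, the $2^n-1$ versus $2^n$, and the $+2^{n-2}$), I would apply Lemma \ref{expansion_formula_multiple} to split the square root and bound each subleading term by $O(2^{n/2})$; the remaining small cases $r=1,2$ follow from the known low-order nonlinearity of the inverse function and are absorbed into the same error term.

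The main obstacle is the recursive step, i.e.\ the self-referential inequality $\nl_{r-1}(D_a f_{\mathrm{inv}})\ge 2\,\nl_{r-1}(f_{\mathrm{inv}})-2^{n-1}-2$. Its delicacy lies in converting the $2$-to-$1$ linear composition $f_{\mathrm{inv}}\circ\phi_a$ into a clean comparison with $\nl_{r-1}(f_{\mathrm{inv}})$ on the full space: one must account both for the dimension reduction induced by $\phi_a$ (after symmetrization the approximation effectively lives on the image hyperplane) and for the few exceptional points where the identity breaks, namely $x\in\{0,a\}$ arising from the convention $0^{2^n-2}=0$. These two effects are exactly what contribute the additive constants $-2^{n-1}-2$ that make the recursion self-improving rather than merely bounding $\nl_{r-1}$ in terms of itself. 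Once this inequality is secured, the explicit base case and the recursion-solving are comparatively routine.
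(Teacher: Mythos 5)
First, a framing point: the paper does not prove this proposition at all --- it is imported verbatim from Carlet \cite[p.~1271]{Car08} --- so your proposal has to be judged as a reconstruction of Carlet's argument. Your skeleton matches his route and most of it is correct: the identity $D_af_{\mathrm{inv}}(x)=f_{\mathrm{inv}}(\phi_a(x))$ for $x\notin\{0,a\}$, with $\phi_a(x)=a^{-1}x^2+x$ an $\mathbb{F}_2$-linear $2$-to-$1$ map onto a hyperplane $H$, is right; the inequality you name, $\nl_{r-1}(D_af_{\mathrm{inv}})\ge 2\,\nl_{r-1}(f_{\mathrm{inv}})-2^{n-1}-2$, is exactly the one Carlet uses; and your arithmetic from that inequality to the recursion checks out: writing $\nl_{r-1}(f_{\mathrm{inv}})\ge 2^{n-1}-l_{r-1}$ and substituting $\nl_{r-1}(D_af_{\mathrm{inv}})\ge 2^{n-1}-2l_{r-1}-2$ into Proposition~\ref{lower_non} gives $2^{2n}-2(2^n-1)(2^{n-1}-2l_{r-1}-2)=2^n+4(2^n-1)(l_{r-1}+1)$, hence $l_r=\sqrt{(2^n-1)(l_{r-1}+1)+2^{n-2}}$. (Your third paragraph, solving the recursion asymptotically, is not part of this proposition; that is Lemma~\ref{l_r_bound} and Theorem~\ref{thm:inverse_lb}, which the paper does prove, so it is harmless surplus here.)

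The genuine gap is that the self-referential inequality --- which is the entire mathematical content of the proposition --- is asserted but never proved, and the mechanism you hint at would fail if executed as described. Sectioning the double cover gives, from an optimal $h$ of degree $\le r-1$ for $D_af_{\mathrm{inv}}$, two functions $g_1,g_2$ of degree $\le r-1$ on $H$ (pull $h$ back along the two affine sections of $\phi_a$) with $d(f_{\mathrm{inv}}|_H,g_1)+d(f_{\mathrm{inv}}|_H,g_2)\le \nl_{r-1}(D_af_{\mathrm{inv}})+2$; so far so good, and the $+2$ from $x\in\{0,a\}$ is as you say. But your sketch then passes from the hyperplane back to $\mathbb{F}_{2^n}$ by saying the approximation ``effectively lives on the image hyperplane'': extending a hyperplane approximant arbitrarily to the complementary coset costs $2^{n-1}$ errors per extension, which yields only $\nl_{r-1}(D_af_{\mathrm{inv}})\ge 2\,\nl_{r-1}(f_{\mathrm{inv}})-2^{n}-2$ --- vacuous, since $\nl_{r-1}\le 2^{n-1}$. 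The missing idea is that each $g_i$ extends to all of $\mathbb{F}_{2^n}$ with degree $\le r-1$ in the form $g_i\circ p+\ell\cdot c_i$, where $p$ is a linear projection onto $H$, $\ell$ is affine vanishing exactly on $H$, and $c_i\in\{0,1\}$ is chosen optimally, so that the coset costs only $2^{n-2}$ errors; summing $\nl_{r-1}(f_{\mathrm{inv}})\le d(f_{\mathrm{inv}}|_H,g_i)+2^{n-2}$ over $i=1,2$ gives precisely the constant $-2^{n-1}-2$, and hence the $2^{n-2}$ and the $+1$ inside the recursion. Without this (or an equivalent) step those constants are unjustified. Two smaller points: the extension trick uses a degree-$1$ correction, so it covers $r\ge 2$ and the case $r=1$ should be handled separately (e.g., by the known first-order nonlinearity of the inverse function, which also seeds the recursion); and ``symmetrization'' over $\mathbb{F}_2$ should be replaced by the two-sections decomposition above, since averaging $h$ and $h(\cdot+a)$ has no meaning in characteristic $2$.
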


\begin{lemma}\label{l_r_bound}
For any $3\le r\le n-3$, we have
\begin{eqnarray}\label{eq_lr}
     l_r \le 2^{(1-2^{-r})n-2^{-(r-1)}}+3\cdot 2^{(\frac{1}{2}-2^{-r})n}.
 \end{eqnarray} 
\end{lemma}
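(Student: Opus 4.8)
The plan is to prove the bound by induction on $r$, using Carlet's recursion (Proposition \ref{nlr_lower_bound}) $l_r=\sqrt{(2^n-1)(l_{r-1}+1)+2^{n-2}}$ to pass from $r-1$ to $r$, with the base case $r=3$ supplied by the explicit expression in Proposition \ref{nl3_lower_bound}. For brevity I write $U_r:=2^{(1-2^{-r})n-2^{-(r-1)}}+3\cdot 2^{(\frac12-2^{-r})n}$ for the claimed upper bound, so that the goal is $l_r\le U_r$; I also split $U_r=A_r+B_r$ with main term $A_r=2^{(1-2^{-r})n-2^{-(r-1)}}$ and correction $B_r=3\cdot 2^{(\frac12-2^{-r})n}=O(2^{n/2})$.

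For the base case, Proposition \ref{nl3_lower_bound} gives $l_3=\frac12\sqrt{(2^n-1)\sqrt{S}+2^n}$ with $S=2^{\frac{3n}{2}+3}+3\cdot 2^{n+1}-2^{\frac n2+3}+16$. First I would drop the negative term $-2^{n/2+3}$ and bound $\sqrt S\le 2^{\frac{3n}{4}+\frac32}\sqrt{1+O(2^{-n/2})}$; feeding this into the outer radical and using $2^n-1<2^n$ yields $l_3\le \tfrac12\,2^{\frac{7n}{8}+\frac34}\sqrt{1+O(2^{-n/2})}=2^{\frac{7n}{8}-\frac14}+O(2^{\frac{3n}{8}})$. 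Since $U_3=2^{\frac{7n}{8}-\frac14}+3\cdot 2^{\frac{3n}{8}}$, it remains to check that the $3\cdot 2^{3n/8}$ slack dominates the $O(2^{3n/8})$ error, a finite calculation (it is where the constant $3$ is forced).

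For the induction step ($4\le r\le n-3$), assuming $l_{r-1}\le U_{r-1}$ and using monotonicity of the recursion, it suffices to show $(2^n-1)(U_{r-1}+1)+2^{n-2}\le U_r^2$; squaring is valid since both sides are nonnegative. Writing $U_{r-1}=A_{r-1}+B_{r-1}$ and $U_r=A_r+B_r$, the crucial identity is $2^nA_{r-1}=A_r^2$ (a direct check of exponents using $2\cdot 2^{-r}=2^{-(r-1)}$ and $2\cdot 2^{-(r-1)}=2^{-(r-2)}$), so the two leading terms of order $2^{(2-2^{-(r-1)})n}$ cancel exactly. The residual inequality then reduces to showing that the cross term $2A_rB_r$ absorbs the next-order term $2^nB_{r-1}$: a short exponent computation gives $2A_rB_r=2^{1-2^{-(r-1)}}\cdot 2^nB_{r-1}$, and $2^{1-2^{-(r-1)}}\ge 2^{3/4}$ for $r\ge 3$, leaving slack of order $2^{(\frac32-2^{-(r-1)})n}\ge 2^{5n/4}$. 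Finally this slack, together with $B_r^2>0$, easily covers the remaining $O(2^n)$ terms $2^n+2^{n-2}-1-A_{r-1}-B_{r-1}$, since $2^{5n/4}$ dominates $2^{n+1}$ for all $n\ge 5$. Collecting the estimates gives $U_r^2-\bigl[(2^n-1)(U_{r-1}+1)+2^{n-2}\bigr]>0$, completing the step.

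I expect the main obstacle to be the bookkeeping in the squared induction step: one must track three terms of differing orders ($2^{(2-2^{-(r-1)})n}$, $2^{(3/2-2^{-(r-1)})n}$, and $O(2^n)$) and verify that the single cross term $2A_rB_r$ is simultaneously large enough to beat both the $2^nB_{r-1}$ correction and the full $O(2^n)$ remainder, uniformly over the whole range $3\le r\le n-3$. The base case is otherwise routine but must be done with enough care that the additive constant $3$ (rather than a smaller constant) is justified, since that constant is exactly what propagates through the recursion and ultimately yields the $O(2^{n/2})$ error term in Theorem \ref{finv_nlr_lowerbound}.
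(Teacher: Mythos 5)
Your proposal is correct and follows essentially the same route as the paper: induction on $r$ via Carlet's recursion, with the base case $r=3$ extracted from Proposition \ref{nl3_lower_bound} and the step driven by the exact cancellation $2^nA_{r-1}=A_r^2$. The only cosmetic difference is that you verify the induction step by squaring the claimed bound directly, whereas the paper packages the same computation as an application of Lemma \ref{expansion_formula_multiple} (which is itself proved by squaring), so the two arguments coincide term by term.
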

\begin{proof}
We prove by induction on $r$. For the base case, we prove \eqref{eq_lr} for $r=3$ by \eqref{finv_l_3}.
By Proposition \ref{nl3_lower_bound}, we have 
\begin{eqnarray}
    l_3&=& \frac{1}{2}\sqrt{(2^n-1)\sqrt{2^{\frac{3n}{2}+3}+3\cdot 2^{n+1}-2^{\frac{n}{2}+3}+16}+2^n}\nonumber \\
    & \le &  \frac{1}{2}\sqrt{2^n\cdot (2^{\frac{3n}{4}+\frac{3}{2}}+3\cdot 2^{-\frac{3}{2}}\cdot 2^{\frac{n}{4}})+2^n}\nonumber\\
    &\le & \frac{1}{2}(2^{\frac{7n}{8}+\frac{3}{4}}+3\cdot 2^{\frac{3n}{8}-\frac{13}{4}}+2^{\frac{n}{8}-\frac{7}{4}})\nonumber \\
    &\le & 2^{\frac{7n}{8}-\frac{1}{4}}+3\cdot 2^{\frac{3n}{8}} \label{finv_l_3},
\end{eqnarray}
as desired, where the second and third steps follow from Lemma \ref{expansion_formula_multiple}.

For the induction step, assuming inequality \eqref{eq_lr} holds for $r$, we prove it for $r+1$, where $r = 3, 4, \ldots, n-4$. We have
\begin{eqnarray*}
    l_{r+1}&=& \sqrt{(2^n-1)(l_r+1)+2^{n-2}}
        \\ &\le & \sqrt{2^n (l_r+1)+2^{n-2}}\\
        & \le & \sqrt{2^n(2^{(1-2^{-r})n-2^{-(r-1)}}+3\cdot 2^{(\frac{1}{2}-2^{-r})n} +1 )+2^{n-2} } \\
        &\le & 2^{(1-2^{-(r+1)})n-2^{-r}}+3\cdot 2^{(\frac{1}{2}-2^{-(r+1)})n-1+2^{-r}} + 5\cdot 2^{2^{-(r+1)}n-3+2^{-r}}\\
        &\le & 2^{(1-2^{-(r+1)})n-2^{-r}}+3\cdot 2^{(\frac{1}{2}-2^{-(r+1)})n},
\end{eqnarray*}
where the penultimate step is by Lemma \ref{expansion_formula_multiple}. We have completed the induction step, and thus the proof.
\end{proof}

Now we are ready to prove the lower bound on the $r$-th order nonlinearity of functions $\tr_n(x^{2^n-2})$. 

\begin{proof} (of Theorem \ref{thm:inverse_lb})
By Proposition \ref{nlr_lower_bound} and Lemma \ref{l_r_bound}, we have 
\begin{eqnarray*} 
        \nl_r(f_{\mathrm{inv}})  &= & 2^{n-1}-l_r
        \\ &\ge & 2^{n-1}- 2^{(1-2^{-r})n-2^{-(r-1)}}- 3\cdot 2^{(\frac{1}{2}-2^{-r})n}
        \\ &\ge & 2^{n-1}-2^{(1-2^{-r})n-2^{-(r-1)}}- O(2^{\frac{n}{2}}).
\end{eqnarray*}
\end{proof}

\end{document}